\newtheorem{algorithm}{Algorithm}
\newtheorem{assumption}{Assumption}
\newtheorem{claim}{Claim}
\newtheorem{corollary}[claim]{Corollary}
\newtheorem{lemma}{Lemma}
\theoremstyle{definition}
\newtheorem{theorem}{Theorem}
\newtheorem{definition}{Definition}
\newtheorem{example}{Example}
\begin{document}

\title{\textbf{Spread of Misinformation in Social Networks}\thanks{We would like to thank Pablo Parrilo and Devavrat
Shah for useful comments and discussions.}}
\author{\textbf{Daron Acemoglu,}\thanks{%
Department of Economics, Massachusetts Institute of Technology} \textbf{%
Asuman Ozdaglar,}\thanks{%
Department of Electrical Engineering and Computer Science, Massachusetts
Institute of Technology} \textbf{and Ali ParandehGheibi}\thanks{%
Department of Electrical Engineering and Computer Science, Massachusetts
Institute of Technology}}

\markboth{LIDS Report 2812}{LIDS Report 2812}

\maketitle

\pagestyle{myheadings}

\begin{center}
\textbf{Abstract}
\end{center}

We provide a model to investigate the tension between information
aggregation and spread of misinformation in large societies
(conceptualized as networks of agents communicating with each
other). Each individual holds a belief represented by a scalar.
Individuals meet pairwise and exchange information, which is
modeled as both individuals adopting the average of their
pre-meeting beliefs. When all individuals engage in this type of
information exchange, the society will be able to effectively
aggregate the initial information held by all individuals. There
is also the possibility of misinformation, however, because some
of the individuals are \textquotedblleft
forceful,\textquotedblright\ meaning that they influence the
beliefs of (some) of the other individuals they meet, but do not
change their own opinion. The paper characterizes how the presence
of forceful agents interferes with information aggregation. Under
the assumption that even forceful agents obtain some information
(however infrequent) from some others (and additional weak
regularity conditions), we first show that beliefs in this class
of societies converge to a consensus among all individuals. This
consensus value is a random variable, however, and we characterize
its behavior. Our main results quantify the extent of
misinformation in the society by either providing bounds or exact
results (in some special cases) on how far the consensus value can
be from the benchmark without forceful agents (where there is
efficient information aggregation). The worst outcomes obtain when
there are several forceful agents and forceful agents themselves
update their beliefs only on the basis of information they obtain
from individuals most likely to have received their own
information previously.

\vskip 1pc

\textbf{Keywords:} information aggregation, learning,
misinformation, social networks.

\vskip .5pc

\textbf{JEL Classification:} C72, D83.

\setcounter{page}{1}

\begin{center}
\bigskip
\end{center}


\setcounter{page}{1}

\section{Introduction}

Individuals form beliefs on various economic, political and social
variables (``state") based on information they receive from others,
including friends, neighbors and coworkers as well as local leaders,
news sources and political actors. A key tradeoff faced by any
society is whether this process of information exchange will lead to
the formation of more accurate beliefs or to certain systematic
biases and spread of misinformation. A famous idea going back to
Condorcet's Jury Theorem (now often emphasized in the context of
ideas related to \textquotedblleft wisdom of the
crowds\textquotedblright ) encapsulates the idea that exchange of
dispersed information will enable socially beneficial aggregation of
information. However, as several examples ranging from the effects
of the Swift Boat ads during the 2004 presidential campaign to the
beliefs in the Middle East that 9/11 was a US or Israeli conspiracy
illustrate, in practice social groups are often swayed by misleading
ads, media outlets, and political leaders, and hold on to incorrect
and inaccurate beliefs.

A central question for social science is to understand the conditions under
which exchange of information will lead to the spread of misinformation
instead of aggregation of dispersed information. In this paper, we take a
first step towards developing and analyzing a framework for providing
answers to this question. While the issue of misinformation can be studied
using Bayesian models, non-Bayesian models appear to provide a more natural
starting point.\footnote{%
In particular, misinformation can arise in a Bayesian model if an agent
(receiver) is unsure of the type of another agent (sender) providing her
with information and the sender happens to be of a type intending to mislead
the receiver. Nevertheless, this type of misinformation will be limited
since if the probability that the sender is of the misleading type is high,
the receiver will not change her beliefs much on the basis of the sender's
communication.} Our modeling strategy is therefore to use a non-Bayesian
model, which however is reminiscent of a Bayesian model in the absence of
\textquotedblleft forceful\textquotedblright\ agents (who are either trying
to mislead or influence others or are, for various rational or irrational
reasons, not interested in updating their opinions).

We consider a society envisaged as a social network of $n$ agents,
communicating and exchanging information. Specifically, each agent is
interested in learning some underlying state $\theta \in \mathbb{R}$ and
receives a signal $x_{i}(0)\in \mathbb{R}$ in the beginning. We assume that $%
\theta =1/n\sum_{i=1}^{n}x_{i}(0)$, so that information about the relevant
state is dispersed and this information can be easily aggregated if the
agents can communicate in a centralized or decentralized fashion.

Information exchange between agents takes place as follows: Each individual
is \textquotedblleft recognized\textquotedblright\ according to a Poisson
process in continuous time and conditional on this event, meets one of the
individuals in her \emph{social neighborhood} according to a pre-specified
stochastic process. We think of this stochastic process as representing an
underlying \emph{social network }(for example, friendships, information
networks, etc.). Following this meeting, there is a potential exchange of
information between the two individuals, affecting the beliefs of one or
both agents. We distinguish between two types of individuals: \textit{regular%
} or \textit{forceful}. When two regular agents meet, they update their
beliefs to be equal to the average of their pre-meeting beliefs. This
structure, tough non-Bayesian, has a simple and appealing interpretation and
ensures the convergence of beliefs to the underlying state $\theta $ when
the society consists only of regular agents.\footnote{%
The appealing interpretation is that this type of averaging would be optimal
if both agents had beliefs drawn from a normal distribution with mean equal
to the underlying state and equal precision. This interpretation is
discussed in detail in De Marzo, Vayanos, and Zwiebel \cite{demarzo} in a
related context.} In contrast, when an agent meets a forceful agent, this
may result in the forceful agent \textquotedblleft
influencing\textquotedblright\ his beliefs so that this individual inherits
the forceful agent's belief except for an $\epsilon $ weight on his
pre-meeting belief.\footnote{%
When $\epsilon =1/2$, then the individual treats the forceful agent just as
any other regular agent (is not influenced by him over and above the
information exchange) and the only difference from the interaction between
two regular agents is that the forceful agent himself does not update his
beliefs. All of our analysis is conducted for arbitrary $\epsilon $, so
whether forceful agents are also \textquotedblleft
influential\textquotedblright\ in pairwise meetings is not important for any
of our findings.} Our modeling of forceful agents is sufficiently general to
nest both individuals (or media outlets) that purposefully wish to influence
others with their opinion or individuals who, for various reasons, may have
more influence with some subset of the population.\footnote{%
What we do not allow are individuals who know the underlying state and try
to convince others of some systematic bias relative to the underlying state,
though the model could be modified to fit this possibility as well.} A key
assumption of our analysis is that even forceful agents engage in some
updating of their beliefs (even if infrequently) as a result of exchange of
information with their own social neighborhoods. This assumption captures
the intuitive notion that \textquotedblleft no man is an
island\textquotedblright\ and thus receives some nontrivial input from the
social context in which he or she is situated.\footnote{%
When there are several forceful agents and none of them ever change
their opinion, then it is straightforward to see that opinions in
this society will never settle into a \textquotedblleft
stationary\textquotedblright\ distribution. While this case is also
interesting to study, it is significantly more difficult to analyze
and requires a different mathematical approach.} The influence
pattern of social agents superimposed over the social network can be
described by directed links, referred to as \textit{forceful links},
and creates a richer stochastic process, representing the evolution
of beliefs in the society. Both with and without forceful agents,
the evolution of beliefs can be represented by a Markov chain and
our analysis will exploit this connection. We will frequently
distinguish the Markov chain representing the evolution of beliefs
and the Markov chain induced by the underlying social network (i.e.,
just corresponding to the communication structure in the society,
without taking into account the influence pattern) and properties of
both will play a central role in our results.

Our objective is to characterize the evolution of beliefs and quantify the
effect of forceful agents on public opinion in the context of this model.
Our first result is that, despite the presence of forceful agents, the
opinion of all agents in this social network converges to a common, tough
stochastic, value under weak regularity conditions. More formally, each
agent's opinion converges to a value given by $\pi ^{\prime }x(0)$, where $%
x(0)$ is the vector of initial beliefs and $\pi $ is a random vector. Our
measure of spread of misinformation in the society will be $\bar{\pi}%
^{\prime }x(0)-\theta =\sum_{i=1}^{n}(\bar{\pi}_{i}-1/n)x_{i}(0)$, where $%
\bar{\pi}$ is the expected value of $\pi $ and $\bar{\pi}_{i}$ denotes its $%
i $th component. The greater is this gap, the greater is the potential for
misinformation in this society. Moreover, this formula also makes it clear
that $\bar{\pi}_{i}-1/n$ gives the \emph{excess influence} of agent $i$. Our
strategy will be to develop bounds on the spread of misinformation in the
society (as defined above) and on the excess influence of each agent for
general social networks and also provide exact results for some special
networks.

We provide three types of results. First, using tools from matrix
perturbation
theory,\footnote{%
In particular, we decompose the transition matrix of the Markov chain into a
doubly stochastic matrix, representing the underlying social network, and a
remainder matrix, representing a directed influence graph. Despite the term
\textquotedblleft perturbation,\textquotedblright\ this remainder matrix
need not be \textquotedblleft small\textquotedblright\ in any sense.} we
provide global and general upper bounds on the extent of misinformation as a
function of the properties of the underlying social network. In particular,
the bounds relate to the \emph{spectral gap} and the mixing properties of
the Markov chain induced by the social network. Recall that a Markov chain
is \textit{fast-mixing} if it converges rapidly to its stationary
distribution. It will do so when it has a large spectral gap, or loosely
speaking, when it is highly connected and possesses many potential paths of
communication between any pair of agents. Intuitively, societies represented
by fast-mixing Markov chains have more limited room for misinformation
because forceful agents themselves are influenced by the weighted opinion of
the rest of the society before they can spread their own (potentially
extreme) views. A corollary of these results is that for a special class of
societies, corresponding to \textquotedblleft expander
graphs\textquotedblright , misinformation disappears in large societies
provided that there is a finite number of forceful agents and no forceful
agent has global impact.\footnote{%
Expander graphs are graphs whose spectral gap remains bounded away
from zero as the number of nodes tends to infinity. Several networks
related to the Internet correspond to expander graphs; see, for
example, Mihail, Papadimitriou, and Saberi \cite{MPS}.} In contrast,
the extent of misinformation can be substantial in slow-mixing
Markov chains, also for an intuitive reason. Societies represented
by such Markov chains would have a high degree of partitioning
(multiple clusters with weak communication in between), so that
forceful agents receive their information from others who previously
were influenced by them, ensuring that their potentially extreme
opinions are never moderated.\footnote{%
This result is related to Golub and Jackson \cite{golub-two}, where they
relate learning to homophily properties of the social network.}

Our second set of results exploit the local structure of the
social network in the neighborhood of each forceful agent in order
to provide a tighter characterization of the extent of
misinformation and excess influence. Fast-mixing and spectral gap
properties are global (and refer to the properties of the overall
social network representing meeting and communication patterns
among all agents). As such, they may reflect properties of a
social network far from where the forceful agents are located. If
so, our first set of bounds will not be tight. To redress this
problem, we develop an alternative analysis using mean (first)
passage times of the Markov chain and show how it is not only the
global properties of the social network, but also the local social
context in which forceful agents are situated that matter. For
example, in a social network with a single dense cluster and
several non-clustered pockets, it matters greatly whether forceful
links are located inside the cluster or not. We illustrate this
result sharply by first focusing on graphs with \emph{forceful
essential edges}, that is, graphs representing societies in which
a single forceful link connects two otherwise disconnected
components. This, loosely speaking, represents a situation in
which a forceful agent, for example a media outlet or a political
party leader, obtains all of its (or his or her) information from
a small group of individuals and influences the rest of the
society. In this context, we establish the surprising result that
all members of the small group will have the same excess
influence, even though some of them may have much weaker links or
no links to the forceful agent. This result is an implication of
the society having a (single) forceful essential edge and reflects
the fact that the information among the small group of individuals
who are the source of information of the forceful agent aggregates
rapidly and thus it is\ the average of their beliefs that matter.
We then generalize these results and intuitions to more general
graphs using the notion of \emph{information bottlenecks.}

Our third set of results are more technical in nature, and provide new
conceptual tools and algorithms for characterizing the role of information
bottlenecks. In particular, we introduce the concept of \emph{relative cuts}
and present several new results related to relative cuts and how these
relate to mean first passage times. For our purposes, these new results are
useful because they enable us to quantify the extent of local clustering
around forceful agents. Using the notion of relative cuts, we develop new
algorithms based on graph clustering that enable us to provide improved
bounds on the extent of misinformation in beliefs as a function of
information bottlenecks in the social network.

Our paper is related to a large and growing learning literature. Much of
this literature focuses on various Bayesian models of observational or
communication-based learning; for example Bikchandani, Hirshleifer and Welch
\cite{BHW}, Banerjee \cite{abhijit}, Smith and Sorensen \cite{SS}, \cite%
{SSnew}, Banerjee and Fudenberg \cite{ban-fud}, Bala and Goyal \cite%
{bala-goyal}, \cite{bala-goyal2}, Gale and Kariv \cite{gale-kariv}, and
Celen and Kariv \cite{celen-kariv}, \cite{celen-kariv2}. These papers
develop models of social learning either using a Bayesian perspective or
exploiting some plausible rule-of-thumb behavior. Acemoglu, Dahleh, Lobel
and Ozdaglar \cite{ilan} provide an analysis of Bayesian learning over
general social networks. Our paper is most closely related to DeGroot \cite%
{degroot}, DeMarzo, Vayanos and Zwiebel \cite{demarzo} and Golub and Jackson
\cite{golub}, \cite{golub-two}, who also consider non-Bayesian learning over
a social network represented by a connected graph.\footnote{%
An important distinction is that in contrast to the \textquotedblleft
averaging\textquotedblright\ model used in these papers, we have a model of
pairwise interactions. We believe that this model has a more attractive
economic interpretation, since it does not have the feature that neighbors'
information will be averaged at each date (even though the same information
was exchanged the previous period). In contrast, in the pairwise meeting
model (without forceful agents), if a pair meets two periods in a row, in
the second meeting there is no information to exchange and no change in
beliefs takes place.} None of the papers mentioned above consider the issue
of the spread of misinformation (or the tension between aggregation of
information and spread of misinformation), though there are close parallels
between Golub and Jackson's and our characterizations of influence.\footnote{%
In particular, Golub and Jackson \cite{golub-two} characterize the
effects of homophily on learning and influence in two different
models of learning in terms of mixing properties and the spectral
gap of graphs. In one of their learning models, which builds on
DeGroot \cite{degroot}, DeMarzo, Vayanos and Zwiebel \cite{demarzo}
and Golub and Jackson \cite{golub}, homophily has negative effects
on learning (and speed of learning) for reasons related to our
finding that in slow-mixing graphs, misinformation can spread more.}
In addition to our focus, the methods of analysis here, which
develop bounds on the extent of misinformation and provide exact
characterization of excess influence in certain classes of social
networks, are entirely new in the literature and also rely on the
developments of new results in the analysis of Markov chains.

Our work is also related to other work in the economics of
communication, in particular, to cheap-talk models based on Crawford
and Sobel \cite{crawford-sobel} (see also Farrell and Gibbons
\cite{farrell} and Sobel \cite{sobel}), and some recent learning
papers incorporating cheap-talk games into a network structure (see
Ambrus and Takahashi \cite{ambrus}, Hagenbach and
Koessler \cite{hagenbach}, and Galeotti, Ghiglino and Squintani \cite%
{galeotti}).

In addition to the papers on learning mentioned above, our paper is related
to work on consensus, which is motivated by different problems, but
typically leads to a similar mathematical formulation (Tsitsiklis \cite%
{johnthes}, Tsitsiklis, Bertsekas and Athans \cite{distasyn}, Jadbabaie, Lin
and Morse \cite{ali}, Olfati-Saber and Murray \cite{reza}, Olshevsky and
Tsitsiklis \cite{alexlong}, Nedi\'c and Ozdaglar \cite{distpaper}). In
consensus problems, the focus is on whether the beliefs or the values held
by different units (which might correspond to individuals, sensors or
distributed processors) converge to a common value. Our analysis here does
not only focus on consensus, but also whether the consensus happens around
the true value of the underlying state. There are also no parallels in this
literature to our bounds on misinformation and characterization results.

The rest of this paper is organized as follows: In Section \ref{model}, we
introduce our model of interaction between the agents and describe the
resulting evolution of individual beliefs. We also state our assumptions on
connectivity and information exchange between the agents. Section \ref%
{convergence} presents our main convergence result on the evolution of agent
beliefs over time. In Section \ref{global-misinfo}, we provide bounds on the
extent of misinformation as a function of the global network parameters.
Section \ref{local-misinfo} focuses on the effects of location of forceful
links on the spread of misinformation and provides bounds as a function of
the local connectivity and location of forceful agents in the network.
Section \ref{conclusions} contains our concluding remarks.

\vskip 1pc

\noindent \textbf{Notation and Terminology:} A vector is viewed as a column
vector, unless clearly stated otherwise. We denote by $x_i$ or $[x]_i$ the $%
i^{th}$ component of a vector $x$. When $x_i\ge 0$ for all components $i$ of
a vector $x$, we write $x\ge 0$. For a matrix $A$, we write $A_{ij}$ or $%
[A]_{ij}$ to denote the matrix entry in the $i^{th}$ row and $j^{th}$
column. We write $x^{\prime }$ to denote the transpose of a vector $x$. The
scalar product of two vectors $x,y\in\mathbb{R}^m$ is denoted by $x^{\prime
}y$. We use $\|x\|_2$ to denote the standard Euclidean norm, $\|x\|_2=\sqrt{%
x^{\prime }x}$. We write $\|x\|_\infty$ to denote the max norm, $%
\|x\|_\infty = \max_{1\le i\le m}|x_i|$. We use $e_i$ to denote the vector
with $i^{th}$ entry equal to 1 and all other entries equal to 0. We denote
by $e$ the vector with all entries equal to 1.

A vector $a$ is said to be a \textit{stochastic vector} when $a_i\ge0$ for
all $i$ and $\sum_i a_i =1$. A square matrix $A$ is said to be a \textit{%
(row) stochastic matrix} when each row of $A$ is a stochastic vector. The
transpose of a matrix $A$ is denoted by $A^{\prime }$. A square matrix $A$
is said to be a \textit{doubly stochastic matrix} when both $A$ and $%
A^{\prime }$ are stochastic matrices.

\section{Belief Evolution}

\label{model}

\subsection{Description of the Environment}

We consider a set $\mathcal{N}=\{1,\ldots,n\}$ of agents interacting over a
social network. Each agent $i$ starts with an initial belief about an
underlying state, which we denote by $x_i(0)\in \mathbb{R}$. Agents exchange
information with their neighbors and update their beliefs. We assume that
there are two types of agents; \textit{regular and forceful}. Regular agents
exchange information with their neighbors (when they meet). In contrast,
forceful agents influence others disproportionately.

We use an asynchronous continuous-time model to represent meetings between
agents (also studied in Boyd \textit{et al.} \cite{boyd} in the context of
communication networks). In particular, we assume that each agent meets
(communicates with) other agents at instances defined by a rate one Poisson
process independent of other agents. This implies that the meeting instances
(over all agents) occur according to a rate $n$ Poisson process at times $%
t_{k}$, $k\geq 1.$ Note that in this model, by convention, at most
one node is active (i.e., is meeting another) at a given time. We
discretize time according to meeting instances (since these are the
relevant instances at which the beliefs change), and refer to the
interval $[t_{k},t_{k+1})$ as the \textit{$k^{th}$ time slot}. On
average, there are $n$ meeting instances per unit of absolute time
(see Boyd \textit{et al.} \cite{boyd} for a precise relation between
these instances and absolute time). Suppose that at time (slot) $k$,
agent $i$ is chosen to meet another agent (probability $1/n$). In
this case, agent $i$ will meet agent $j\in \mathcal{N}$ with probability $%
p_{ij}$. 
Following a meeting between $i$ and $j$, there is a potential exchange of
information. Throughout, we assume that all events that happen in a meeting
are \textit{independent of any other event that happened in the past}. Let $%
x_{i}(k)$ denote the belief of agent $i$ about the underlying state
at time $k$. The agents update their beliefs according to one of the
following three possibilities.

\begin{itemize}
\item[(i)] Agents $i$ and $j$ reach pairwise consensus and the beliefs are
updated according to
\begin{equation*}
x_{i}(k+1)=x_{j}(k+1)=\frac{x_{i}(k)+x_{j}(k)}{2}.
\end{equation*}%
We denote the conditional probability of this event (conditional on $i$
meeting $j$) as $\beta _{ij}$.%

\item[(ii)] Agent $j$ influences agent $i$, in which case for some $\epsilon
\in (0,1/2]$, beliefs change according to
\begin{equation}
x_{i}(k+1)=\epsilon x_{i}(k)+(1-\epsilon )x_{j}(k),\quad \hbox{and}\quad
x_{j}(k+1)=x_{j}(k).  \label{selfweight}
\end{equation}%
In this case beliefs of agent $j$ do not change.\footnote{%
We could allow the self belief weight $\epsilon $ to be different for each
agent $i$. This generality does not change the results or the economic
intuitions, so for notational convenience, we assume this weight to be the
same across all agents.} We denote the conditional probability of this event
as $\alpha _{ij}$, and refer to it as the \textit{influence probability}.
Note that we allow $\epsilon =1/2$, so that agent $i$ may be treating agent $%
j$ just as a regular agent, except that agent $j$ himself does not change
his beliefs.

\item[(iii)] Agents $i$ and $j$ do not agree and stick to their beliefs,
i.e.,
\begin{equation*}
x_{i}(k+1)=x_{i}(k),\quad \hbox{and}\quad x_{j}(k+1)=x_{j}(k).
\end{equation*}%
This event has probability $\gamma _{ij}=1-\beta _{ij}-\alpha _{ij}$.
\end{itemize}

Any agent $j$ for whom the influence probability $\alpha _{ij}>0$ for some $%
i\in \mathcal{N}$ is referred to as a \textit{forceful agent}.
Moreover, the directed link $(j,i)$ is referred to as a
\textit{forceful link}.\footnote{We refer to directed links/edges as
links and undirected ones as edges.}

As discussed in the introduction, we can interpret forceful agents in
multiple different ways. First, forceful agents may correspond to community
leaders or news media, will have a disproportionate effect on the beliefs of
their followers. In such cases, it is natural to consider $\epsilon$ small
and the leaders or media not updating their own beliefs as a result of
others listening to their opinion. Second, forceful agents may be
indistinguishable from regular agents, and thus regular agents engage in
what they think is information exchange, but forceful agents, because of
stubbornness or some other motive, do not incorporate the information of
these agents in their own beliefs. In this case, it may be natural to think
of $\epsilon $ as equal to $1/2$. The results that follow remain valid with
either interpretation.

The influence structure described above will determine the evolution of
beliefs in the society. Below, we will give a more precise separation of
this evolution into two components, one related to the underlying social
network (communication and meeting structure), and the other to influence
patterns.

\subsection{Assumptions}\label{assumptions}

We next state our assumptions on the belief evolution model among
the agents. We have the following assumption on the agent meeting
probabilities $p_{ij}$.

\begin{assumption}
(Meeting Probabilities) \emph{
\begin{itemize}
\item[(a)] For all $i$, the probabilities $p_{ii}$ are equal to 0.
\item[(b)] For all $i$, the probabilities $p_{ij}$ are nonnegative
for all $j$ and they sum to 1 over $j$, i.e.,
\[p_{ij}\ge 0\quad \hbox{for all } i,j,\qquad \sum_{j=1}^n p_{ij} = 1\quad \hbox{for all }i.\]
\end{itemize}}\label{comprob}
\end{assumption}

Assumption \ref{comprob}(a) imposes that \textquotedblleft
self-communication\textquotedblright\ is not a possibility, though this is
just a convention, since, as stated above, we allow disagreement among
agents, i.e., $\gamma _{ij}$ can be positive. We let $P$ denote the matrix
with entries $p_{ij}$. Under Assumption \ref{comprob}(b), the matrix $P$ is
a \emph{stochastic matrix.}\footnote{%
That is, its row sums are equal to 1.}

We next impose a connectivity assumption on the social network. This
assumption is stated in terms of the directed graph $(\mathcal{N},\mathcal{E}%
)$, where $\mathcal{E}$ is the set of directed links induced by the
positive meeting probabilities $p_{ij}$, i.e.,
\begin{equation}
\mathcal{E}=\{(i,j)\ |\ p_{ij}>0\}.  \label{edges}
\end{equation}

\begin{assumption}
(Connectivity) \emph{The graph $(\mathcal{N},\mathcal{E})$ is
strongly connected, i.e., for all $i,j \in \mathcal{N}$, there
exists a directed path connecting $i$ to $j$ with links in the set
$\mathcal{E}$.}\label{connect}
\end{assumption}

Assumption \ref{connect} ensures that every agent \textquotedblleft
communicates\textquotedblright\ with every other agent (possibly through
multiple links). This is not an innocuous assumption, since otherwise the
graph $(\mathcal{N},\mathcal{E})$ (and the society that it represents) would
segment into multiple non-communicating parts. Though not innocuous, this
assumption is also natural for several reasons. First, the evidence suggests
that most subsets of the society are not only connected, but are connected
by means of several links (e.g., Watts \cite{watts} and Jackson \cite%
{Jacksonbook}), and the same seems to be true for indirect linkages via the
Internet. Second, if the society is segmented into multiple
non-communication parts, the insights here would apply, with some
modifications, to each of these parts.

Let us also use $d_{ij}$ to denote the length of the shortest path from $i$
to $j$ and $d$ to denote the \textit{maximum shortest path length} between any $%
i,j\in \mathcal{N}$, i.e.,
\begin{equation}
d=\max_{i,j\in \mathcal{N}}d_{ij}.  \label{maxsp}
\end{equation}
In view of Assumption \ref{connect}, these are all well-defined objects.

Finally, we introduce the following assumption which ensures that there is
positive probability that every agent (even if he is forceful) receives some
information from an agent in his neighborhood.

\begin{assumption}
(Interaction Probabilities) \emph{For all $(i,j)\in \mathcal{E}$, the sum of
the \textit{averaging probability} $\beta_{ij}$ and the \textit{influence
probability} $\alpha_{ij}$ is positive, i.e.,
\begin{equation*}
\beta_{ij}+\alpha_{ij}>0\qquad \hbox{for all } (i,j)\in \mathcal{E}.
\end{equation*}%
} \label{intprob}
\end{assumption}

The connectivity assumption (Assumption \ref{connect}) ensures that
there is a path from any forceful agent to other agents in the
network, implying that for any forceful agent $i$, there is a link
$(i,j)\in \mathcal{E}$ for some $j\in \mathcal{N}$. Then the main
role of Assumption \ref{intprob} is to guarantee that even the
forceful agents at some point get information from
the other agents in the network.\footnote{%
This assumption is stated for all $(i,j)\in \mathcal{E}$, thus a
forceful agent $i$ receives some information from any $j$ in his
\textquotedblleft neighborhood". This is without any loss of
generality, since we can always
set $p_{ij}=0$ for those $j$'s that are in $i$'s neighborhood but from whom $%
i$ never obtains information.} This assumption captures the idea
that \textquotedblleft no man is an island,\textquotedblright\ i.e.,
even the beliefs of forceful agents are affected by the beliefs of
the society. In the absence of this assumption, any society
consisting of several forceful agents may never settle into a
stationary distribution of beliefs. While this is an interesting
situation to investigate, it requires a very different approach.
Since we view the \textquotedblleft no man is an
island\textquotedblright\ feature plausible, we find Assumption
\ref{intprob} a useful starting point.

Throughout the rest of the paper, we assume that Assumptions \ref{comprob}, %
\ref{connect}, and \ref{intprob} hold.

\subsection{Evolution of Beliefs: Social Network and Influence Matrices}

We can express the preceding belief update model compactly as follows. Let $%
x(k)=(x_1(k),\ldots,x_n(k))$ denote the vector of agent beliefs at time $k$.
The agent beliefs are updated according to the relation
\begin{equation}
x(k+1) = W(k)x(k),  \label{beliefupdate}
\end{equation}
where $W(k)$ is a random matrix given by
\begin{equation}
W(k) = \left\{
\begin{array}{cc}
A_{ij} \equiv I-\frac{(e_i-e_j)(e_i-e_j)^{\prime }}{2} & \text{with
probability } p_{ij}\beta_{ij}/n, \\
J_{ij} \equiv I- (1-\epsilon)e_i (e_i-e_j)^{\prime } & \text{with
probability } p_{ij}\alpha_{ij}/n, \\
I & \text{with probability } p_{ij}\gamma_{ij}/n,%
\end{array}%
\right.  \label{updatematrix}
\end{equation}
for all $i,j\in \mathcal{N}$. The preceding belief update model implies that
the matrix $W(k)$ is a stochastic matrix for all $k$, and is independent and
identically distributed over all $k$.

Let us introduce the matrices
\begin{equation}
\Phi (k,s)=W(k)W(k-1)\cdots W(s+1)W(s)\qquad \hbox{for all $k$ and $s$ with }%
\ k\geq s,  \label{transition-mat}
\end{equation}%
with $\Phi (k,k)=W(k)$ for all $k$. We will refer to the matrices $\Phi
(k,s) $ as the \textit{transition matrices}. We can now write the belief
update rule (\ref{beliefupdate}) as follows: for all $s$ and $k$ with $k\geq
s\geq 0 $ and all agents $i\in \{1,\ldots ,n\}$,
\begin{equation}
x_{i}(k+1)=\sum_{j=1}^{n}[\Phi (k,s)]_{ij}\,x_{j}(s).  \label{beliefinterval}
\end{equation}

Given our assumptions, the random matrix $W(k)$ is identically
distributed over all $k$, and thus we have for some nonnegative
matrix $\tilde{W}$,
\begin{equation}
E[W(k)]=\tilde{W}\qquad \hbox{for all }k\geq 0.  \label{meanint}
\end{equation}%
The matrix, $\tilde{W}$, which we refer to as the \textit{mean interaction
matrix}, represents the evolution of beliefs in the society. It incorporates
elements from both the underlying social network (which determines the
meeting patterns) and the influence structure. In what follows, it will be
useful to separate these into two components, both for our mathematical
analysis and to clarify the intuitions. For this purpose, let us use the
belief update model (\ref{beliefupdate})-(\ref{updatematrix}) and write the
mean interaction matrix $\tilde{W}$ as follows:\footnote{%
In the sequel, the notation $\sum_{i,j}$ will be used to denote the double
sum $\sum_{i=1}^{n}\sum_{j=1}^{n}$.}
\begin{eqnarray*}
\tilde{W} &=&{\frac{1}{n}}\sum_{i,j}p_{ij}\Big[\beta _{ij}A_{ij}+\alpha
_{ij}J_{ij}+\gamma _{ij}I\Big] \\
&=&{\frac{1}{n}}\sum_{i,j}p_{ij}\Big[(1-\gamma _{ij})A_{ij}+\gamma _{ij}I%
\Big]+{\frac{1}{n}}\sum_{i,j}p_{ij}\alpha _{ij}\big[J_{ij}-A_{ij}\big],
\end{eqnarray*}%
where $A_{ij}$ and $J_{ij}$ are matrices defined in Eq.\ (\ref{updatematrix}%
), and the second inequality follows from the fact that $\beta
_{ij}=1-\alpha _{ij}-\gamma _{ij}$ for all $i,j\in \mathcal{N}$. We use the
notation
\begin{equation}
T={\frac{1}{n}}\sum_{i,j}p_{ij}\Big[(1-\gamma _{ij})A_{ij}+\gamma _{ij}I\Big]%
,\quad D={\frac{1}{n}}\sum_{i,j}p_{ij}\alpha _{ij}\big[J_{ij}-A_{ij}\big],
\label{decomposition}
\end{equation}%
to write the mean interaction matrix, $\tilde{W}$, as
\begin{equation}
\tilde{W}=T+D.  \label{intdecomp}
\end{equation}

Here, the matrix $T$ only depends on meeting probabilities (matrix $P$)
except that it also incorporates $\gamma _{ij}$ (probability that following
a meeting no exchange takes place). We can therefore think of the matrix $T$
as representing the underlying \emph{social network} (friendships,
communication among coworkers, decisions about which news outlets to watch,
etc.), and refer to it as the \emph{social network matrix}. It will be
useful below to represent the social interactions using an undirected (and
weighted) graph induced by the social network matrix $T$. This graph is
given by $(\mathcal{N,A})$, where $\mathcal{A}$ is the set of undirected
edges given by
\begin{equation}
\mathcal{A}=\Big\{\{i,j\}\ |\ T_{ij}>0\Big \},  \label{undirgraph}
\end{equation}%
and the weight $w_{e}$ of edge $e=\{i,j\}$ is given by the entry $%
T_{ij}=T_{ji}$ of the matrix $T$. We refer to this graph as the \textit{%
social network graph}.

The matrix $D$, on the other hand, can be thought of as representing
the \emph{influence structure} in the society. It incorporates
information about which individuals and links are forceful (i.e.,
which types of interactions will lead to one individual influencing
the other without updating his own beliefs). We refer to matrix $D$
as the \emph{influence matrix}. It is also useful to note for
interpreting the mathematical results below that $T$ is a doubly
stochastic matrix, while $D$ is not. Therefore, Eq.
(\ref{intdecomp}) gives a decomposition of the mean connectivity
matrix $\tilde{W}$ into a doubly stochastic and a remainder
component, and enables us to use tools from matrix perturbation
theory (see Section \ref{global-misinfo}).

\section{Convergence\label{convergence}}

In this section, we provide our main convergence result. In particular, we
show that despite the presence of forceful agents, with potentially very
different opinions at the beginning, the society will ultimately converge to
a consensus, in which all individuals share the same belief. This consensus
value of beliefs itself is a random variable. We also provide a first
characterization of the expected value of this consensus belief in terms of
the mean interaction matrix (and thus social network and influence
matrices). Our analysis essentially relies on showing that iterates of Eq.\ (%
\ref{beliefupdate}), $x(k)$, converge to a \textit{consensus} with
probability one, i.e., $x(k)\rightarrow \bar{x}e$, where $\bar{x}$ is a
scalar random variable that depends on the initial beliefs and the random
sequence of matrices $\{W(k)\}$, and $e$ is the vector of all one's. The
proof uses two lemmas which are presented in Appendix B.

\begin{theorem}
The sequences $\{x_i(k)\}$, $i\in \mathcal{N}$, generated by Eq.\ (\ref%
{beliefupdate}) converge to a consensus belief, i.e., there exists a scalar
random variable $\bar{x}$ such that
\begin{equation*}
\lim_{k\to \infty} x_i(k)=\bar x\qquad \hbox{for all }i
\hbox{ with
probability one}.
\end{equation*}
Moreover, the random variable $\bar x$ is a convex combination of initial
agent beliefs, i.e.,
\begin{equation*}
\bar x = \sum_{j=1}^n \pi_j x_j(0),
\end{equation*}
where $\pi=[\pi_1,\ldots,\pi_n]$ is a random vector that satisfies $\pi_j\ge
0$ for all $j$, and $\sum_{j=1}^n \pi_j = 1$. \label{convconsensus}
\end{theorem}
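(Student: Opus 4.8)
The plan is to exploit the fact that beliefs evolve as $x(k+1) = W(k)x(k)$ where each $W(k)$ is a stochastic matrix drawn i.i.d. from the distribution in (\ref{updatematrix}). Since every $W(k)$ is row-stochastic, each $A_{ij}$, $J_{ij}$, and $I$ maps the interval $[\min_i x_i(k), \max_i x_i(k)]$ into itself; hence the sequences $M(k)=\max_i x_i(k)$ and $m(k)=\min_i x_i(k)$ are respectively nonincreasing and nondecreasing. This monotonicity guarantees that both limits exist deterministically along every sample path, so it suffices to show that the gap $M(k)-m(k)$ converges to zero with probability one. The convex-combination conclusion will then follow by tracking the transition matrices $\Phi(k,0)$ defined in (\ref{transition-mat}): since each $\Phi(k,0)$ is a product of stochastic matrices and therefore itself stochastic, the vector $\pi$ will emerge as (a limit of) a row of $\Phi(k,0)$, automatically inheriting nonnegativity and the unit-sum property.

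**The core contraction argument.**

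First I would argue that, because of Assumptions \ref{connect} and \ref{intprob}, starting from any configuration there is positive probability, over a suitable window of consecutive time slots, that enough averaging/influence meetings occur along a spanning structure of the strongly connected graph $(\mathcal{N},\mathcal{E})$ to \emph{strictly} contract the spread $M(k)-m(k)$ by a fixed multiplicative factor $\rho<1$. The key point is that Assumption \ref{intprob} ($\beta_{ij}+\alpha_{ij}>0$ on every edge) ensures that along any edge there is a positive chance of a genuine belief-mixing event (either an $A_{ij}$ or a $J_{ij}$ update, both of which pull the two beliefs strictly toward each other when $\epsilon<1$; note $J_{ij}$ with weight $1-\epsilon$ on $x_j$ still strictly moves $x_i$), and Assumption \ref{connect} lets us chain such events across a path of length at most $d$ (the maximum shortest-path length from (\ref{maxsp})) to couple the extreme agents to the rest. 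I would package this as: there exist an integer $L$ (on the order of $d$ or $n$), a factor $\rho\in(0,1)$, and a probability $q>0$, such that in any block of $L$ slots, with probability at least $q$ the spread shrinks by a factor $\rho$, uniformly over the current belief configuration. This is exactly the kind of statement the two lemmas deferred to Appendix B should supply.

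**Passing from blockwise contraction to almost-sure consensus.**

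Given the blockwise contraction, I would conclude convergence by a Borel--Cantelli / independence argument: because the $W(k)$ are i.i.d., the events ``the spread contracts by $\rho$ in block $t$'' are independent across disjoint blocks and each has probability at least $q>0$, so infinitely many of them occur with probability one. Combined with the monotone, never-increasing nature of $M(k)-m(k)$, infinitely many independent $\rho$-contractions force $M(k)-m(k)\to 0$ almost surely, giving a common limit $\bar x$ with $m(k)\uparrow \bar x$ and $M(k)\downarrow \bar x$. For the representation $\bar x=\sum_j \pi_j x_j(0)$, I would use $x(k+1)=\Phi(k,0)\,x(0)$ and argue that the rows of $\Phi(k,0)$ converge to a common stochastic vector $\pi$ (this is the matrix counterpart of the scalar contraction, since the rows of $\Phi(k,0)$ are driven to agreement by the same coupling mechanism); each row being stochastic for every finite $k$ forces the limit $\pi$ to satisfy $\pi_j\ge0$ and $\sum_j\pi_j=1$.

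**The main obstacle.**

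The delicate step is establishing the uniform blockwise contraction in the presence of forceful links, because the influence matrices $J_{ij}$ are \emph{not} doubly stochastic and do not symmetrically average the two beliefs: a $J_{ij}$ update overwrites $x_i$ toward $x_j$ while leaving $x_j$ fixed, so one cannot rely on a symmetric quadratic-decrease (variance-contraction) argument that works cleanly for the regular-agent-only case. The right way around this is to abandon the variance functional and work directly with the $\max$--$\min$ spread, which is contracted by \emph{both} $A_{ij}$ and $J_{ij}$ (as long as $\epsilon<1$, each such update strictly moves the updated agent into the interior of the current belief interval whenever its two endpoints differ). The careful bookkeeping then lies in showing that a positive-probability sequence of such directed updates, routed along the strongly connected graph, is enough to drag the current extreme values strictly inward within a bounded number of steps --- and to make the contraction factor $\rho$ and probability $q$ \emph{independent} of the current beliefs, which is what the i.i.d.\ structure and finiteness of the state graph ultimately deliver.
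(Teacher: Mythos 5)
Your proposal is correct and follows essentially the same route as the paper: both rest on a uniform, positive-probability blockwise contraction of the max--min spread over a window of fixed length (the content of the appendix lemmas you correctly defer to), combined with the fact that the spread is nonincreasing under stochastic updates, and both recover the random vector $\pi$ by applying the consensus result to the basis vectors via $x(k)=\Phi(k-1,0)x(0)$ and using the stochasticity of the transition matrices. The only (immaterial) difference is the final probabilistic step: you invoke the second Borel--Cantelli lemma on the independent contraction events across disjoint blocks, whereas the paper takes expectations, shows $E[M(k)-m(k)]$ decays geometrically, and then concludes almost-sure convergence from monotonicity --- both are standard and valid.
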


\begin{proof}
By Lemma \ref{posprob} from Appendix B, we have
\[P\left\{ [\Phi(s+ n^2d -1,s)]_{ij}\ge {\eta^d\over 2} \epsilon^{n^2-1},\
\hbox{for all }i,j\right\} \ge
\left(\frac{\eta^d}{2}\right)^{n^2}\qquad \hbox{for all }s\ge 0,
\]
where $\Phi(s+ n^2d -1,s)$ is a transition matrix [cf.\ Eq.\
(\ref{transition-mat})], $d$ is the maximum shortest path length in
graph $({\cal N,E})$ [cf.\ Eq.\ (\ref{maxsp})], $\epsilon$ is the
self belief weight against a forceful agent [cf.\ Eq.\
(\ref{selfweight})], and $\eta$ is a positive scalar defined in Eq.\
(\ref{scalareta}). This relation implies that over a window of
length $n^2d$, all entries of the transition matrix $\Phi(s+ n^2d
-1,s)$ are strictly positive with positive probability, which is
uniformly bounded away from 0. Thus, we can use Lemma \ref{lyapdec}
(from Appendix A) with the identifications
\[H(k)=W(k), \qquad B=n^2d,\qquad \theta={\eta^d\over 2} \epsilon^{n^2-1}.\]
Letting
\[M(k)=\max_{i\in {\cal N}} x_i(k),\qquad m(k)=\min_{i\in {\cal N}} x_i(k),\]
this implies that $n {\eta^d\over 2} \epsilon^{n^2-1}\le 1$ and for
all $s\ge 0$,
\[P\left\{ M(s+n^2d)-m(s+n^2d)\le (1-n\eta^d/2\, \epsilon^{n^2-1})
(M(s)-m(s))\right\} \ge \left(\frac{\eta^d}{2}\right)^{n^2}.
\]
Moreover, by the stochasticity of the matrix $W(k)$, it follows that
the sequence $\{M(k)-m(k)\}$ is nonincreasing with probability one.
Hence, we have for all $s\ge 0$
\[E\Big[M(s+n^2d)-m(s+n^2d) \Big] \le \left[1-\left(\frac{\eta^d}{2}\right)^{n^2} +
\left(\frac{\eta^d}{2}\right)^{n^2} (1-n\eta^d/2\,
\epsilon^{n^2-1})\right] (M(s)-m(s)),\] from which, for any $k\ge
0$, we obtain
\[E\Big[M(k)-m(k) \Big] \le \left[1-\left(\frac{\eta^d}{2}\right)^{n^2} +
\left(\frac{\eta^d}{2}\right)^{n^2} (1-n\eta^d/2\,
\epsilon^{n^2-1})\right]^{\lfloor{k\over n^2d }\rfloor}
(M(0)-m(0)).\] This implies that
\[\lim_{k\to \infty} M(k)-m(k)=0\qquad \hbox{with probability one}.\]
The stochasticity of the matrix $W(k)$ further implies that the
sequences $\{M(k)\}$ and $\{m(k)\}$ are bounded and monotone and
therefore converges to the same limit, which we denote by $\bar x$.
Since we have
\[m(k)\le x_i(k)\le M(k)\qquad \hbox{for all }i \hbox{ and }k\ge 0,\]
it follows that
\[\lim_{k\to \infty} x_i(k)=\bar x\qquad \hbox{for all }i \hbox{ with probability one},\]
establishing the first result.

Letting $s=0$ in Eq.\ (\ref{beliefinterval}), we have for all $i$
\begin{equation}x_i(k) = \sum_{j=1}^n [\Phi(k-1,0)]_{ij}\,
x_j(0)\qquad \hbox{for all } k\ge 0.\label{concomb}\end{equation}
From the previous part, for any initial belief vector $x(0)$, the
limit
\[\lim_{k\to \infty} x_i(k) = \sum_{j=1}^n \lim_{k\to \infty}[\Phi(k-1,0)]_{ij}\, x_j(0)\]
exists and is independent of $i$. Hence, for any $h$, we can choose
$x(0)=e_h$, i.e., $x_h(0)=1$ and $x_j(0)=0$ for all $j\ne h$,
implying that the limit
\[\lim_{k\to \infty}[\Phi(k-1,0)]_{ih}\]
exists and is independent of $i$. Denoting this limit by $\pi_h$ and
using Eq.\ (\ref{concomb}), we obtain the desired result, where the
properties of the vector $\pi=[\pi_1,\ldots,\pi_n]$ follows from the
stochasticity of matrix $\Phi(k,0)$ for all $k$ (implying the
stochasticity of its limit as $k\to \infty$).
\end{proof}

The key implication of this result is that, despite the presence of forceful
agents, the society will ultimately\ reach a consensus. Though surprising at
first, this result is intuitive in light of our \textquotedblleft no man is
an island\textquotedblright\ assumption (Assumption \ref{intprob}). However,
in contrast to \textquotedblleft averaging models\textquotedblright\ used
both in the engineering literature and recently in the learning literature,
the consensus value here is a random variable and will depend on the order
in which meetings have taken place. The main role of this result for us is
that we can now conduct our analysis on quantifying the extent of the spread
of misinformation by looking at this consensus value of beliefs.

The next theorem characterizes $E[\bar{x}]$ in terms of the limiting
behavior of the matrices $\tilde{W}^{k}$ as $k$ goes to infinity.

\begin{theorem}
Let $\bar{x}$ be the limiting random variable of the sequences $\{x_i(k)\}$,
$i\in \mathcal{N}$ generated by Eq.\ (\ref{beliefupdate}) (cf.\ Theorem \ref%
{convconsensus}). Then we have:
\begin{itemize}
\item[(a)] The matrix $\tilde{W}^{k}$ converges to a stochastic matrix with
identical rows $\bar{\pi}$ as $k$ goes to infinity, i.e.,
\begin{equation*}
\lim_{k\rightarrow \infty }\tilde{W}^{k}=e\bar{\pi}^{\prime }.
\end{equation*}
\item[(b)] The expected value of $\bar{x}$ is given by a convex combination
of the initial agent values $x_i(0)$, where the weights are given by the
components of the probability vector $\bar{\pi}$, i.e.,
\begin{equation*}
E[\bar{x}]= \sum_{i=1}^n \bar{\pi}_i x_i(0)= \bar{\pi}^{\prime }x(0).
\end{equation*}
\end{itemize}\label{tildewlim}
\end{theorem}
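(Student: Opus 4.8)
The plan is to link the deterministic object $\tilde{W}^k$ to the expectation of the random transition matrix $\Phi(k-1,0)$, and then to push the almost-sure convergence already established in Theorem \ref{convconsensus} through the expectation operator. The key realization is that Theorem \ref{convconsensus} has done essentially all of the probabilistic work; what remains is to identify the limiting row $\bar{\pi}$ with $E[\pi]$.

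First I would exploit the i.i.d.\ structure of $\{W(k)\}$. Writing $\Phi(k-1,0)=W(k-1)W(k-2)\cdots W(0)$ as a product of $k$ independent factors, an entrywise computation (or a short induction on the number of factors, using $E[W(k)]=\tilde{W}$) gives, for each fixed $k$,
\[
E[\Phi(k-1,0)] = E[W(k-1)]\,E[W(k-2)]\cdots E[W(0)] = \tilde{W}^{k}.
\]
Independence is exactly what makes the expectation of the matrix product factor into the product of expectations; this is the crux of the argument and the step most deserving of care.

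Next, recall from the proof of Theorem \ref{convconsensus} that $[\Phi(k-1,0)]_{ih}\to \pi_h$ almost surely for every $i,h$, i.e.\ $\Phi(k-1,0)\to e\pi'$ with probability one, where $\pi$ is the random stochastic vector produced there. Since every $W(k)$ is stochastic, all entries of $\Phi(k-1,0)$ lie in $[0,1]$, so the sequence is uniformly bounded and the bounded convergence theorem applies entrywise. Hence
\[
\tilde{W}^{k} = E[\Phi(k-1,0)] \;\longrightarrow\; E[e\pi'] = e\,(E[\pi])' \qquad \text{as } k\to\infty.
\]
Setting $\bar{\pi}:=E[\pi]$ yields part (a); nonnegativity and the unit-sum property pass from $\pi$ to $\bar{\pi}$, so $e\bar{\pi}'$ is a stochastic matrix with identical rows, as claimed.

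Finally, part (b) falls out of the same identification $\bar{\pi}=E[\pi]$. Theorem \ref{convconsensus} gives $\bar{x}=\sum_{j=1}^{n}\pi_j x_j(0)$, and since the $x_j(0)$ are deterministic, taking expectations yields $E[\bar{x}]=\sum_{j=1}^{n}E[\pi_j]\,x_j(0)=\sum_{j=1}^{n}\bar{\pi}_j x_j(0)=\bar{\pi}'x(0)$. I expect the only genuinely delicate points to be the factorization $E[\Phi(k-1,0)]=\tilde{W}^k$ via independence and the limit--expectation interchange; both are routine because the $W(k)$ are i.i.d.\ and stochastic, so no serious obstacle should arise.
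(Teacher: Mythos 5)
Your proof is correct, but for part (a) it takes a genuinely different route from the paper. The paper proves (a) deterministically: it invokes Lemma \ref{tranprimitive}(a) to show that $\tilde{W}$ is a primitive matrix, so the Markov chain with transition matrix $\tilde{W}$ is regular, and then convergence of $\tilde{W}^k$ to $e\bar{\pi}'$ follows from the standard limit theorem for regular chains (Theorem \ref{MC}(a)), with $\bar{\pi}$ identified as the unique stationary distribution of $\tilde{W}$. You instead derive (a) probabilistically: the factorization $E[\Phi(k-1,0)]=\tilde{W}^k$ from independence, the almost-sure convergence $[\Phi(k-1,0)]_{ih}\to\pi_h$ already established inside the proof of Theorem \ref{convconsensus}, and bounded convergence then give $\tilde{W}^k\to e\,(E[\pi])'$, so you may simply define $\bar{\pi}:=E[\pi]$. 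Both arguments are valid; yours is leaner (it reuses the probabilistic work of Theorem \ref{convconsensus} and obtains the identity $\bar{\pi}=E[\pi]$ as part of the proof rather than as an afterthought, as the paper does in the remark following the theorem), while the paper's version additionally establishes that $\tilde{W}$ is primitive and that $\bar{\pi}$ is its stationary distribution --- facts that are needed downstream, e.g.\ in the perturbation argument of Theorem \ref{delta_bd_thm}, and which your route does not deliver directly (though $\bar{\pi}'\tilde{W}=\bar{\pi}'$ can be recovered from your limit by passing $k\to\infty$ in $\tilde{W}^{k+1}=\tilde{W}^k\tilde{W}$). Your part (b) is essentially the paper's argument, routed through $\pi$ rather than through $x(k)$ and dominated convergence.
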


\begin{proof} (a) \ This part relies on the properties of the mean
interaction matrix established in Appendix B. In particular, by
Lemma \ref{tranprimitive}(a), the mean interaction matrix $\tilde W$
is a primitive matrix. Therefore, the Markov Chain with transition
probability matrix $\tilde W$ is regular (see Section
\ref{prelim-results} for a definition). The result follows
immediately from Theorem \ref{MC}(a).

\noindent (b) \ From Eq.\ (\ref{beliefinterval}), we have for all
$k\ge 0$
\[x(k)= \Phi(k-1,0)x(0).\]
Moreover, since $x(k)\to \bar x e$ as $k\to \infty$, we have
\[E[\bar{x} e]= E[\lim_{k\to \infty} x(k)] = \lim_{k\to \infty} E[x(k)],\]
where the second equality follows from the Lebesgue's Dominated
Convergence Theorem (see \cite{rudin}). Combining the preceding two
relations and using the assumption that the matrices $W(k)$ are
independent and identically distributed over all $k\ge 0$, we obtain
\[E[\bar{x} e] = \lim_{k\to \infty}  E[\Phi(k-1,0)x(0)] =\lim_{k\to \infty} \tilde W^k x(0),\]
which in view of part (a) implies
\[E[\bar{x}]=  \bar{\pi}'x(0).\]
\end{proof}

Combining Theorem \ref{convconsensus} and Theorem \ref{tildewlim}(a)
(and using the fact that the results hold for any $x(0)$), we have
$\bar \pi= E[\pi]$. The stationary distribution $\bar{\pi}$ is
crucial in understanding the formation of opinions since it
encapsulates the weight given to each agent (forceful or regular) in
the (limiting) mean consensus
value of the society. We refer to the vector $\bar{\pi}$ as the \textit{%
consensus distribution} corresponding to the mean interaction matrix $\tilde{%
W}$ and its component $\bar{\pi}_{i}$ as the \textit{weight} of agent $i$.

It is also useful at this point to highlight how consensus will form around
the correct value in the absence of forceful agents. Let $\{x(k)\}$ be the
belief sequence generated by the belief update rule of Eq.\ (\ref%
{beliefupdate}). When there are no forceful agents, i.e. $\alpha
_{ij}=0$ for all $i,j$, then the interaction matrix $W(k)$ for all
$k$ is either equal to an averaging matrix $A_{ij}$ for some $i, j$
or equal to the identity matrix $I$; hence, $W(k)$ is a doubly
stochastic matrix. This implies that the average value of $x(k)$
remains constant at each iteration, i.e.,
\begin{equation*}
{\frac{1}{n}}\sum_{i=1}^{n}x_{i}(k)={\frac{1}{n}}\sum_{i=1}^{n}x_{i}(0)%
\qquad \hbox{for all }k\geq 0.
\end{equation*}

Theorem \ref{convconsensus} therefore shows that when there are no forceful
agents, the sequences $x_{i}(k)$ for all $i$, converge to the average of the
initial beliefs with probability one, aggregating information. We state this
result as a simple corollary.

\begin{corollary}
\label{corollary} \emph{Assume that there are no forceful agents, i.e., $%
\alpha_{ij}=0$ for all $i,j \in \mathcal{N}$. We have
\begin{equation*}
\lim_{k\rightarrow \infty }x_{i}(k)={\frac{1}{n}}\sum_{i=1}^{n}x_{i}(0)=%
\theta \qquad \hbox{with probability one}.
\end{equation*}}
\end{corollary}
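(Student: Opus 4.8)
The plan is to deduce this corollary directly from Theorem~\ref{convconsensus} together with a conservation-of-average observation, so essentially no new machinery is needed. First I would invoke Theorem~\ref{convconsensus} under the hypothesis $\alpha_{ij}=0$ for all $i,j$, which guarantees that the beliefs converge with probability one to a common random limit $\bar x=\sum_{j=1}^n\pi_j x_j(0)$. The whole content of the corollary is then to pin down this limit: I must show that in the absence of forceful agents the random weights satisfy $\pi_j=1/n$ for every $j$, so that $\bar x$ is the deterministic average $\theta=\frac1n\sum_i x_i(0)$.

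The key step is the invariance of the sample mean. When $\alpha_{ij}=0$, the update matrix $W(k)$ in Eq.~(\ref{updatematrix}) takes only the values $A_{ij}$ (with probability $p_{ij}\beta_{ij}/n$) or $I$ (with the remaining probability), and each of these is doubly stochastic, hence so is every realization of $W(k)$. Since $W(k)$ is doubly stochastic, $e^{\prime}W(k)=e^{\prime}$, and therefore $e^{\prime}x(k+1)=e^{\prime}W(k)x(k)=e^{\prime}x(k)$. Iterating gives $\frac1n\sum_{i=1}^n x_i(k)=\frac1n\sum_{i=1}^n x_i(0)=\theta$ for every $k\ge 0$, pathwise (with probability one). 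This is exactly the conservation statement quoted in the paragraph preceding the corollary.

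Finally I would combine the two facts. By Theorem~\ref{convconsensus}, $x_i(k)\to\bar x$ for all $i$ with probability one, so the average also converges: $\frac1n\sum_{i=1}^n x_i(k)\to\bar x$. But the average is identically equal to $\theta$ for all $k$, so its limit is $\theta$. By uniqueness of limits we conclude $\bar x=\theta$ with probability one, and hence $\lim_{k\to\infty}x_i(k)=\theta$ for every $i$, as claimed.

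I do not anticipate a genuine obstacle here; the result is a clean specialization. The only point requiring a little care is the logical ordering: one should first establish almost-sure convergence to \emph{some} consensus value via Theorem~\ref{convconsensus}, and then independently use double stochasticity to fix that value as $\theta$, rather than trying to read off $\pi_j=1/n$ from the definition of $\pi$ directly (which would require controlling the limiting transition matrix entrywise). The conservation argument sidesteps that entirely, since it constrains the limit without needing the detailed structure of $\pi$.
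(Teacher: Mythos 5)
Your proposal is correct and matches the paper's own argument: the text immediately preceding the corollary observes that with $\alpha_{ij}=0$ every realization of $W(k)$ is doubly stochastic, so the sample average $\frac{1}{n}\sum_i x_i(k)$ is conserved, and combining this with the almost-sure consensus from Theorem \ref{convconsensus} pins the limit at $\theta$. Your remark about the logical ordering (fix the limit via conservation rather than by computing $\pi$ entrywise) is exactly the route the paper takes.
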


Therefore, in the absence of forceful agents, the society is able to
aggregate information effectively. Theorem \ref{tildewlim} then also
implies that in this case $\pi=\bar{\pi}_{i}=1/n$ for all $i$ (i.e.,
beliefs converge to a deterministic value), so that no individual
has excess influence. These results no longer hold when there are
forceful agents. In the next section, we investigate the effect of
the forceful agents and the structure of the social network on the
extent of misinformation and excess influence of individuals.

\section{Global Limits on Misinformation\label{global-misinfo}}

In this section, we are interested in providing an upper bound on
the expected value of the difference between the consensus belief
$\bar{x}$ (cf.\ Theorem \ref{convconsensus}) and the true underlying
state, $\theta $ (or equivalently the average of the initial
beliefs), i.e.,
\begin{equation}
E\left[ \bar{x}-\theta\right]
=E[\bar{x%
}]-\theta =\sum_{i\in \mathcal{N}}\Big(\bar{\pi}_{i}-{\frac{1}{n}}\Big)x_{i}(0),  \label{equation to be bounded}
\end{equation}%
(cf.\ Theorem \ref{tildewlim}). Our bound relies on a fundamental theorem
from the perturbation theory of finite Markov Chains. Before presenting the
theorem, we first introduce some terminology and basic results related to
Markov Chains.

\subsection{Preliminary Results}

\label{prelim-results}

Consider a finite Markov Chain with $n$ states and transition probability
matrix $T$.\footnote{%
We use the same notation as in (\ref{intdecomp}) here, given the close
connection between the matrices introduced in the next two theorems and the
ones in (\ref{intdecomp}).} We say that a finite Markov chain is \textit{%
regular} if its transition probability matrix is a primitive matrix, i.e.,
there exists some integer $k>0$ such that all entries of the power matrix $%
T^{k}$ are positive. The following theorem states basic results on the
limiting behavior of products of transition matrices of Markov Chains (see
Theorems 4.1.4, 4.1.6, and 4.3.1 in Kemeny and Snell \cite{kemeny}).

\begin{theorem}
Consider a regular Markov Chain with $n$ states and transition probability
matrix $T$.
\begin{itemize}
\item[(a)] The $k^{th}$ power of the transition matrix $T$, $T^k$, converges
to a stochastic matrix $T^{\infty}$ with all rows equal to the probability
vector $\pi$, i.e.,
\begin{equation*}
\lim_{k\to \infty} T^k = T^{\infty} = e \pi^{\prime },
\end{equation*}
where $e$ is the $n$-dimensional vector of all ones.
\item[(b)] The probability vector $\pi$ is a left eigenvector of the matrix $%
T$, i.e.,
\begin{equation*}
\pi^{\prime }T=\pi^{\prime }\quad \hbox{and}\quad \pi^{\prime }e = 1.
\end{equation*}
The vector $\pi$ is referred to as the \textit{stationary distribution} of
the Markov Chain.
\item[(c)] The matrix $Y=(I-T+T^{\infty})^{-1}-T^{\infty}$ is well-defined
and is given by
\begin{equation*}
Y=\sum_{k=0}^\infty (T^k-T^{\infty}).
\end{equation*}
The matrix $Y$ is referred to as the \textit{fundamental matrix} of the
Markov Chain.
\end{itemize}\label{MC}
\end{theorem}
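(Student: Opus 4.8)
The plan is to treat the three parts in order, with parts (a) and (b) resting on the Perron--Frobenius theory of primitive matrices and part (c) being a purely algebraic consequence once the spectral picture is in hand.

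For part (a), I would start from the hypothesis that $T$ is primitive, so that some power $T^m$ has all entries positive. By the Perron--Frobenius theorem for primitive matrices, the spectral radius $\rho(T)$ is a simple eigenvalue that strictly dominates the modulus of every other eigenvalue, and it admits positive right and left eigenvectors. Since $T$ is stochastic we have $Te=e$, so $1$ is an eigenvalue and $\rho(T)=1$; hence $1$ is the simple Perron root and every other eigenvalue $\lambda$ satisfies $|\lambda|<1$. Writing $\pi$ for the left Perron eigenvector normalized by $\pi'e=1$, the spectral projection associated with the eigenvalue $1$ is exactly the rank-one matrix $e\pi'$. Decomposing $T^k = e\pi' + R_k$, where $R_k$ collects the contribution of the part of the spectrum strictly inside the unit disk, I would invoke the fact that powers of an operator whose spectrum lies in $\{|\lambda|<1\}$ tend to zero, so that $R_k\to 0$ and therefore $T^k\to e\pi'=:T^\infty$, a stochastic matrix with identical rows $\pi$. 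Alternatively, one could avoid eigenvalue bookkeeping and mirror the contraction argument used in the proof of Theorem~\ref{convconsensus}: letting $M_j(k)$ and $m_j(k)$ be the maximum and minimum of the $j$th column of $T^k$, monotonicity together with the strict positivity of $T^m$ forces $M_j(k)-m_j(k)$ to contract geometrically, so each column converges to a constant.

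For part (b), I would simply pass to the limit in the identities $T^{k+1}=TT^k=T^kT$. Taking $k\to\infty$ and using part (a) gives $T^\infty = TT^\infty = T^\infty T$; since $T^\infty = e\pi'$ and $Te=e$, the relation $e\pi' = e\pi' T = e(\pi'T)$ holds, and comparing any single row yields $\pi'T=\pi'$. The normalization $\pi'e=1$ and the nonnegativity $\pi\ge 0$ follow because $T^\infty$, as a limit of the stochastic matrices $T^k$, is itself stochastic with nonnegative entries. For part (c), the key is the set of algebraic identities $TT^\infty=T^\infty T=(T^\infty)^2=T^\infty$, all immediate from $T^\infty=e\pi'$, $Te=e$, $\pi'T=\pi'$ and $\pi'e=1$. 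Setting $Z:=T-T^\infty$, these identities give by induction $Z^k=T^k-T^\infty$ for every $k\ge 1$. Because $T^\infty$ is precisely the spectral projection onto the eigenvalue $1$, the matrix $Z$ has the same spectrum as $T$ except that the eigenvalue $1$ is replaced by $0$; hence $\rho(Z)<1$, the Neumann series $\sum_{k=0}^\infty Z^k$ converges, and $I-Z=I-T+T^\infty$ is invertible with $(I-T+T^\infty)^{-1}=\sum_{k=0}^\infty Z^k$. Finally I would compute
\[
Y=(I-T+T^\infty)^{-1}-T^\infty = I + \sum_{k=1}^\infty(T^k-T^\infty)-T^\infty = \sum_{k=0}^\infty(T^k-T^\infty),
\]
using $Z^0=I$, $Z^k=T^k-T^\infty$ for $k\ge1$, and $I-T^\infty=T^0-T^\infty$ to absorb the $k=0$ term.

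The main obstacle is part (a): establishing that the limit is rank one and that the convergence actually takes place. Both hinge on the simplicity of the Perron root $1$ and on the strict spectral gap $|\lambda|<1$ for all other eigenvalues, which is exactly the content that primitivity (as opposed to mere irreducibility, which would permit periodic eigenvalues on the unit circle) buys us. Once this spectral gap is secured, parts (b) and (c) are routine limiting and Neumann-series arguments; the only point requiring care in (c) is verifying $\rho(Z)<1$, i.e., that removing $T^\infty$ deletes the boundary eigenvalue without creating a new one.
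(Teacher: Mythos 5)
Your proof is correct. Note, however, that the paper does not actually prove Theorem \ref{MC}: it states the result as a standard fact and cites Theorems 4.1.4, 4.1.6, and 4.3.1 of Kemeny and Snell \cite{kemeny}, so there is no in-paper argument to compare against. Your Perron--Frobenius route is a valid way to get there: primitivity gives a simple, strictly dominant eigenvalue $1$ whose spectral projection is $e\pi'$, which yields (a) and (b), and the identities $TT^\infty=T^\infty T=(T^\infty)^2=T^\infty$ together with $\rho(T-T^\infty)<1$ give (c) by induction ($Z^k=T^k-T^\infty$) and a Neumann series, exactly as you write. The classical proof in Kemeny and Snell avoids spectral theory for (a) and instead runs the max-minus-min column contraction argument that you sketch as your alternative --- which is, in stochastic form, the same mechanism the paper itself deploys via Lemma \ref{lyapdec} in the proof of Theorem \ref{convconsensus}; part (c) is then obtained by essentially the same telescoping computation you give. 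The spectral approach buys a clean identification of $T^\infty$ as a rank-one projection and makes $\rho(T-T^\infty)<1$ transparent; the elementary approach buys explicit geometric rates without invoking Perron--Frobenius. Every step you need --- simplicity of the Perron root under primitivity, the identity $Z^k=T^k-T^\infty$, and invertibility of $I-T+T^\infty$ --- is correctly justified in your write-up.
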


The following theorem provides an exact perturbation result for the
stationary distribution of a regular Markov Chain in terms of its
fundamental matrix. The theorem is based on a result due to Schweitzer \cite%
{schweitzer} (see also Haviv and Van Der Heyden \cite{haviv}).

\begin{theorem}
Consider a regular Markov Chain with $n$ states and transition probability
matrix $T$. Let $\pi$ denote its unique stationary distribution and $Y$
denote its fundamental matrix. Let $D$ be an $n\times n$ perturbation matrix
such that the sum of the entries in each row is equal to 0, i.e.,
\begin{equation*}
\sum_{j=1}^n [D]_{ij} = 0\quad \hbox {for all }i.
\end{equation*}
Assume that the perturbed Markov chain with transition matrix $\hat T=T+D$
is regular. Then, the perturbed Markov chain has a unique stationary
distribution $\hat \pi$, and the matrix $I-DY$ is nonsingular. Moreover, the
change in the stationary distributions, $\rho=\hat \pi- \pi$, is given by
\begin{equation*}
\rho'=\pi' DY(I-DY)^{-1}.
\end{equation*}%
\label{pertbd}
\end{theorem}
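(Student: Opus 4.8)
The plan is to work entirely with the left-eigenvector characterization of stationary distributions from Theorem \ref{MC}(b) together with the algebra of the fundamental matrix $Y$. First I would record the identities I expect to need. Since $T$ is row-stochastic and $T^\infty = e\pi'$ with $\pi'e = 1$, one checks directly from $Y = \sum_{k\ge 0}(T^k - T^\infty)$ that $\pi' Y = 0$ and $Ye = 0$, that $T^\infty Y = Y T^\infty = 0$, and --- the identity doing the real work --- that $(I-T)Y = Y(I-T) = I - T^\infty$. The last follows from the defining relation $(I - T + T^\infty)(Y + T^\infty) = I$ (since $Y + T^\infty = (I-T+T^\infty)^{-1}$) together with $(I-T+T^\infty)T^\infty = T^\infty$. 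I would also note that the row-sum hypothesis on $D$ means $De = 0$, hence $DT^\infty = De\pi' = 0$.

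With these in hand, the formula for $\rho$ is a short computation. Writing $\hat\pi = \pi + \rho$ and using $\hat\pi'\hat T = \hat\pi'$ with $\hat T = T + D$, I get $\hat\pi' D = \hat\pi'(I-T) = \rho'(I-T)$, where the last step uses $\pi'(I-T) = 0$. Multiplying on the right by $Y$ and applying $(I-T)Y = I - T^\infty$ yields $\rho'(I - T^\infty) = \hat\pi' DY$; and because $\rho'e = \hat\pi' e - \pi'e = 0$ we have $\rho' T^\infty = \rho'e\pi' = 0$, so $\rho' = \hat\pi' DY = \pi' DY + \rho' DY$, i.e. $\rho'(I - DY) = \pi' DY$. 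The claimed formula $\rho' = \pi' DY(I-DY)^{-1}$ then follows the moment $I - DY$ is known to be invertible.

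So the crux --- and the step I expect to be the main obstacle --- is nonsingularity of $I - DY$. I would prove it by showing the left null space is trivial: suppose $w'(I - DY) = 0$, i.e. $w' = w' DY$. Right-multiplying by $e$ and using $Ye = 0$ gives $w'e = 0$. Right-multiplying instead by $(I - T)$ and using $Y(I-T) = I - T^\infty$ together with $DT^\infty = 0$ gives $w'(I-T) = w'D$, that is $w'(I - \hat T) = 0$. Thus $w'$ is a left $1$-eigenvector of $\hat T$. But $\hat T$ is assumed regular, so by Perron--Frobenius the eigenvalue $1$ is simple and its left eigenspace is spanned by $\hat\pi'$ (Theorem \ref{MC}(b) applied to $\hat T$); hence $w' = c\,\hat\pi'$ for a scalar $c$. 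Since $\hat\pi'e = 1$ while $w'e = 0$, we conclude $c = 0$ and $w' = 0$. This establishes that $I - DY$ is nonsingular, and combined with the previous paragraph gives both conclusions of the theorem.

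The one subtlety to watch is applying the fundamental-matrix identities on the correct side: the derivation of $\rho$ uses $(I-T)Y = I - T^\infty$, whereas the nonsingularity argument uses $Y(I-T) = I - T^\infty$. These two one-sided identities are precisely what convert statements about $\hat T$ into statements about $DY$ and back, so verifying both (and the auxiliary relations $Ye=0$, $\pi'Y=0$, $DT^\infty=0$) at the outset is what makes the rest routine.
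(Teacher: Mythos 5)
Your proof is correct, and it is worth noting that the paper itself offers no proof of this statement: Theorem \ref{pertbd} is imported from Schweitzer \cite{schweitzer} (see also Haviv and Van Der Heyden \cite{haviv}) and stated without argument. Your derivation is a clean, self-contained version of the standard fundamental-matrix argument behind those references. All the auxiliary identities you list check out: $Ye=0$ and $\pi'Y=0$ termwise from $Y=\sum_{k\ge0}(T^k-T^\infty)$; the two one-sided identities $(I-T)Y=Y(I-T)=I-T^\infty$ from expanding $(I-T+T^\infty)(Y+T^\infty)=(Y+T^\infty)(I-T+T^\infty)=I$ together with $(I-T)T^\infty=T^\infty(I-T)=0$ and $(T^\infty)^2=T^\infty$; and $DT^\infty=0$ from $De=0$. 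The chain $\hat\pi'D=\rho'(I-T)$, right-multiplication by $Y$, and $\rho'T^\infty=0$ (since $\rho'e=0$) correctly yields $\rho'(I-DY)=\pi'DY$, and your nonsingularity argument is the right one: any $w'$ with $w'=w'DY$ satisfies $w'e=0$ and $w'\hat T=w'$, hence is a multiple of $\hat\pi'$ by simplicity of the Perron eigenvalue of the regular chain $\hat T$ (which one can also read off directly from Theorem \ref{MC}(a) applied to $\hat T$, since $w'=w'\hat T^k\to(w'e)\hat\pi'$), forcing $w'=0$. Your closing remark about which side of the identity $(I-T)Y$ versus $Y(I-T)$ is used where is exactly the point where a careless version of this argument would go wrong, so flagging it is appropriate.
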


\subsection{Main Results}

This subsection provides bounds on the difference between the consensus
distribution and the uniform distribution using the global properties of the
underlying social network. Our method of analysis will rely on the
decomposition of the mean interaction matrix $\tilde{W}$ given in (\ref%
{intdecomp}) into the social network matrix $T$ and the influence matrix $D$%
. Recall that $T$ is doubly stochastic.

The next theorem provides our first result on characterizing the
extent of misinformation and establishes an upper bound on the
$l_{\infty }$-norm of the difference between the stationary
distribution $\bar{\pi}$ and the uniform distribution
${\frac{1}{n}}e$, which, from Eq.\ (\ref{equation to be bounded}),
also provides a bound on the deviation between expected beliefs and
the true underlying state, $\theta $.

\begin{theorem}

\begin{itemize}
\item[(a)] Let $\bar{\pi}$ denote the consensus distribution. The $l_\infty$%
-norm of the difference between $\bar{\pi}$ and ${\frac{1}{n}}e$ is given by
\begin{equation*}
\Big\|\bar{\pi}-{\frac{1}{n}} e\Big\|_{\infty} \le {\frac{1}{1-\delta}} \, {%
\frac{\sum_{i,j} p_{ij} \alpha_{ij}}{2n}},
\end{equation*}
where $\delta$ is a constant defined by
\begin{equation*}
\delta = (1-n\chi^d)^{\frac{1}{d}},
\end{equation*}
\begin{equation*}
\chi=\min_{(i,j)\in \mathcal{E}} \left\{{\frac{1}{n}} \Big[p_{ij} \, {\frac{%
1-\gamma_{ij}}{2}} + p_{ji} \, {\frac{1-\gamma_{ji}}{2}} \Big]\right\},
\end{equation*}
and $d$ is the maximum shortest path length in the graph $(\mathcal{N},
\mathcal{E})$ [cf.\ Eq.\ (\ref{maxsp})].
\item[(b)] Let $\bar{x}$ be the limiting random variable of the sequences $%
\{x_i(k)\}$, $i\in \mathcal{N}$ generated by Eq.\ (\ref{beliefupdate}) (cf.\
Theorem \ref{convconsensus}). We have
\begin{equation*}
\Big|E[\bar{x}] - {\frac{1}{n}} \sum_{i=1}^n x_i(0)\Big| \le {\frac{1}{%
1-\delta}} \, {\frac{\sum_{i,j} p_{ij}\alpha_{ij}}{2n}}\, \|x(0)\|_{\infty}.
\end{equation*}
\end{itemize}

\label{delta_bd_thm}
\end{theorem}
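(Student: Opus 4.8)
The plan is to bound the single vector $\rho := \bar\pi - \tfrac1n e$, since by Theorem \ref{tildewlim}(b) and Eq.\ (\ref{equation to be bounded}) both parts of the statement reduce to controlling it. The natural starting point is the decomposition $\tilde W = T + D$ of Eq.\ (\ref{intdecomp}). Here $T$ is \emph{symmetric} and doubly stochastic (its off-diagonal entries are $T_{ij} = \tfrac1n(p_{ij}\tfrac{1-\gamma_{ij}}{2} + p_{ji}\tfrac{1-\gamma_{ji}}{2}) = T_{ji}$), so its unique stationary distribution is the uniform vector $\tfrac1n e$, it is regular by Lemma \ref{tranprimitive}, and its fundamental matrix $Y = \sum_{k\ge0}(T^k - T^\infty)$ with $T^\infty = \tfrac1n ee'$ is well defined by Theorem \ref{MC}(c). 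Meanwhile $D$ has zero row sums, because each $J_{ij}$ and $A_{ij}$ of Eq.\ (\ref{updatematrix}) is stochastic, and $\tilde W$ is regular with stationary distribution $\bar\pi$. I would therefore invoke the perturbation formula of Theorem \ref{pertbd} with base chain $T$ and perturbation $D$ to write $\rho' = \tfrac1n e' D Y (I-DY)^{-1}$; combining the stationarity relation $\bar\pi'\tilde W = \bar\pi'$ with the fundamental-matrix identity $(I-T)Y = I - T^\infty$ then collapses this to the cleaner exact identity $\rho' = \bar\pi' D Y$, which is the algebraic heart of the argument.

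Next I would factor the estimate into an \emph{influence factor} and a \emph{mixing factor}. For the influence factor, insert the explicit forms of $A_{ij}$ and $J_{ij}$ to compute $\bar\pi'(J_{ij}-A_{ij}) = [(\epsilon-\tfrac12)\bar\pi_i - \tfrac12\bar\pi_j](e_i-e_j)'$, a zero-sum row vector; summing over $(i,j)$ with weights $\tfrac1n p_{ij}\alpha_{ij}$ and using $\epsilon\le\tfrac12$ together with $\bar\pi$ being a probability vector ($\bar\pi_i+\bar\pi_j\le1$) should yield the factor $\tfrac{1}{2n}\sum_{i,j}p_{ij}\alpha_{ij}$. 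For the mixing factor I would exploit the symmetry of $T$: writing $\lambda_*$ for its second-largest eigenvalue modulus, the restriction of $Y$ to the zero-sum subspace $e^\perp$ is the geometric series $\sum_{k\ge0}T^k|_{e^\perp} = (I - T|_{e^\perp})^{-1}$, whose norm is at most $\sum_k \lambda_*^k = \tfrac{1}{1-\lambda_*}$. Since $w' := \bar\pi'D$ is zero-sum, $\rho' = w'Y$ only sees this restriction, so $\rho$ is bounded by $\tfrac{1}{1-\lambda_*}$ times the influence factor.

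To land on the stated constant it remains to show $\lambda_* \le \delta = (1-n\chi^d)^{1/d}$, i.e.\ $\lambda_*^d \le 1 - n\chi^d$. I would obtain this from a path-counting estimate in the spirit of Lemma \ref{posprob}: because the graph has diameter $d$ [cf.\ Eq.\ (\ref{maxsp})], every existing edge carries weight at least $\chi$ in $T$, and $T$ is lazy (its diagonal is positive, so walks can be padded to length exactly $d$), every entry of $T^d$ is bounded below by $\chi^d$; hence the ergodic (Dobrushin) coefficient of $T^d$ is at most $1 - n\chi^d$, and for the \emph{symmetric} matrix $T$ this coefficient dominates $\lambda_*^d$. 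This is where I expect the main difficulty to lie: making the entrywise lower bound on $T^d$ fully rigorous (controlling exact-length-$d$ walks via laziness) and, crucially, passing from a $d$-step contraction to a single per-step rate $\delta$ — it is precisely the symmetry of $T$ that permits this, and it is what makes the sharp factor $\tfrac{1}{1-\delta}$ attainable rather than the weaker $\tfrac{d}{1-\delta^d}$ that a raw $d$-step Dobrushin argument would give. With $\tfrac{1}{1-\lambda_*}\le\tfrac{1}{1-\delta}$, part (a) follows.

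Finally, part (b) should come directly from part (a). By Theorem \ref{tildewlim}(b) and Eq.\ (\ref{equation to be bounded}) we have $E[\bar x] - \tfrac1n\sum_i x_i(0) = \rho'x(0)$, and since $\rho'e = 0$ I would pair the bound on $\rho$ with the initial vector via Hölder's inequality, $|\rho'x(0)| \le \|\rho\|_1\,\|x(0)\|_\infty$, reading off the same influence-times-mixing estimate in the norm appropriate to $\rho$ to recover the factor $\tfrac{1}{1-\delta}\,\tfrac{1}{2n}\sum_{i,j}p_{ij}\alpha_{ij}$ multiplying $\|x(0)\|_\infty$. The only care needed here is to track the vector norm on $\rho$ so that it matches the $\|x(0)\|_\infty$ pairing, which is routine once the estimate of parts (a) is in hand.
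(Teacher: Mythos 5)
Your setup coincides with the paper's: the decomposition $\tilde W=T+D$, the zero row sums of $D$, the invocation of Theorem \ref{pertbd}, and the collapse to the exact identity $(\bar\pi-\tfrac1n e)'=\bar\pi'DY$ are precisely the paper's first steps. The divergence --- and the genuine gap --- is in how you control $Y$. You bound $Y$ restricted to $e^\perp$ by $\sum_k\lambda_*^k=\tfrac{1}{1-\lambda_*}$, but that is an $\ell_2$ operator-norm statement (it is exactly the route the paper reserves for Theorem \ref{l2bd_thm}); eigenvalues of the symmetric $T$ do not bound the $\ell_\infty$ operator norm of $T^k|_{e^\perp}$ or of $Y|_{e^\perp}$, so your mixing factor does not deliver the claimed $\ell_\infty$ conclusion. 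If you instead run everything in $\ell_2$ and finish with $\|\rho\|_\infty\le\|\rho\|_2$, your influence factor becomes $\|\bar\pi'(J_{ij}-A_{ij})\|_2\le\tfrac{1}{2}\|e_i-e_j\|_2=\tfrac{\sqrt2}{2}$ rather than $\tfrac12$, and you land on $\tfrac{1}{1-\lambda_*}\cdot\tfrac{\sum_{i,j} p_{ij}\alpha_{ij}}{\sqrt2\,n}$ --- off the stated constant by $\sqrt2$ (your Dobrushin step $\lambda_*^d\le\tau(T^d)\le1-n\chi^d$, hence $\lambda_*\le\delta$, is fine, but it repairs only the rate, not the norm mismatch). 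Part (b) has the mirror-image problem: H\"older gives $|\rho'x(0)|\le\|\rho\|_1\|x(0)\|_\infty$, and part (a) supplies only $\|\rho\|_\infty$; converting costs a factor of $n$ that you cannot afford.

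The paper sidesteps both issues by bounding the induced $\ell_\infty$ operator norm of $DY$ directly: write $DY=\sum_{k\ge0}DT^k$ (using the zero row sums of $D$ and $T^\infty=\tfrac1n ee'$), take any $z(0)$ with $\|z(0)\|_\infty=1$, and note that $DT^kz(0)=\tfrac1n\sum_{i,j}p_{ij}\alpha_{ij}(J_{ij}-A_{ij})T^kz(0)$ has sup norm at most $\tfrac{1}{2n}\sum_{i,j}p_{ij}\alpha_{ij}\,\bigl(M(k)-m(k)\bigr)$, where $M(k)-m(k)$ is the oscillation of $T^kz(0)$. Lemma \ref{dslim}(b) (built on the $d$-step contraction of Lemma \ref{lyapdec}) gives $M(k)-m(k)\le\delta^k\bigl(M(0)-m(0)\bigr)$ --- the per-step rate $\delta$ you were trying to extract spectrally, but obtained as a sup-norm contraction --- and summing the geometric series yields $\|DY\|_\infty\le\tfrac{1}{1-\delta}\,\tfrac{\sum_{i,j} p_{ij}\alpha_{ij}}{2n}$. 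Both parts then follow from this single operator bound together with $\|\bar\pi\|_1=1$: part (a) from $|\rho_k|=|\bar\pi'(DY)e_k|\le\max_{j,k}\bigl|[DY]_{jk}\bigr|$, and part (b) from $|\bar\pi'DYx(0)|\le\|DYx(0)\|_\infty\le\|DY\|_\infty\|x(0)\|_\infty$, with no $\ell_1$ bound on $\rho$ ever needed. In short: keep your identity $\rho'=\bar\pi'DY$, but replace the spectral control of $Y$ by the oscillation contraction of $T$ applied to $T^kz(0)$ before $D$ acts.
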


\begin{proof} (a) \ Recall that the mean interaction matrix can be
represented as
\[\tilde W = T+D,\]
[cf.\ Eq.\ (\ref{intdecomp})], i.e., $\tilde W$ can be viewed as a
perturbation of the social network matrix $T$ by influence matrix
$D$. By Lemma \ref{dslim}(a), the stationary distribution of the
Markov chain with transition probability matrix $T$ is given by
the uniform distribution ${1\over n} e$. By the definition of the
matrix $D$ [cf.\ Eq.\ (\ref{decomposition})] and the fact that the
matrices $A_{ij}$ and $J_{ij}$ are stochastic matrices with all
row sums equal to one [cf.\ Eq.\ (\ref{updatematrix})], it follows
that the sum of entries of each row of $D$ is equal to 0.
Moreover, by Theorem \ref{tildewlim}(a), the Markov Chain with
transition probability matrix $\tilde W$ is regular and has  a
stationary distribution $\bar{\pi}$. Therefore, we can use the
exact perturbation result given in Theorem \ref{pertbd} to write
the change in the stationary distributions ${1\over n}e$ and
$\bar{\pi}$ as
\begin{equation}
 \Big(\bar{\pi} - {1\over n}e\Big)' = {1\over n}e' DY (I-DY)^{-1},\label{exactdif}
\end{equation}
where $Y$ is the fundamental matrix of the Markov Chain with
transition probability matrix $T$, i.e.,
\begin{equation}\label{fundmat_def}
Y=\sum_{k=0}^\infty (T^k-T^{\infty}), \nonumber
\end{equation}
with $T^\infty = {1\over n} e e'$ [cf.\ Theorem \ref{MC}(c)].
Algebraic manipulation of Eq.\ (\ref{exactdif}) yields
\[ \Big(\bar{\pi} - {1\over n}e\Big)' = \bar \pi' DY,\]
implying that
\begin{equation}
\Big\|\bar{\pi}- {1\over n}e\Big\|_{\infty} \le
\|DY\|_{\infty},\label{normdif}
\end{equation}
where $\|DY\|_{\infty}$ denotes the matrix norm induced by the
$l_{\infty}$ vector norm.

We next obtain an upper bound on the matrix norm $\|DY\|_{\infty}$.
By the definition of the fundamental matrix $Y$, we have
\begin{equation}DY = \sum_{k=0}^\infty D(T^k-T^{\infty}) =
\sum_{k=0}^\infty D T^k,\label{fundmat}\end{equation} where the
second equality follows from the fact that the row sums of matrix
$D$ is equal to 0 and the matrix $T^\infty$ is given by $T^\infty =
{1\over n} e e'$.

Given any $z(0)\in \mathbb{R}^n$ with $\|z(0)\|_{\infty}=1$, let
$\{z(k)\}$ denote the sequence generated by the linear update rule
\[z(k) = T^kz(0)\qquad \hbox{for all }k\ge 0.\]
Then, for all $k\ge 0$, we have
\[DT^k z(0) = Dz(k),\]
which by the definition of the matrix $D$ [cf.\ Eq.\
(\ref{decomposition})] implies \begin{equation}DT^k z(0) = {1\over
n} \sum_{i,j} p_{ij} \alpha_{ij}
z^{ij}(k),\label{infbd}\end{equation} where the vector $z^{ij}(k)\in
\mathbb{R}^n$ is defined as
\[z^{ij}(k) = [J_{ij}-A_{ij}] z(k)\qquad \hbox{for all }i,j, \hbox{ and }k\ge0.\]
By the definition of the matrices $J_{ij}$ and $A_{ij}$ [cf.\ Eq.\
(\ref{updatematrix})], the entries of the vector $z^{ij}(k)$ are
given by \begin{equation}[z^{ij}(k)]_l = \left\{
\begin{array}{cc}
\Big({1\over 2}-\epsilon\Big) (z_j(k)-z_i(k)) & \text{if } l=i, \\
{1\over 2} (z_j(k)-z_i(k)) & \text{if } l=j,\\
0& \text{otherwise.}
\end{array}%
\right. \label{intvector}\end{equation} This implies that the vector
norm $\|z^{ij}(k)\|_\infty$ can be upper-bounded by
\[\|z^{ij}(k)\|_{\infty}  \le  {1\over 2} \Big[\max_{l\in {\cal N}}\, z_l(k)
- \min_{l\in {\cal N}} z_l(k)\Big] \qquad\hbox{for all }i,j, \hbox{
and }k\ge 0.\] Defining $M(k)= \max_{l\in {\cal N}}\, z_l(k)$ and
$m(k)=\min_{l\in {\cal N}}\, z_l(k)$ for all $k\ge 0$, this implies
that
\[\|z^{ij}(k)\|_{\infty}  \le  {1\over 2} (M(k)-m(k))\le {1\over 2} \delta^k\, (M(0)-m(0))
\qquad\hbox{for all }i,j, \hbox{ and }k\ge 0,\] where the second
inequality follows from Lemma \ref{dslim}(b) in Appendix C.
Combining the preceding relation with Eq.\ (\ref{infbd}), we obtain
\[\|DT^k z(0)\|_{\infty} \le  {1\over 2n} \left(\sum_{i,j} p_{ij} \alpha_{ij}\right) \delta^k\,
(M(0)-m(0)).\] By Eq.\ (\ref{fundmat}), it follows that
\begin{eqnarray*}\|DYz(0)\|_\infty \le \sum_{k=0}^\infty \|D T^kz(0)\|_\infty
\le \sum_{k=0}^\infty {1\over 2n} \left(\sum_{i,j} p_{ij}
\alpha_{ij}\right) \delta^k\, (M(0)-m(0)) \le{\sum_{i,j} p_{ij}
\alpha_{ij}\over 2n (1-\delta)},
 \end{eqnarray*}
where to get the last inequality, we used the fact that $0\le
\delta<1$ and $M(0)-m(0)\le 1$, which follows from
$\|z(0)\|_{\infty}=1$. Since $z(0)$ is an arbitrary vector with
$\|z(0)\|_{\infty}=1$, this implies that
\[\|DY\|_\infty =\min_{\{z\ |\ \|z\|_{\infty}=1\}}\|DYz\|_\infty \le {1\over 2n (1-\delta)}
\left(\sum_{i,j} p_{ij} \alpha_{ij}\right). \] Combining this bound
with Eq.\ (\ref{normdif}), we obtain
\[\Big\|\bar{\pi}-{1\over n} e\Big\|_{\infty} \le {1\over 1-\delta} \, {\sum_{i,j} p_{ij}
\alpha_{ij}\over 2n},\] establishing the desired relation.

\vskip .5pc

\noindent (b) \ By Lemma \ref{tildewlim}(b), we have
\[E[\bar{x}] = \bar{\pi}'x(0).\]
This implies that
\[\Big|E[\bar{x}] - {1\over n} \sum_{i=1}^n x_i(0)\Big| =
\Big|\bar{\pi}'x(0)- {1\over n} e'x (0)\Big| \le
\Big\|\bar{\pi}-{1\over n} e\Big\|_{\infty} \|x(0)\|_{\infty}.\] The
result follows by combining this relation with part (a).
\end{proof}

Before providing the intuition for the preceding theorem, we provide a
related bound on the $l_{2}$-norm of the difference between $\bar{\pi}$ and
the uniform distribution ${\frac{1}{n}}e$ in terms of the second largest
eigenvalue of the social network matrix $T$, and then return to the
intuition for both results.

\begin{theorem}
\label{l2bd_thm} Let $\bar{\pi}$ denote the consensus distribution (cf.\
Lemma \ref{tildewlim}). The $l_2$-norm of the difference between $\bar{\pi}$
and ${\frac{1}{n}}e$ is given by
\begin{equation*}
\Big\|\bar{\pi}-{\frac{1}{n}} e\Big\|_{2} \le {\frac{1}{1-\lambda_2(T)}} \ {%
\frac{\sum_{i,j} p_{ij}\alpha_{ij}}{n}},
\end{equation*}
where $\lambda_2(T)$ is the second largest eigenvalue of the matrix $T$
defined in Eq.\ (\ref{decomposition}).
\end{theorem}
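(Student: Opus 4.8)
The plan is to follow the same route as the proof of Theorem~\ref{delta_bd_thm}, replacing the $\ell_\infty$ estimates by $\ell_2$ (spectral) ones. I would start from the exact perturbation identity already extracted there, namely
$$\Big(\bar\pi - \tfrac1n e\Big)' = \bar\pi' DY,$$
where $Y$ is the fundamental matrix of the regular doubly stochastic chain $T$ and $D=\tilde W - T$ is the influence matrix with zero row sums. Transposing and taking Euclidean norms, and using that $\bar\pi$ is a probability vector so that $\|\bar\pi\|_2\le\|\bar\pi\|_1=1$, this gives
$$\Big\|\bar\pi - \tfrac1n e\Big\|_2 = \|\bar\pi' DY\|_2 \le \|\bar\pi\|_2\,\|DY\|_2 \le \|DY\|_2 \le \|D\|_2\,\|Y\|_2,$$
where $\|\cdot\|_2$ now denotes the spectral (operator) norm and we used its invariance under transposition. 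It then remains to bound the two factors $\|Y\|_2$ and $\|D\|_2$ separately.

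For $\|Y\|_2$ I would exploit the fact that $T$ is symmetric (since $T_{ij}=T_{ji}$) and doubly stochastic, hence orthogonally diagonalizable with real eigenvalues $1=\lambda_1>\lambda_2\ge\cdots\ge\lambda_n$, with $\frac{1}{\sqrt n}e$ the top eigenvector; regularity of $T$ guarantees $\lambda_2<1$ and $|\lambda_i|<1$ for $i\ge2$. Since $Y=\sum_{k\ge0}(T^k-T^\infty)$ with $T^\infty=\frac1n ee'$ [cf.\ Theorem~\ref{MC}(c)], the matrix $Y$ is diagonal in the eigenbasis of $T$: it annihilates $e$, and on each eigenvector orthogonal to $e$ it acts as $\sum_{k\ge0}\lambda_i^k=\frac{1}{1-\lambda_i}$. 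As $\frac{1}{1-\lambda}$ is positive and increasing in $\lambda$ for $\lambda<1$, the largest such value is attained at $\lambda_2$, so that $\|Y\|_2=\frac{1}{1-\lambda_2(T)}$.

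For $\|D\|_2$ I would use the decomposition $D=\frac1n\sum_{i,j}p_{ij}\alpha_{ij}(J_{ij}-A_{ij})$ together with the observation that each difference is rank one,
$$J_{ij}-A_{ij} = \Big[(\epsilon-\tfrac12)e_i - \tfrac12 e_j\Big](e_i-e_j)',$$
so its spectral norm equals $\big\|(\epsilon-\tfrac12)e_i-\tfrac12 e_j\big\|_2\,\|e_i-e_j\|_2 = \sqrt{(\epsilon-\tfrac12)^2+\tfrac14}\cdot\sqrt2\le1$, using $\epsilon\in(0,\tfrac12]$. The triangle inequality for the spectral norm then yields $\|D\|_2\le\frac1n\sum_{i,j}p_{ij}\alpha_{ij}$, and substituting the two factors into the chain of inequalities above produces exactly the claimed bound.

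The step I expect to require the most care is the eigenvalue computation for $\|Y\|_2$. One must justify that $Y$ is simultaneously diagonalizable with $T$ (which needs the symmetry of $T$ and the termwise convergence of the geometric series, i.e.\ $|\lambda_i|<1$ for $i\ge2$ from regularity), and then argue that the maximum of $\frac{1}{1-\lambda_i}$ over the non-principal eigenvalues is governed by $\lambda_2$ rather than by the most negative eigenvalue. This is precisely what allows the final bound to feature $\lambda_2(T)$ alone, rather than a quantity depending on $\min_i\lambda_i(T)$, and it is the essential advantage of passing to the $\ell_2$ norm over the elementwise summation used for the $\ell_\infty$ bound.
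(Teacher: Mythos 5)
Your proof is correct, but it reaches the bound by a genuinely different route from the paper's. Both arguments start from the identity $\big(\bar\pi-\tfrac1n e\big)'=\bar\pi'DY$ and reduce the problem to bounding $\|DY\|_2$, but the paper keeps $D$ and $Y$ coupled: it writes $DY=\sum_{k\ge0}DT^k$, feeds an arbitrary unit vector $z(0)$ through $z(k)=T^kz(0)$, bounds $\|[J_{ij}-A_{ij}]z(k)\|_2\le\|z(k)-\bar z e\|_2$ componentwise, contracts the part of $z(k)$ orthogonal to $e$ by a factor $\lambda_2(T)$ per step, and sums the geometric series. You instead split $\|DY\|_2\le\|D\|_2\,\|Y\|_2$ and evaluate the factors separately: $\|Y\|_2=\tfrac{1}{1-\lambda_2(T)}$ exactly, by diagonalizing the symmetric matrix $T$ and observing that $Y$ acts as $\tfrac{1}{1-\lambda_i}$ on each non-principal eigenvector, and $\|D\|_2\le\tfrac1n\sum_{i,j}p_{ij}\alpha_{ij}$ via the rank-one factorization of $J_{ij}-A_{ij}$ (which you compute correctly). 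The two routes yield the same constant. Your version is more modular and, in one respect, more careful: the paper's step $\|Tv\|_2\le\lambda_2(T)\|v\|_2$ on the orthogonal complement of $e$ is really a bound by $\max_{i\ge2}|\lambda_i(T)|$, which coincides with $\lambda_2(T)$ here only because the diagonal of $T$ is large enough (each $T_{ii}\ge\tfrac12$) to force all eigenvalues of $T$ to be nonnegative; your direct computation of $\|Y\|_2$ shows the stated bound holds with $\lambda_2(T)$ regardless, since $\tfrac{1}{1-\lambda}$ is increasing in $\lambda$ and the most negative eigenvalue is therefore harmless. The only thing you give up by decoupling $D$ from $Y$ is the possibility of exploiting cancellation between them, which the paper's argument does not use either.
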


\begin{proof} Following a similar argument as in the proof of
Theorem \ref{delta_bd_thm}, we obtain
\begin{equation}
\Big\|\bar{\pi}- {1\over n}e\Big\|_{2} \le \|DY\|_{2},\label{l2dif}
\end{equation}
where $\|DY\|_2$ is the matrix norm induced by the $l_{2}$ vector
norm. To obtain an upper bound on the matrix norm $\|DY\|_2$, we
 consider an initial vector $z(0)\in \mathbb{R}^n$ with $\|z(0)\|_2=1$
and the sequence generated by
\[z(k+1) = T z(k)\qquad \hbox{for all } k\ge 0.\]
Then, for all $k\ge 0$, we have \begin{equation}DT^k z(0) = {1\over
n} \sum_{i,j} p_{ij} \alpha_{ij}
z^{ij}(k),\label{sumcomp}\end{equation} where the entries of the
vector $z^{ij}(k)$ are given by Eq.\ (\ref{intvector}). We can
provide an upper bound on the $\|z^{ij}(k)\|_2^2$ as
\[\|z^{ij}(k)\|_2^2 = {1\over 2} (z_j(k)-z_i(k))^2  = {1\over 2} \Big((z_j(k)-\bar z)+(\bar z- z_i(k))\Big)^2, \]
where $\bar z={1\over n} \sum_{l=1}^n z_l(k)$ for all $k$ (note that
since $T$ is a doubly stochastic matrix, the average of the entries
of the vector $z(k)$ is the same for all $k$). Using the relation
$(a+b)^2\le 2 (a^2 + b^2)$ for any scalars $a$ and $b$, this yields
\begin{equation}\|z^{ij}(k)\|_2^2 \le \sum_{l=1}^n (z_l(k)-\bar z)^2 =
\|z(k)-\bar ze\|_2^2.\label{l2bdvec}\end{equation} We have
\[z(k+1)-\bar z e = Tz(k) -\bar z e = T\Big(z(k)-\bar z e\Big),\]
where the second equality follows from the stochasticity of the
matrix $T$, implying that $Te=e$. The vector $z(k)-\bar z e$ is
orthogonal to the vector $e$, which is the eigenvector corresponding
to the largest eigenvalue $\lambda_1=1$ of matrix $T$ (note that
$\lambda_1=1$ since $T$ is a primitive and stochastic matrix).
Hence, using the variational characterization of eigenvalues, we
obtain
\[\|z(k+1)-\bar ze\|_2^2\le (z(k)-\bar z e)'\, T^2\,  (z(k)-\bar z e) \le \lambda_2(T)^2 \|z(k)-\bar ze\|_2^2.\]
where $\lambda_2(T)$ is the second largest eigenvalue of matrix $T$,
which implies
\[\|z(k)-\bar ze\|_2^2 \le \Big(\lambda_2(T)^2\Big)^k \|z(0)-\bar ze\|_2^2\le \lambda_2(T)^{2k}.\]
Here the second inequality follows form the fact that $\|z(0)\|_2=1$
and $\bar z$ is the average of the entries of vector $z(0)$.
Combining the preceding relation with Eq.\ (\ref{l2bdvec}), we
obtain
\[\|z^{ij}(k)\|_2\le \lambda_2(T)^k\qquad \hbox{for all }k\ge 0.\]
By Eq.\ (\ref{sumcomp}), this implies that
\[\|DT^k z(0)\|_2 = {1\over
n} \Big(\sum_{i,j} p_{ij} \alpha_{ij}\Big)\lambda_2(T)^k \qquad
\hbox{for all }k\ge 0. \] Using the definition of the fundamental
matrix $Y$, we obtain
\[\|DYz(0)\|_2 \le \sum_{k=0}^\infty \|D T^kz(0)\|_2
\le \sum_{k=0}^\infty {1\over n} \Big(\sum_{i,j} p_{ij}
\alpha_{ij}\Big)\lambda_2(T)^k = {1\over 1-\lambda_2(T)}\
{\sum_{i,j} p_{ij} \alpha_{ij}\over n},
\]
for any vector $z(0)$ with $\|z(0)\|_2=1$. Combined with Eq.\
(\ref{l2dif}), this yields the desired result.
\end{proof}

Theorem \ref{l2bd_thm} characterizes the variation of the stationary
distribution in terms of the average influence, ${\frac{\sum_{i,j}p_{ij}%
\alpha _{ij}}{n}}$, and the second largest eigenvalue of the social network
matrix $T$, $\lambda _{2}(T)$. As is well known, the difference $1-\lambda
_{2}(T)$, also referred to as the \textit{spectral gap,} governs the rate of
convergence of the Markov Chain induced by the social network matrix $T$ to
its stationary distribution (see \cite{bremaud}). In particular, the larger $%
1-\lambda _{2}(T)$ is, the faster the $k^{th}$ power of the transition
probability matrix converges to the stationary distribution matrix (cf.\
Theorem \ref{MC}). When the Markov chain converges to its stationary
distribution rapidly, we say that the Markov chain is \textit{fast-mixing}.%
\footnote{%
We use the terms \textquotedblleft spectral gap of the Markov chain" and
\textquotedblleft spectral gap of the (induced) graph", and
\textquotedblleft fast-mixing Markov chain" and \textquotedblleft
fast-mixing graph" interchangeably in the sequel.}

In this light, Theorem \ref{l2bd_thm} shows that, in a fast-mixing
graph, given a fixed average influence
${\frac{\sum_{i,j}p_{ij}\alpha _{ij}}{n}}$, the consensus
distribution is \textquotedblleft closer\textquotedblright\ to the
underlying $\theta ={\frac{1}{n}}\sum_{i=1}^{n}x_{i}(0)$ and the
extent of misinformation is limited. This is intuitive. In a
fast-mixing social network graph, there are several connections
between any pair of agents. Now for any forceful agent, consider
the set of agents who will have some influence on his beliefs.
This set itself is connected to the rest of the agents and thus
obtains information from the rest of the society. Therefore, in a
fast-mixing graph (or in a society represented by such a graph),
the beliefs of forceful agents will themselves be moderated by the
rest of the society before they spread widely. In contrast, in a
slowly-mixing graph, we can have a high degree of clustering
around forceful agents, so that forceful agents get their (already
limited) information intake mostly from the same agents that they
have influenced. If so, there will be only a very indirect
connection from the rest of the society to the beliefs of forceful
agents and forceful agents will spread their information widely
before their opinions also adjust. As a result, the consensus is
more likely to be much closer to the opinions of
forceful agents, potentially quite different from the true underlying state $%
\theta $.

This discussion also gives intuition for Theorem \ref{delta_bd_thm} since the constant $%
\delta $ in that result is closely linked to the mixing properties of the
social network matrix and the social network graph. In particular, Theorem %
\ref{delta_bd_thm} clarifies that $\delta $ is related to the maximum
shortest path and the minimum probability of (indirect) communication
between any two agents in the society. These two notions also crucially
influence the spectral gap $1-\lambda _{2}(T_{n})$, which plays the key role
in Theorem \ref{l2bd_thm}.

These intuitions are illustrated in the next example, which shows how in a
certain class of graphs, misinformation becomes arbitrarily small as the
social network grows.

\begin{example}
\textbf{(Expander Graphs)} Consider a sequence of social network graphs $%
\mathcal{G}_n = (\mathcal{N}_n,\mathcal{A}_n)$ induced by symmetric $n\times
n$ matrices $T_n$ [cf.\ Eq.\ (\ref{undirgraph})]. Assume that this sequence
of graphs is a \emph{family of expander graphs}, i.e., there exists a
positive constant $\gamma>0$ such that the spectral gap $1-\lambda_2(T_n)$
of the graph is uniformly bounded away from 0, independent of the number of
nodes $n$ in the graph, i.e.,
\begin{equation*}
\gamma \le 1-\lambda_2(T_n)\qquad \hbox{for all }n,
\end{equation*}
(see \cite{fanchung}) As an example, Internet has been shown to be an
expander graph under the preferential connectivity random graph model (see
\cite{MPS} and \cite{Jacksonbook}). Expander graphs have high connectivity
properties and are fast mixing.

We consider the following influence structure superimposed on the social
network graph $\mathcal{G}_{n}$. We define an agent $j$ to be \textit{%
locally forceful} if he influences a constant number of agents in
the society, i.e., his total influence, given by
$\sum_{i}p_{ij}\alpha _{ij}$, is a constant independent of $n$. We
assume that there is a constant number of locally forceful agents.
Let $\bar{\pi}_{n}$ denote the stationary distribution of the Markov
Chain with transition probability matrix given by the mean
interaction matrix $\tilde{W}$ [cf.\ Eq.\ (\ref{meanint})]. Then, it
follows from Theorem \ref{l2bd_thm} that
\begin{equation*}
\Big\|\bar{\pi}_{n}-{\frac{1}{n}}e\Big\|_{2}\rightarrow 0\quad \hbox{as}%
\quad n\rightarrow \infty .
\end{equation*}%
This shows that if the social network graph is fast-mixing and there
is a constant number of locally forceful agents, then the difference
between the consensus belief and the average of the initial beliefs
vanishes. Intuitively, in expander graphs, as $n$ grows large, the
set of individuals who are the source of information of forceful
agents become highly connected, and thus rapidly inherit the average
of the information of the rest of the society. Provided that the
number of forceful agents and the impact of each forceful agent do
not grow with $n$, then their influence becomes arbitrarily small as
$n$ increases.
\end{example}


\section{Connectivity of Forceful Agents and Misinformation\label%
{local-misinfo}}

The results provided so far exploit the decomposition of the evolution of
beliefs into the social network component (matrix $T$) and the influence
component (matrix $D$). This decomposition does not exploit the interactions
between the structure of the social network and the location of forceful
agents within it. For example, forceful agents located in different parts of
the same social network will have different impacts on the extent of
misinformation in the society, but our results so far do not capture this
aspect. The following example illustrates these issues in a sharp way.

\begin{example}
\label{bottleneck_ex}
\begin{figure}[tbp]
\centering
\includegraphics[width=.8\textwidth]{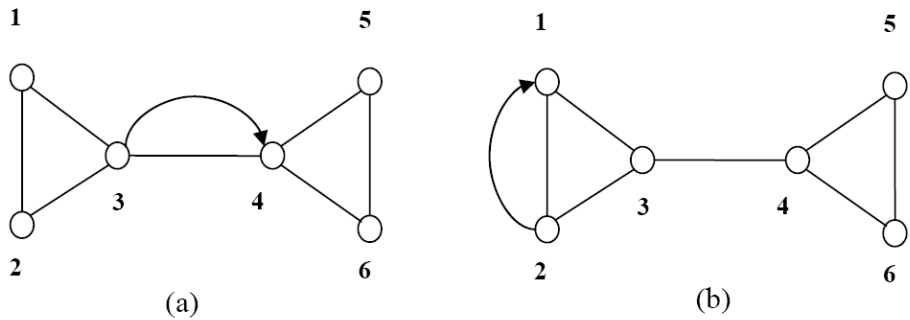}\newline
\caption{Impact of location of forceful agents on the stationary
distribution (a) Misinformation over the bottleneck (b) Misinformation
inside a cluster}
\label{cluster_fig}
\end{figure}

Consider a society consisting of six agents and represented by the
(undirected) social network graph shown in Figure \ref{cluster_fig}. The
weight of each edge $\{i,j\}$ is given by
\begin{equation*}
\lbrack T]_{ij}={\frac{1}{2n}}(p_{ij}+p_{ji}),
\end{equation*}%
where, for illustration, we choose $p_{ij}$ to be inversely
proportional to the degree of node $i$, for all $j$. The self-loops
are not shown in Figure \ref{cluster_fig}.

We distinguish two different cases as illustrated in Figure \ref{cluster_fig}%
. In each case, there is a single forceful agent and $\alpha =1/2$. This is
represented by a directed forceful link. The two cases differ by the
location of the forceful link, i.e., the forceful link is over the
bottleneck of the connectivity graph in part (a) and inside the left cluster
in part (b). The corresponding consensus distributions can be computed as
\begin{equation*}
\pi _{a}=\frac{1}{6}(1.25,1.25,1.25,0.75,0.75,0.75)^{\prime },\quad \pi _{b}=%
\frac{1}{6}(0.82,1.18,1,1,1,1)^{\prime }.
\end{equation*}%
Even though the social network matrix $T$ (and the corresponding graph) is
the same in both cases, the consensus distributions are different. In
particular, in part (a), each agent in the left cluster has a higher weight
compared to the agents in the right cluster, while in part (b), the weight
of all agents, except for the forceful and influenced agents, are equal and
given by 1/6. This is intuitive since when the forceful link is over a
bottleneck, the misinformation of a forceful agent can spread and influence
a larger portion of the society before his opinions can be moderated by the
opinions of the other agents.
\end{example}

This example shows how the extent of spread of misinformation varies
depending on the location of the forceful agent. The rest of this section
provides a more detailed analysis of how the location and connectivity of
forceful agents affect the formation of opinions in the network. We proceed
as follows. First, we provide an alternative exact characterization of
excess influence using mean first passage times. We then\ introduce the
concept of \emph{essential edges}, similar to the\ situation depicted in
Example \ref{bottleneck_ex}, and provide sharper exact results for graphs in
which forceful links coincide with essential edges. We then generalize these
notions to more general networks by introducing the concept of \emph{%
information bottlenecks}, and finally, we develop new techniques for
determining tighter upper bounds on excess influence by using ideas from
graph clustering.

\subsection{Characterization in Terms of Mean First Passage Times}

Our next main result provides an exact characterization of the excess
influence of agent $i$ in terms of the mean passage times of the Markov
chain with transition probability matrix $T$. This result, and those that
follow later in this section, will be useful both to provide more
informative bounds on the extent of misinformation and also to highlight the
sources of excess influence for certain agents in the society.

We start with presenting some basic definitions and relations (see Chapter 2
of \cite{aldous}).

\begin{definition}
Let $(X_t, t =0, 1, 2, \ldots)$ denote a discrete-time Markov chain. We
denote the \emph{first hitting time} of state $i$ by
\begin{equation*}
T_i = \inf\, \{t\geq 0\ |\ X_t = i\}.
\end{equation*}
We define the \emph{mean first passage time} from state $i$ to state $j$ as
\begin{equation*}
m_{ij} = \mathbb{E}[T_j\ |\ X_0 = i],
\end{equation*}
and the \emph{mean commute time} between state $i$ and state $j$ as $%
m_{ij}+m_{ji}$. Moreover, we define the \emph{mean first return time} to a
particular state $i$ as
\begin{equation*}
m_{i}^+ = \mathbb{E}[T_i^+\ |\ X_0 = i],
\end{equation*}
where
\begin{equation*}
T_i^+ = \inf\, \{t\geq 1\ |\ X_t = i\}.
\end{equation*}
\end{definition}

\begin{lemma}
\label{facts_lemma} \emph{Consider a Markov chain with transition
matrix $Z$ and stationary distribution $\pi$. We have:
\begin{itemize}
  \item[(i)] The mean first return time from state $i$ to $i$ is given by $m_i^+ = 1/{\pi_i}.$
  \item[(ii)] The mean first passage time from $i$ to $j$ is given by
  $$m_{ij} = \frac{Y_{jj}-Y_{ij}}{\pi_j},$$
  where $Y =\sum_{k=0}^\infty (Z^k-Z^{\infty})$ is the fundamental matrix of the Markov chain.
\end{itemize}}
\end{lemma}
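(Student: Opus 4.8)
The plan is to prove both identities using the fundamental matrix $Y = \sum_{k=0}^\infty (Z^k - Z^\infty)$ and the standard recurrence relations for mean first passage times. The starting point is the first-step analysis recurrence: conditioning on the first transition out of state $i$, one has $m_{ij} = 1 + \sum_{l \ne j} Z_{il} m_{lj}$ for $i \ne j$, while the return time satisfies $m_j^+ = 1 + \sum_{l} Z_{jl} m_{lj}$ (with the convention $m_{jj}=0$ inside the passage-time matrix). Writing $M$ for the matrix with entries $m_{ij}$ (and zeros on the diagonal), these relations can be packaged into a single matrix equation of the form $M = E + Z(M - M_{\mathrm{dg}})$, where $E$ is the all-ones matrix and $M_{\mathrm{dg}}$ is the diagonal matrix built from the return times; I would carry the bookkeeping carefully since the diagonal correction is exactly what distinguishes $m_j^+$ from the (zero) diagonal of $M$.

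For part (i), I would establish $m_i^+ = 1/\pi_i$ directly from the stationarity of $\pi$. The clean argument is to take the return-time recurrence $m_j^+ = 1 + \sum_l Z_{jl} m_{lj}$ and combine it with the passage-time recurrence, then multiply through on the left by the stationary row vector $\pi'$ and use $\pi' Z = \pi'$ (Theorem \ref{MC}(b)). The terms involving the off-diagonal $m_{lj}$ telescope against each other because of the invariance of $\pi$, leaving $\pi_j m_j^+ = 1$, which is the claim. This is the shorter and more self-contained half.

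For part (ii), the plan is to solve the passage-time recurrence explicitly in terms of $Y$. Substituting $m_j^+ = 1/\pi_j$ from part (i) lets me rewrite the system as $(I - Z)M = E - D_\pi^{-1}$ up to the appropriate diagonal adjustment, where $D_\pi = \mathrm{diag}(\pi)$; here I must track the diagonal terms so that the equation holds including on the diagonal of $M$. I would then exploit the key algebraic facts about $Y$: namely $(I - Z)Y = I - Z^\infty = I - e\pi'$ and $\pi' Y = 0$ (both read off from $Y = \sum_k (Z^k - Z^\infty)$ and $Z^\infty = e\pi'$). Applying $Y$ as a pseudo-inverse of $I - Z$ to the equation and fixing the one-dimensional ambiguity (the kernel of $I-Z$ is spanned by $e$) using the known zero diagonal of $M$ should yield $m_{ij} = (Y_{jj} - Y_{ij})/\pi_j$. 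Checking the diagonal case $i=j$ is a useful consistency test, since it forces $0 = (Y_{jj}-Y_{jj})/\pi_j$ automatically.

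The main obstacle I anticipate is the diagonal bookkeeping in the matrix recurrence: the quantity $m_{jj}$ means two different things in the two relations (it is $0$ in the passage-time matrix but $m_j^+ \ne 0$ as a return time), and getting the singular system $(I-Z)M = \cdots$ consistent on the diagonal requires inserting the correct diagonal correction term before inverting via $Y$. Once that correction is pinned down by part (i), applying the identities $(I-Z)Y = I - e\pi'$ and $\pi'Y = 0$ is routine linear algebra, so the entire difficulty is concentrated in setting up the recurrence correctly rather than in solving it.
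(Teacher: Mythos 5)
Your proof is correct, but note that the paper does not actually prove Lemma \ref{facts_lemma}: it is quoted as a standard fact, with a pointer to Chapter 2 of Aldous and Fill, so there is no in-paper argument to compare against. Your self-contained derivation is essentially the classical Kemeny--Snell one, and it goes through. Part (i) is exactly the stationarity/telescoping argument: multiplying the first-step recurrences by $\pi_i$, summing over $i$, and using $\pi'Z=\pi'$ (Theorem \ref{MC}(b)) cancels all off-diagonal passage times and leaves $\pi_j m_j^+=1$. For part (ii), with the zero-diagonal convention the system is exactly $(I-Z)M = E - D_\pi^{-1}$ with no further diagonal adjustment needed, since $[(I-Z)M]_{jj} = -\sum_l Z_{jl}m_{lj} = 1 - m_j^+ = 1-1/\pi_j$ by part (i); left-multiplying by $Y$ and using $Y(I-Z)=I-e\pi'$ and $Ye=0$ gives $M = e\pi'M - YD_\pi^{-1}$, and evaluating at $i=j$ pins down $(\pi'M)_j = Y_{jj}/\pi_j$, hence $m_{ij}=(Y_{jj}-Y_{ij})/\pi_j$. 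One small caution: the intermediate packaging $M = E + Z(M-M_{\mathrm{dg}})$ is only an identity if $M$ carries the return times $m_j^+$ on its diagonal and $M_{\mathrm{dg}}$ is the diagonal part of that same matrix (the Kemeny--Snell convention); with your zero-diagonal $M$ and $M_{\mathrm{dg}}=\mathrm{diag}(m_j^+)$ it fails by a term $Z_{ij}m_j^+$. Since you explicitly flag the diagonal bookkeeping as the delicate point and the equation you actually invert is the correct one, this is a presentational slip rather than a gap.
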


We use the relations in the preceding lemma between the fundamental matrix
of a Markov chain and the mean first passage times between states, to
provide an exact characterization of the excess influence of agent $k$.

\begin{theorem}
\label{sensitivity_thm} Let $\bar{\pi}$ denote the consensus distribution.
We have:
\begin{itemize}
\item[(a)] For every agent $k$
\begin{equation*}  \label{exact_sensitivity}
\bar \pi_k - \frac{1}{n} = \frac{1}{2n^2}\sum_{i,j} p_{ij}\alpha_{ij}\Big( %
(1-2\epsilon)\bar \pi_i +\bar \pi_j\Big)\big(m_{ik} - m_{jk}\big).
\end{equation*}
\item[(b)] Let $\mathcal{A_{I}}$ denote the set of edges over which there is a forceful link,
i.e.,
\begin{equation*}
\mathcal{A_{I}}=\Big\{\{i,j\}\in \mathcal{A}\ |\ \alpha _{ij}>0\hbox{ or }%
\alpha _{ji}>0\Big\}.
\end{equation*}%
Assume that for any $\{i,j\},\{k,l\}\in \mathcal{A_{I}}$, we have $%
\{i,j\}\cap \{k,l\}=\emptyset $. Then,
\begin{equation}
\bar{\pi}_{k}-\frac{1}{n}=\frac{1}{n^{3}}\sum_{i,j}\frac{p_{ij}\alpha
_{ij}(1-\epsilon )}{1-\zeta _{ij}/n^{2}}(m_{ik}-m_{jk}),
\label{exact_sensitivity_disjoint}
\end{equation}
\end{itemize}
where
\begin{equation*}
\zeta_{ij} = \zeta_{ji}= \Big[\big({\frac{1}{2}}+\epsilon\big)p_{ij}
\alpha_{ij} - {\frac{1}{2}} p_{ji} \alpha_{ji}\Big]m_{ij} - \Big[{\frac{1}{2}%
} p_{ij} \alpha_{ij} - \big({\frac{1}{2}}+\epsilon\big) p_{ji} \alpha_{ji}%
\Big]m_{ji},
\end{equation*}
and $m_{ij}$ is the mean first passage time from state $i$ to state
$j$ of a Markov chain with transition matrix given by the social
network matrix $T$ [cf.\ Eq.\ (\ref{decomposition})].
\end{theorem}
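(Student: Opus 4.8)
The plan is to obtain part (a) as a short perturbation computation, and part (b) by solving, edge-by-edge, the self-consistent relations that part (a) implies once the disjointness hypothesis is brought to bear.

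For part (a), I would begin from the exact identity $\big(\bar\pi-\tfrac1n e\big)'=\bar\pi' DY$ established in the proof of Theorem~\ref{delta_bd_thm} (obtained by rearranging the Schweitzer formula of Theorem~\ref{pertbd}, where $Y$ is the fundamental matrix of $T$ and $\tfrac1n e$ is its stationary distribution by Lemma~\ref{dslim}(a)). Reading off coordinate $k$ gives $\bar\pi_k-\tfrac1n=\bar\pi'DYe_k$. I would then insert the rank-one form of each summand of $D$: from Eq.~(\ref{updatematrix}) one computes $J_{ij}-A_{ij}=\big[(\epsilon-\tfrac12)e_i-\tfrac12 e_j\big](e_i-e_j)'$, so that $\bar\pi'(J_{ij}-A_{ij})Ye_k=\big[(\epsilon-\tfrac12)\bar\pi_i-\tfrac12\bar\pi_j\big](Y_{ik}-Y_{jk})$. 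Finally, Lemma~\ref{facts_lemma}(ii) with $\pi_j=\tfrac1n$ converts fundamental-matrix entries into passage times, $Y_{ik}-Y_{jk}=-\tfrac1n(m_{ik}-m_{jk})$. Carrying the prefactor $\tfrac1n$ from the definition of $D$ in Eq.~(\ref{decomposition}) and using $-\big[(\epsilon-\tfrac12)\bar\pi_i-\tfrac12\bar\pi_j\big]=\tfrac12\big[(1-2\epsilon)\bar\pi_i+\bar\pi_j\big]$ yields the stated formula; this part is essentially bookkeeping.

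For part (b), the observation is that the right-hand side of (a) is still implicit, since it contains the unknown weights $\bar\pi_i,\bar\pi_j$, and the disjointness of $\mathcal{A_I}$ should let me solve for the combination $a_{ij}:=(1-2\epsilon)\bar\pi_i+\bar\pi_j$ in closed form. For a forceful edge $\{i,j\}$ I would specialize (a) to $k=i$ and to $k=j$; using $m_{ii}=m_{jj}=0$, the two links $(i,j),(j,i)$ on this edge contribute a single net coefficient $W_{ij}=p_{ij}\alpha_{ij}a_{ij}-p_{ji}\alpha_{ji}b_{ij}$ (with $b_{ij}=(1-2\epsilon)\bar\pi_j+\bar\pi_i$) multiplying $-m_{ji}$ and $+m_{ij}$ respectively. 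Substituting the resulting expressions $\bar\pi_i-\tfrac1n=-\tfrac{W_{ij}}{2n^2}m_{ji}$ and $\bar\pi_j-\tfrac1n=\tfrac{W_{ij}}{2n^2}m_{ij}$ back into the definitions of $a_{ij},b_{ij}$ closes into a single scalar equation for $W_{ij}$, whose solution carries the factor $(1-\zeta_{ij}/n^2)^{-1}$ appearing in the theorem; feeding $a_{ij}=\tfrac{2(1-\epsilon)}{n}(1-\zeta_{ij}/n^2)^{-1}$ back into (a) then produces Eq.~(\ref{exact_sensitivity_disjoint}).

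The main obstacle is the coupling between distinct forceful edges. When (a) is specialized to $k=i$, the sum also picks up terms $p_{i'j'}\alpha_{i'j'}\,(\cdots)\,(m_{i'i}-m_{j'i})$ coming from every other edge $\{i',j'\}\in\mathcal{A_I}$, and these passage-time differences need not vanish; a priori the endpoint equations thus form one large coupled linear system rather than independent per-edge blocks. The heart of the argument is to use the hypothesis $\{i,j\}\cap\{i',j'\}=\emptyset$ to show that these cross contributions do not affect the particular combination $a_{ij}$ attached to edge $\{i,j\}$, so that the scalar solve for $W_{ij}$ above is in fact exact; equivalently, in the Woodbury form $\bar\pi'=\tfrac1n e'(I-DY)^{-1}$ one must check that the rank-one generators of $DY$ belonging to disjoint edges decouple for the purpose of computing $\bar\pi_i,\bar\pi_j$. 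I expect this decoupling, together with the careful tracking of the asymmetric $(\tfrac12+\epsilon)$ versus $\tfrac12$ weights that renders $\zeta_{ij}=\zeta_{ji}$, to be the delicate step; the remainder is elementary linear algebra in the two unknowns per edge.
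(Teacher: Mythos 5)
Your part~(a) is correct and coincides with the paper's own argument: the identity $(\bar\pi-\tfrac1n e)'=\bar\pi'DY$ extracted from Theorem~\ref{pertbd}, the rank-one computation of $J_{ij}-A_{ij}$ from Eq.~(\ref{updatematrix}), and the conversion $Y_{jk}-Y_{ik}=\tfrac1n(m_{ik}-m_{jk})$ via Lemma~\ref{facts_lemma}(ii) are exactly the three steps the paper performs, and your sign bookkeeping is right. No issues there.

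Part~(b) contains a genuine gap, which you have located precisely but not closed. The entire content of Eq.~(\ref{exact_sensitivity_disjoint}) is the claim that the combination $a_{ij}=(1-2\epsilon)\bar\pi_i+\bar\pi_j$ equals $\tfrac{2(1-\epsilon)/n}{1-\zeta_{ij}/n^2}$, i.e., depends only on the data of the single edge $\{i,j\}$. Your per-edge two-unknown solve establishes this only if, when you specialize (a) to $k=i$ and $k=j$, the contributions of every other forceful edge $\{i',j'\}$ vanish (or cancel in the combination $a_{ij}$). They do not vanish termwise: the contribution of $\{i',j'\}$ to $\bar\pi_i-\tfrac1n$ is $\tfrac{1}{2n^2}\big[p_{i'j'}\alpha_{i'j'}a_{i'j'}(m_{i'i}-m_{j'i})+p_{j'i'}\alpha_{j'i'}a_{j'i'}(m_{j'i}-m_{i'i})\big]$, and $m_{i'i}-m_{j'i}$ is generically nonzero even for disjoint edges (disjointness is a statement about endpoints, not about passage times). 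So the endpoint equations really are one coupled linear system, and asserting that you "expect this decoupling" reproduces the statement of the hard step rather than its proof. The paper's own route is different in form but faces the same issue at the same place: it decomposes $D=\sum_{\{i,j\}\in\mathcal{A_I}}(D_{ij}+D_{ji})$ as in Eq.~(\ref{disjoint-inf}), uses that each block $(D_{ij}+D_{ji})Y$ is rank one so that its square is a scalar multiple of itself (the scalar producing $\zeta_{ij}$ through $m_{ij},m_{ji}$), and then resums the Neumann series for $DY(I-DY)^{-1}$; in that language your coupling terms are exactly the cross products $(D_{ij}+D_{ji})Y(D_{i'j'}+D_{j'i'})Y$, which the paper disposes of in the step ``it can be seen that $DY(I-DY)^{-1}=\sum_{i,j}(1-\zeta_{ij}/n^2)^{-1}D_{ij}Y$.'' Any complete proof, along your route or the paper's, must supply the argument for why these cross terms can be dropped; until you do, the scalar solve for $W_{ij}$ is not justified, and the remaining algebra (which would indeed deliver the stated $\zeta_{ij}$) rests on an unproven premise.
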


\begin{proof} (a) \ Following the same line of argument as in the proof of Theorem \ref{delta_bd_thm}, we
can use the perturbation results of Theorem \ref{pertbd} to write
the excess influence of agent $k$ as
\begin{equation}\label{pertbd_eq}
\bar \pi_k - \frac{1}{n} = \bar \pi' D [Y]^k,
\end{equation}
where $Y$ is the fundamental matrix of a Markov chain with transition matrix $T$. Using
(\ref{updatematrix}), and the definition of $D$ in (\ref{decomposition}) we have
\[\Big[D[Y]^k\Big]_l = \sum_{i,j} \frac{p_{ij}\alpha_{ij}}{n} \left\{
\begin{array}{cc}
\left({1\over 2} - \epsilon \right) (Y_{jk}-Y_{ik}) & \text{if } l=i, \\
{1\over 2}(Y_{jk}-Y_{ik}) & \text{if } l=j,\\
0& \text{otherwise.}
\end{array}%
\right. \]
Hence, we can write right-hand side of Eq.\
(\ref{pertbd_eq}) as follows:
\begin{equation}\label{pertbd_eq_k}
\bar \pi_k - \frac{1}{n} = \sum_{i,j} \frac{p_{ij}\alpha_{ij}}{2n}\big( (1-2\epsilon)\bar \pi_i
+\bar \pi_j\big)\big(Y_{jk} - Y_{ik}\big).
\end{equation}
By Lemma \ref{facts_lemma}(ii), we have
\begin{equation}\label{mpt_fund}
Y_{jk} - Y_{ik}  = \frac{1}{n}(m_{ik} - m_{jk}),
\end{equation}
where $Y$ is the fundamental matrix of the Markov chain with
transition matrix $T$. The desired result follows by substituting
the preceding relation in Eq.\ (\ref{pertbd_eq_k}).

\vskip .5pc

\noindent (b) \  In view of the assumption that all edges in $\cal A_I$ are pairwise disjoint, the
perturbation matrix $D$ decomposes into disjoint blocks, i.e.,
\begin{equation}
D = \sum_{\{i,j\} \in \cal A_I} D_{ij}+D_{ji}, \quad \textrm{where } D_{ij} =\frac{ p_{ij}
\alpha_{ij} }{n}\big[J_{ij}-A_{ij}\big].\label{disjoint-inf}\end{equation} For each edge $\{i,j\}
\in \cal A_I$, it is straightforward to show that
$$\Big((D_{ij}+D_{ji})Y\Big)^2  = \Big(1- \frac{\zeta_{ij}}{n^2}\Big)(D_{ij}+D_{ji})Y.$$
Using the decomposition in Eq.\ (\ref{disjoint-inf}) and the preceding relation, it can be seen
that
$$DY(I-DY)^{-1} = \sum_{i,j}\Big(1- \frac{\zeta_{ij}}{n^2}\Big)^{-1}
D_{ij}Y.$$ Combined with the exact perturbation result in Theorem \ref{pertbd}, this implies that
\begin{eqnarray}\label{pert_disjoint_eq}
\bar \pi_k - {1 \over n} &=& {1 \over n}[e'DY(I-DY)^{-1}]_k \nonumber \\
                          &=& {1 \over n}\sum_{i,j} \Big(1- \frac{\zeta_{ij}}{n^2}\Big)^{-1}
                          [e'D_{ij}Y]_k \nonumber \\
                          & = & \sum_{i,j} \frac{\frac{ p_{ij} \alpha_{ij} }{n^2} (1-\epsilon)}{1- \zeta_{ij}/{n^2}} (Y_{jk} -
                          Y_{ik}). \nonumber
\end{eqnarray}
The main result follows by substituting Eq.\ (\ref{mpt_fund}) in the above equation.
\end{proof}

Part (a) of Theorem \ref{sensitivity_thm} provides an exact expression for
the excess influence of agent $k$ as a function of the mean first passage
times from agent (or state) $k$ to the forceful and influenced agents. The
excess influence of each agent therefore depends on the relative \emph{%
distance} of that agent to the forceful and the influenced agent. To
provide an intuition for this result, let us consider the special
case in which there is a single forceful link $(j,i)$ in the society
(i.e., only one pair of agents $i$ and $j$ with $\alpha _{ij}>0$)
and thus a single forceful agent $j $. Then for any agent $k$, their
only source of excess influence can come from their (potentially
indirect) impact on the beliefs of the forceful agent $j$. This is
why $m_{jk}$, which, loosely speaking, measures the distance between
$j$ and $k$, enters negatively in to the expression for the excess
influence of agent $k$. In addition, any agent who meets
(communicates) with agent $i$ with a high probability will be
indirectly influenced by the opinions of the forceful agent $j$.
Therefore, the excess influence of agent $k$ is increasing in his
distance to $i$, thus in $m_{ik}$. In particular, in the extreme
case where $m_{ik}$ is small, agent $k$ will have negative excess
influence (because he is very close to the heavily \textquotedblleft
influenced\textquotedblright\ agent $i$) and in the polar extreme,
where $m_{jk}$ is small, he will have positive excessive influence
(because his views will be quickly heard by the forceful agent $j$).
The general expression in part (a) of the theorem simply generalizes
this reasoning to general social networks with multiple forceful
agents and several forceful links.

Part (b) provides an alternative expression [cf.\ Eq.\
(\ref{exact_sensitivity_disjoint})], with a similar intuition for
the special case in which all forceful links are disjoint. The main
advantage of the expression in part (b) is that, though more
complicated, is not in terms of the expected consensus distribution $\bar{\pi%
}$ (which is endogenous). Disjoint forceful link property in part (b) is
also useful because it enables us to isolate the effects of the forceful
agents. The parameter $\zeta _{ij}$ in Eq.\ (\ref{exact_sensitivity_disjoint}%
) captures the asymmetry between the locations of agents $i$ and $j$ in the
underlying social network graph. Although the expression for excess influence in part (a) of Theorem \ref%
{sensitivity_thm} is a function of the consensus distribution
$\bar{\pi}$, each element of this vector (distribution) can be
bounded by 1 to obtain an upper bound for the excess influence of
agent $k$.

Using the results in Theorem \ref{sensitivity_thm}, the difference between
the consensus distributions discussed in Example \ref{bottleneck_ex} can be
explained as follows. In Example \ref{bottleneck_ex}(a), the mean first
passage time from agent 4 to any agent $k$ in the left cluster is strictly
larger than that of agent 3 to agent $k$, because every path from agent 4 to
the left cluster should pass through agent 3. Therefore, $m_{4k}>m_{3k}$ for
$k=1,2,3$, and agents in the left cluster have a higher consensus weight. In
Example \ref{bottleneck_ex}(b), due to the symmetry of the social network
graph, the mean first passage times of agents 1 and 2 to any agent $k\neq 1,2
$ are the same, hence establishing by Theorem \ref{sensitivity_thm} the
uniform weights in the consensus distribution.

In the following we study the effect of the location of a forceful link on
the excess influence of each agent by characterizing the relative mean first
passage time $|m_{ik}-m_{jk}|$, in terms of the properties of the social network graph.


\subsection{Forceful Essential Edges\label{ess_sec}}

In this subsection, we provide an exact characterization of the excess
influence of agent $k$ explicitly in terms of the properties of the social
network graph. We focus on the special case when the undirected edge between
the forceful and the influenced agent is \emph{essential} for the social
network graph, in the sense that without this edge the graph would be
disconnected. We refer to such edges as \emph{forceful essential edges}.
Graphs with forceful essential edges approximate situations in which a
forceful agent, for example a media outlet or political leader, itself
obtains all of its information from a tightknit community.

We first give the definition of an essential edge of an undirected graph.

\begin{definition}
\label{ess_def} Let $\mathcal{G}=(\mathcal{N,A})$ be an undirected graph. An
edge $\{i,j\}\in \mathcal{A}$ is an \emph{essential} edge of the graph $%
\mathcal{G}=(\mathcal{N,A})$ if its removal would partition the set of nodes
into two disjoint sets $N(i,j) \subset\mathcal{N}$ with $i\in N(i,j)$, and $%
N(j,i) \subset \mathcal{N}$ with $j\in N(j,i)$.
\end{definition}

The following lemma provides an exact characterization of the mean first
passage time from state $i$ to state $j$, where $i$ and $j$ are the end
nodes of an essential edge $\{i,j\}$.

\begin{lemma}
\label{ess_lemma} \emph{Consider a Markov chain with a doubly
stochastic transition probability matrix $T$. Let $\{i,j\}$ be an
essential edge of the social network graph induced by matrix $T$.
\begin{itemize}
  \item[(a)] We have $$m_{ij} = \frac{|N(i,j)|}{T_{ij}}.$$
  \item[(b)] For every $k \in N(j,i)$,
   $$m_{ik} - m_{jk} = m_{ij}. $$
\end{itemize}
}
\end{lemma}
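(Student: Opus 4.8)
The plan is to exploit the essential-edge structure, which forces every trajectory from one side of the edge to the other to pass through the single link $\{i,j\}$. The crucial observation is that $\{i,j\}$ is a \emph{cut edge}: removing it disconnects $N(i,j)$ from $N(j,i)$. I would work with the Markov chain whose transition matrix is the doubly stochastic $T$, and I would use the mean first return time relation from Lemma \ref{facts_lemma}(i), namely $m_i^+ = 1/\pi_i = n$ (since $T$ is doubly stochastic, its stationary distribution is uniform, $\pi_i = 1/n$ for all $i$).

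**Proof of part (a).** First I would establish $m_{ij} = |N(i,j)|/T_{ij}$. The natural approach is a first-return / renewal decomposition at state $i$. Starting from $i$, the mean return time to $i$ is $m_i^+ = n$. I would decompose this return by conditioning on the first step: with probability $T_{ij}$ the chain steps directly to $j$ (the only neighbor of $i$ in $N(j,i)$, since $\{i,j\}$ is the unique edge across the cut), and with the remaining probability it takes a step that stays within $N(i,j)$ or self-loops. The key structural fact is that, because $\{i,j\}$ is essential, the only way the chain can be at any state of $N(j,i)$ and then return to $i$ is by eventually crossing back through $j \to i$. I would set up the balance relation $m_i^+ = 1 + \sum_{l} T_{il} m_{li}$, and then argue using the cut structure that the expected time spent, per excursion, decomposes so that the $N(i,j)$-side contributes a count of $|N(i,j)|$ states. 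Concretely, I would use the identity $m_i^+ = \sum_{l \in N(i,j)} \pi_l^{(i)} / (\text{local normalization})$ — more cleanly, the standard result that for a reversible (doubly stochastic symmetric) chain, the expected number of visits to each state of $N(i,j)$ between consecutive visits to $i$, summed, relates to $|N(i,j)|$, and crossings of the edge occur at rate $T_{ij}$. The cleanest route is: restrict attention to excursions and use that the chain on $N(i,j) \cup \{j\}$ viewed at crossing times behaves so that $m_{ij}$ equals the inverse crossing probability times the size $|N(i,j)|$.

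**Proof of part (b).** Given (a), part (b) is the more transparent claim: for $k \in N(j,i)$, I would argue $m_{ik} - m_{jk} = m_{ij}$ directly from the cut property. Starting at $i$ and targeting $k \in N(j,i)$, the chain \emph{must} pass through $j$ before it can reach $k$, because $\{i,j\}$ is the only edge joining the two sides and $k$ lies strictly on the $j$-side. By the strong Markov property at the first hitting time of $j$, the journey from $i$ to $k$ splits exactly as $m_{ik} = m_{ij} + m_{jk}$, which rearranges to the desired $m_{ik} - m_{jk} = m_{ij}$. This is a clean first-passage decomposition and requires only that every $i$-to-$k$ path factors through $j$.

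**Main obstacle.** The routine decomposition in part (b) is easy; the genuine work is the exact constant $|N(i,j)|/T_{ij}$ in part (a). The difficulty is converting the renewal/return-time bookkeeping into the precise cardinality $|N(i,j)|$ rather than just an order-of-magnitude statement. I expect the cleanest argument to combine $m_i^+ = n = |N(i,j)| + |N(j,i)|$ with a symmetric relation for $m_j^+$ and the crossing decomposition $m_i^+ = 1/T_{ij} \cdot (\text{expected crossings per return})$; by symmetry of the cut, the expected time on each side per excursion should be proportional to that side's cardinality, and matching the $N(i,j)$-side contribution against $m_{ij}$ via $m_{ij}\, T_{ij} = |N(i,j)|$ is where the essential-edge hypothesis does its real work. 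I would verify this matching carefully, since that is the step most prone to an off-by-one or normalization error.
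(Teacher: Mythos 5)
Your part (b) is correct and is exactly the paper's argument: since $\{i,j\}$ is the only edge across the cut, every path from $i$ to $k\in N(j,i)$ must pass through $j$, and the strong Markov property gives $m_{ik}=m_{ij}+m_{jk}$. The problem is part (a), where you correctly identify that all the work lies --- and then do not do it. Your main line of attack, a renewal decomposition anchored at $i$ using $m_i^+=n$, is left as a collection of gestures: the displayed ``identity'' $m_i^+ = \sum_{l\in N(i,j)}\pi_l^{(i)}/(\text{local normalization})$ is not a well-defined statement, and the proposed matching $m_i^+ = (1/T_{ij})\cdot(\text{expected crossings per return})$ is actually false (there is exactly one departure from $i$ per return cycle, so the expected number of $i\to j$ crossings per cycle is $T_{ij}$, and that formula would give $m_i^+=1$, not $n$). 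You then explicitly defer the decisive step (``I would verify this matching carefully''), so the constant $|N(i,j)|/T_{ij}$ is never actually derived. A renewal-reward argument anchored at the cut \emph{can} be made to work --- the long-run fraction of time in $N(i,j)$ is $|N(i,j)|/n$, entries into $N(i,j)$ occur only via the transition $j\to i$ at rate $\pi_j T_{ji}=T_{ij}/n$, and each sojourn in $N(i,j)$ has expected length $m_{ij}$ because the first exit to $j$ coincides with the first hitting of $j$ --- but none of this bookkeeping appears in your write-up.

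The paper's proof anchors the return-time argument at $j$, not at $i$, and in a \emph{restricted} chain rather than the original one: consider the random walk on the weighted subgraph over $N(i,j)\cup\{j\}$. Inside $N(i,j)$ nothing changes (all those nodes keep total incident weight $1$), while $j$ becomes a leaf of weight $T_{ij}$, so the restricted stationary distribution satisfies $\hat\pi_j = T_{ij}/(|N(i,j)|+T_{ij})$. From $j$ the restricted walk moves to $i$ with probability one, so $\hat m_j^+ = 1+\hat m_{ij}$; and $\hat m_{ij}=m_{ij}$ because the original walk started at $i$ cannot leave $N(i,j)$ except by hitting $j$. Lemma \ref{facts_lemma}(i) then gives $m_{ij} = 1/\hat\pi_j - 1 = |N(i,j)|/T_{ij}$ with no delicate matching required. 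You name this restriction idea in passing as ``the cleanest route,'' but since you neither compute $\hat\pi_j$ nor set up the return-time identity at $j$, part (a) of your proposal remains an unproven sketch rather than a proof.
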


\begin{proof}
Consider a Markov chain over the set of states $\mathcal N' = N(i,j) \cup \{j\}$, with transition
probabilities
\begin{equation}\label{sub_chain}
   \hat T_{kl} = T_{kl}, \quad \textrm{for all } k \neq l. \nonumber
\end{equation}

For the new chain with stationary distribution $\hat \pi$ we have
$$\hat \pi_j = \frac{T_{ij}}{\hat T} = \frac{T_{ij}}{|N(i,j)| + T_{ij}},$$
where $\hat T$ is the total edge weight in the new chain.

Since $\{i,j\}$ is essential, every path from $i$ to $j$ should pass through $\{i,j\}$. Moreover,
because of equivalent transition probabilities between the new Markov chain and the original one on
$\mathcal N'$, the mean passage time $m_{ij}$ of the original Markov chain is equal to mean passage
time $\hat m_{ij}$ of the new chain. On the other hand, for the new chain, we can write the mean
return time to $j$ as
$${ \hat m_j^+} = 1+\hat m_{ij} = 1+m_{ij},$$
which implies [cf.\ Lemma \ref{facts_lemma}(i)]
$$m_{ij} = \frac{1}{\hat \pi_j} - 1 =  \frac{|N(i,j)|}{T_{ij}}.$$

The second part of the claim follows from the fact that all of the paths from $i$ to $k$ must pass
through $\{i,j\}$, because it is the only edge connecting $N(i,j)$ to $N(j,i)$. Thus, we conclude
$$m_{ik} = m_{ij} + m_{jk}.$$
\end{proof}

We use the relation in Lemma \ref{ess_lemma} to study the effect of a single
forceful link over an essential edge on the excess influence of each agent.

\begin{theorem}
\label{ess_thm} Let $\bar{\pi}$ denote the consensus distribution. Assume
that there exists a single pair $\{i,j\}$ for which the influence
probability $\alpha _{ij}>0$. Assume that the edge $\{i,j\}$ is an essential
edge of the social network graph. Then, we have for all $k$,
\begin{equation}
\bar{\pi}_{k}-{\frac{1}{n}}=\frac{2}{n^{2}}\frac{\theta _{ij}(1-\epsilon )}{%
1-{\frac{\theta _{ij}}{n}}\Big((1+2\epsilon )|N(i,j)|-|N(j,i)|\Big)}\, \Psi _{ij}(k), \nonumber
\label{ess_sensitivity}
\end{equation}%
where
\begin{equation*}
\theta _{ij}=\frac{p_{ij}\alpha _{ij}}{p_{ij}(1-\gamma
_{ij})+p_{ji}(1-\gamma _{ji})},
\end{equation*}%
and
\begin{equation*}
\Psi _{ij}(k)=\left\{
\begin{array}{ll}
|N(i,j)|, & \hbox{$k \in N(j,i)$,} \\
-|N(j,i)|, & \hbox{$k \in N(i,j)$.}%
\end{array}%
\right.
\end{equation*}
\end{theorem}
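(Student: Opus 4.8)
The plan is to specialize the exact formula from Theorem~\ref{sensitivity_thm}(b) to the single-forceful-link case and then substitute the essential-edge passage-time identities from Lemma~\ref{ess_lemma}. Since there is exactly one pair $\{i,j\}$ with $\alpha_{ij}>0$, the disjointness hypothesis of part (b) is vacuously satisfied, so Eq.~(\ref{exact_sensitivity_disjoint}) collapses to a single summand:
\[
\bar\pi_k - \frac{1}{n} = \frac{1}{n^3}\,\frac{p_{ij}\alpha_{ij}(1-\epsilon)}{1-\zeta_{ij}/n^2}\,(m_{ik}-m_{jk}),
\]
with $\zeta_{ij}$ as defined in the theorem (noting $\alpha_{ji}=0$ here). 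The two jobs are then to (i) evaluate the relative passage time $m_{ik}-m_{jk}$ for every $k$ and (ii) simplify $\zeta_{ij}$ using the essential-edge structure.

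First I would handle the passage times. By Lemma~\ref{ess_lemma}(b), for $k\in N(j,i)$ we have $m_{ik}-m_{jk}=m_{ij}=|N(i,j)|/T_{ij}$. By the symmetric roles of the two sides of the essential edge, for $k\in N(i,j)$ the analogous identity gives $m_{jk}-m_{ik}=m_{ji}=|N(j,i)|/T_{ji}$, so $m_{ik}-m_{jk}=-|N(j,i)|/T_{ij}$ (using $T_{ij}=T_{ji}$). This is precisely where the two-branch function $\Psi_{ij}(k)$ comes from: modulo the common factor $1/T_{ij}$, the relative passage time equals $\Psi_{ij}(k)$. Since $\alpha_{ji}=0$, only the terms with the forceful direction survive, and the single forceful link has $\zeta_{ij}$ reducing to $\big(\tfrac12+\epsilon\big)p_{ij}\alpha_{ij}\,m_{ij}-\tfrac12 p_{ij}\alpha_{ij}\,m_{ji}$. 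Substituting $m_{ij}=|N(i,j)|/T_{ij}$ and $m_{ji}=|N(j,i)|/T_{ij}$ turns this into $\tfrac{p_{ij}\alpha_{ij}}{2T_{ij}}\big((1+2\epsilon)|N(i,j)|-|N(j,i)|\big)$.

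The remaining step is bookkeeping: I would relate $p_{ij}\alpha_{ij}/T_{ij}$ to the stated parameter $\theta_{ij}=p_{ij}\alpha_{ij}/\big(p_{ij}(1-\gamma_{ij})+p_{ji}(1-\gamma_{ji})\big)$. From the decomposition~(\ref{decomposition}), the edge weight is $T_{ij}=\tfrac{1}{2n}\big(p_{ij}(1-\gamma_{ij})+p_{ji}(1-\gamma_{ji})\big)$, so $p_{ij}\alpha_{ij}/T_{ij}=2n\,\theta_{ij}$. Feeding this into both the numerator prefactor and into $\zeta_{ij}$, the factor of $2n$ combines with the $1/n^3$ and the $1/T_{ij}$ in $\Psi$ to yield the claimed $2/n^2$ prefactor, while $\zeta_{ij}/n^2$ becomes $\tfrac{\theta_{ij}}{n}\big((1+2\epsilon)|N(i,j)|-|N(j,i)|\big)$, matching the denominator in the statement.

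I expect the main obstacle to be the constant-chasing in the last paragraph rather than anything conceptual: one must track carefully the factor $\tfrac{1}{2n}$ in the definition of $T_{ij}$, the sign conventions in $\Psi_{ij}$ on the two sides of the edge, and the precise form of $\zeta_{ij}$ when $\alpha_{ji}=0$, so that all powers of $n$ and all factors of $2$ reconcile. The only genuinely substantive input is Lemma~\ref{ess_lemma}, which gives the clean closed form for the passage times across an essential edge; once that is in hand the theorem is an algebraic specialization of Theorem~\ref{sensitivity_thm}(b).
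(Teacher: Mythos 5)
Your proposal is correct and follows essentially the same route as the paper: specialize Theorem \ref{sensitivity_thm}(b) to the single forceful link, substitute $m_{ik}-m_{jk}=\pm m_{ij}$ or $\mp m_{ji}$ from Lemma \ref{ess_lemma} (applied symmetrically on both sides of the essential edge), and carry out the same constant-tracking through $T_{ij}=\frac{1}{2n}\big(p_{ij}(1-\gamma_{ij})+p_{ji}(1-\gamma_{ji})\big)$ to express everything via $\theta_{ij}$. The algebra you outline, including the simplification of $\zeta_{ij}$ when $\alpha_{ji}=0$, reconciles exactly with the paper's computation.
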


\begin{proof}
Since edge $\{i,j\}$ is essential, by Lemma \ref{ess_lemma} we have for every $k \in N(j,i)$
\begin{equation}\label{ess_passage}
m_{ik} - m_{jk} = m_{ij} = \frac{|N(i,j)|}{T_{ij}} = \frac{2n|N(i,j)|}{p_{ij}(1-\gamma_{ij}) +
p_{ji}(1-\gamma_{ji})}, \nonumber
\end{equation}
Similarly, for every $k \in N(i,j)$, we obtain
$$m_{ik} - m_{jk} = -m_{ji} =  -\frac{2n|N(j,i)|}{p_{ij}(1-\gamma_{ij}) + p_{ji}(1-\gamma_{ji})}.$$
Combining the preceding relations, we can write for the relative mean passage time $m_{ik} - m_{jk}
= \frac{2n\theta_{ij}}{p_{ij}\alpha_{ij}} \Psi_{ij}(k)$. Since $(i,j)$ is the only forceful link,
we can apply  Theorem \ref{sensitivity_thm}(b) to get
$$  \bar \pi_k - {1\over n} = \bigg(\frac{2}{n^2}\bigg) \frac{\theta_{ij}(1-\epsilon)}{1 - \zeta_{ij}/n^2 } \cdot  \Psi_{ij}(k),$$
where $\zeta_{ij}$ is given by
$$\zeta_{ij}= \frac{p_{ij} \alpha_{ij}}{2} \big[(1+2\epsilon)m_{ij} - m_{ji}\big].$$
Combining the above relations with  Lemma \ref{ess_lemma}(i) establishes the desired result.
\end{proof}

Theorem \ref{ess_thm} shows that if two clusters of agents, e.g.,
two communities, are connected via an essential edge over which
there is a forceful link, then the excess influence of all agents
within the same cluster are equal (even when the cluster does not
have any symmetry properties). This implies that the opinions of all
agents that are in the same cluster as the forceful agent affect the
consensus opinion of the society with the same strength. This
property is observed in part (a) of Example \ref{bottleneck_ex}, in
which edge \{3,4\} is an essential edge. Intuitively, all of the
agents in that cluster will ultimately shape the opinions of the
forceful agent and this is the source of their excess influence. The
interesting and surprising feature is that they all have the same
excess influence, even if only some of them are directly connected
to the forceful agent. Loosely speaking, this can be explained using
the fact that, in the limiting distribution, it is the consensus
among this cluster of agents that will impact the beliefs of the
forceful agent, and since within this cluster there are no other
forceful agents, the consensus value
among them puts equal weight on each of them (recall Corollary \ref%
{corollary}).


\subsection{Information Bottlenecks}

We now extend the ideas in Theorem \ref{ess_thm} to more general
societies. We observed in Example \ref{bottleneck_ex} and Section \ref%
{ess_sec} that influence over an essential edge can have global
effects on the consensus distribution since essential edges are
``bottlenecks" of the information flow in the network. In this
subsection we generalize this idea to influential links over
bottlenecks that are not necessarily essential edges as defined in
Definition \ref{ess_def}. Our goal is to study the impact of
influential links over bottlenecks on the consensus distribution.

To achieve this goal, we return to the characterization in Theorem \ref%
{sensitivity_thm}, which was in terms of first mean passage times,
and then
provide a series of (successively tighter) upper bounds on the key term $%
(m_{ik}-m_{jk})$ in Eq. (\ref{exact_sensitivity_disjoint}) in this
theorem. Our first bound on this object will be in terms of the
\emph{minimum normalized cut} of a Markov chain (induced by an
undirected weighted graph), which is introduced in the next
definition. We will use the term \textit{cut of a Markov Chain}
(or \textit{cut of an undirected graph}) to denote a partition of
the set of states of a Markov chain (or equivalently the nodes of
the corresponding graph) into two sets.

\begin{definition}
\label{cond_def} Consider a Markov chain with set of states
$\mathcal{N}$,
symmetric transition probability matrix $Z$, and stationary distribution $%
\pi $. The\emph{\ minimum normalized cut value} (or
\emph{conductance}) of the Markov chain, denoted by $\rho$, is
defined as
\begin{equation}  \label{conductance}
\rho = \inf_{S \subset \mathcal{N}} \frac{Q(S,S^c)}{\pi(S) \pi(S^c)},
\end{equation}
where $Q(A,B) = \sum_{i \in A, j \in B} \pi_i Z_{ij}$, and $\pi(S)
= \sum_{i \in S} \pi_i$. We refer to the cut that achieves the
minimum in this optimization problem as the \textit{minimum
normalized cut}.
\end{definition}

The objective in the optimization problem in (\ref{conductance})
is the (normalized) conditional probability that the Markov chain
makes a transition from a state in set $S$ to a state in set
$S^{c}$ given that the initial state is in $S$. The minimum
normalized cut therefore characterizes how fast the Markov chain
will escape from any part of the state space, hence is an
appropriate measure of information bottlenecks or the mixing time
of the underlying graph. Clearly, the minimum normalized cut value
is larger in more connected graphs.

The next lemma provides a relation between the maximum mean
commute time of a Markov chain (induced by an undirected graph)
and the minimum normalized
cut of the chain, which is presented in Section 5.3 of Aldous and Fill \cite%
{aldous}. This result will then be used in the next theorem to
provide an improved bound on the excess influences by using the
fact that $\left\vert m_{ik}-m_{jk}\right\vert \leq
\max_{i,j}\left\{ m_{ij},m_{ji}\right\} $ (see, in particular,
proof of Theorem \ref{glob_cond_thm}).

\begin{lemma}
\label{cond_lemma} \emph{Consider an $n$-state Markov chain with transition
matrix $Z$ and stationary distribution $\pi$. Let $\rho$ denote the minimum
normalized cut value of the Markov chain (cf.\ Definition \ref{cond_def}).
The maximum mean commute time satisfies the following relation:
\begin{equation}  \label{commute-rel}
\max_{i,j}\{m_{ij}+m_{ji}\} \leq \frac{4(1+\log n)}{\rho \min {\pi_k}}.
\end{equation}%
}
\end{lemma}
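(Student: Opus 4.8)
The plan is to pass through the electrical-network representation of the reversible chain, reduce the commute time to an effective resistance, and then bound that resistance by the normalized cut value $\rho$. Since $Z$ is symmetric the chain is reversible, so I would first view it as a random walk on a weighted graph with edge conductances $c_{xy}=\pi_{x}Z_{xy}=\pi_{y}Z_{yx}$ and total conductance $\sum_{x,y}c_{xy}=\sum_{x}\pi_{x}=1$. The commute-time identity for reversible chains then reads
\[
m_{ij}+m_{ji}=R_{ij},
\]
where $R_{ij}$ is the effective resistance between $i$ and $j$ in the network with edge resistances $1/c_{xy}$. This identity can itself be extracted from Lemma \ref{facts_lemma}: writing the commute time through the fundamental matrix as $m_{ij}+m_{ji}=(Y_{jj}-Y_{ij})/\pi_{j}+(Y_{ii}-Y_{ji})/\pi_{i}$ and using the reversible symmetry $\pi_{i}Y_{ij}=\pi_{j}Y_{ji}$, the expression collapses into the energy of the harmonic potential, i.e.\ an effective resistance. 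This step converts the analytic problem of bounding $\max_{i,j}(m_{ij}+m_{ji})$ into the purely combinatorial problem of bounding $\max_{i,j}R_{ij}$.

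The second step is to bound the effective resistance by the conductance. By Thomson's (Dirichlet) principle, $R_{ij}$ equals the minimal energy $\sum_{x<y}f_{xy}^{2}/c_{xy}$ over unit flows $f$ from $i$ to $j$, so it suffices to exhibit one low-energy flow. The idea is to route the unit flow by peeling off stationary mass in dyadic layers, grouping states by the level sets of the potential at roughly $\log n$ scales down to the smallest mass $\min_{k}\pi_{k}$. The defining property of $\rho$ guarantees that across the boundary of any set $S$ the flow capacity is at least $\rho\,\pi(S)\pi(S^{c})$; this lower-bounds the conductance of each layer and hence caps the energy contributed by that layer. Summing the resulting harmonic/geometric series over the $O(\log n)$ layers, with the innermost layers (where $\pi$ is smallest) producing the overall resistance scale $1/\min_{k}\pi_{k}$, yields a bound of the form $R_{ij}\le 4(1+\log n)/(\rho\,\min_{k}\pi_{k})$. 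Taking the maximum over $i,j$ then gives the claim.

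The main obstacle is precisely this resistance--conductance estimate: organizing the state space into $O(\log n)$ nested cuts so that each cut simultaneously carries enough flow (controlled below by $\rho$) while the number of levels stays logarithmic in $n$, and tracking where the factor $1/\min_{k}\pi_{k}$ enters, namely through the smallest stationary masses that set the scale of the edge resistances $1/c_{xy}$. An equivalent route that sidesteps the explicit flow construction, and may be cleaner to make rigorous, is to bound $R_{ij}$ through a spectral-profile / Lov\'{a}sz--Kannan inequality: the single number $\rho$ lower-bounds the conductance profile uniformly, and the hitting-time (resistance) bound follows from an integral of the form $\int_{\min_{k}\pi_{k}}^{1}\frac{dr}{r\,\rho}$, whose value is $O\!\big(\log(1/\min_{k}\pi_{k})\big)=O(\log n)$, reproducing the same $(1+\log n)/\rho$ behaviour. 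In either realization the remaining estimates are routine, and the content of the lemma lives entirely in obtaining the correct $\log n$ and $1/\min_{k}\pi_{k}$ dependence.
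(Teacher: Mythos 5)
The paper does not actually prove this lemma; it quotes it from Section 5.3 of Aldous and Fill, and the underlying technique is displayed only in the paper's proof of its generalization, Theorem \ref{rel_cut_thm}. Measured against that, your first step is fine: with conductances $c_{xy}=\pi_x Z_{xy}$ the total conductance is $1$, so the commute-time identity reads $m_{ij}+m_{ji}=R_{ij}$, which is exactly Lemma \ref{extrmal_commute} with $w=1$, and your derivation of it from Lemma \ref{facts_lemma} via $\pi_i Y_{ij}=\pi_j Y_{ji}$ is legitimate for a reversible chain. The genuine gap is the second step, which is where all the content of the lemma lives and which you yourself label ``the main obstacle.'' The flow construction you sketch does not work as described: the ``level sets of the potential'' are level sets of the harmonic function whose energy you are trying to bound, so they are not available a priori for building the flow; and even given a family of nested cuts, the fact that each cut $S$ has conductance at least $\rho\,\pi(S)\pi(S^c)$ is a lower bound on cut capacity and does not by itself yield an upper bound on the energy of any particular unit flow --- for that you must specify how the flow is spread over the edges of each cut and check that the annuli between consecutive cuts are edge-disjoint so that their energies add. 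The fallback via a Lov\'asz--Kannan/spectral-profile inequality replaces the missing step by a citation to a different result (about mixing times, not commute times) and still does not produce the stated constant.

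The argument that actually closes this gap --- and the one the paper generalizes in Theorem \ref{rel_cut_thm} --- works on the Dirichlet (potential) side of Lemma \ref{extrmal_commute} rather than the Thomson (flow) side. Take any feasible $g$ with $g(i)=0$ and $g(j)=1$, order the states so that $g$ is nondecreasing, and telescope: since $(g(k)-g(l))^2\ge\sum_{l\le j<k}(g(j+1)-g(j))^2$, one gets $\mathcal{E}(g,g)\ge\sum_{j=1}^{n-1}\big(g(j+1)-g(j)\big)^2Q(A_j,A_j^c)\ge\rho\sum_{j=1}^{n-1}\big(g(j+1)-g(j)\big)^2\pi(A_j)\pi(A_j^c)$, where $A_j$ is the set of the first $j$ states and $Q$ is as in Definition \ref{cond_def}. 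Cauchy--Schwarz against $1=\sum_{j}\big(g(j+1)-g(j)\big)$ then gives $1/\mathcal{E}(g,g)\le\rho^{-1}\sum_{j=1}^{n-1}1/\big(\pi(A_j)\pi(A_j^c)\big)$, and since $1/\big(\pi(A_j)\pi(A_j^c)\big)=1/\pi(A_j)+1/\pi(A_j^c)$ with $\pi(A_j)\ge j\min_k\pi_k$ and $\pi(A_j^c)\ge(n-j)\min_k\pi_k$, the sum is controlled by harmonic sums, which is precisely where the $1+\log n$ and the $1/\min_k\pi_k$ enter. Your proposal correctly guesses the provenance of both factors but does not supply the estimate; as written it is a reduction plus an acknowledged hole, not a proof.
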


We use the preceding relation together with our characterization of excess
influence in terms of mean first passage times in Theorem \ref%
{sensitivity_thm} to obtain a tighter upper bound on the $l_{\infty }$ norm
of excess influence than in Theorem \ref{delta_bd_thm}. This result, which
is stated next, both gives a more directly interpretable limit on the extent
of misinformation in the society and also shows the usefulness of the
characterization in terms of mean first passage times in Theorem \ref%
{sensitivity_thm}.

\begin{theorem}
\label{glob_cond_thm} Let $\bar{\pi}$ denote the consensus distribution. Then, we have
\begin{equation}
\Big\|\bar{\pi}-\frac{1}{n}e\Big\|_{\infty }\leq \sum_{i,j}\frac{%
2p_{ij}\alpha _{ij}}{n}\Big(\frac{1+\log n}{\rho }\Big), \nonumber \label{spectral_bound}
\end{equation}%
where $\rho $ is the minimum normalized cut value of the Markov chain with
transition probability matrix given by the social network matrix $T$ (cf.\ Definition \ref{cond_def}).
\end{theorem}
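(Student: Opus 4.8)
The plan is to chain together the exact characterization in Theorem~\ref{sensitivity_thm}(a) with the commute-time bound in Lemma~\ref{cond_lemma}, using the elementary observation that $|m_{ik}-m_{jk}|$ is controlled by the maximum mean commute time. Starting from the exact expression
\[
\bar\pi_k - \frac{1}{n} = \frac{1}{2n^2}\sum_{i,j} p_{ij}\alpha_{ij}\Big((1-2\epsilon)\bar\pi_i + \bar\pi_j\Big)\big(m_{ik}-m_{jk}\big),
\]
I would take absolute values and apply the triangle inequality over the double sum, so that $\big|\bar\pi_k - \tfrac{1}{n}\big| \le \tfrac{1}{2n^2}\sum_{i,j} p_{ij}\alpha_{ij}\,\big|(1-2\epsilon)\bar\pi_i + \bar\pi_j\big|\,\big|m_{ik}-m_{jk}\big|$. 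Since the bound must hold for every $k$, the left-hand side can then be replaced by $\big\|\bar\pi - \tfrac{1}{n}e\big\|_\infty$.

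The next step is to bound each of the two variable factors. For the coefficient $\big|(1-2\epsilon)\bar\pi_i + \bar\pi_j\big|$, I would use that $\epsilon\in(0,1/2]$ forces $1-2\epsilon \in [0,1)$, and that $\bar\pi$ is a probability vector so $0\le \bar\pi_i,\bar\pi_j \le 1$; hence $(1-2\epsilon)\bar\pi_i + \bar\pi_j \le \bar\pi_i + \bar\pi_j \le 2$, and the quantity is nonnegative so the absolute value is harmless. For the distance factor, the key inequality is $|m_{ik}-m_{jk}| \le \max_{i,j}\{m_{ij}+m_{ji}\}$, which follows because mean first passage times are nonnegative and $m_{ik} \le m_{ij}+m_{jk}$ (and symmetrically) by a first-passage/strong-Markov decomposition through state $j$ (resp.\ $i$). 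Then Lemma~\ref{cond_lemma} gives $\max_{i,j}\{m_{ij}+m_{ji}\} \le \tfrac{4(1+\log n)}{\rho\, \min_k \pi_k}$, and since $T$ is doubly stochastic its stationary distribution is uniform, so $\min_k \pi_k = 1/n$ and the commute-time bound becomes $\tfrac{4n(1+\log n)}{\rho}$.

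Assembling these pieces, I get
\[
\Big\|\bar\pi - \frac{1}{n}e\Big\|_\infty \le \frac{1}{2n^2}\sum_{i,j} p_{ij}\alpha_{ij}\cdot 2 \cdot \frac{4n(1+\log n)}{\rho} = \sum_{i,j}\frac{4 p_{ij}\alpha_{ij}}{n}\Big(\frac{1+\log n}{\rho}\Big),
\]
which matches the claimed bound up to the leading constant; I would expect the stated factor of $2$ (rather than $4$) to come from a slightly sharper accounting, most likely bounding the coefficient $(1-2\epsilon)\bar\pi_i + \bar\pi_j$ by $1$ on average or exploiting that $\sum_i \bar\pi_i = 1$ in the inner sum rather than crudely bounding each $\bar\pi$ by $1$. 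The main obstacle I anticipate is precisely tightening this constant: the naive term-by-term bound double-counts the probability mass in $\bar\pi$, and recovering the factor of $2$ requires noticing that summing the coefficient against the weights $p_{ij}\alpha_{ij}$ interacts with the fact that $\bar\pi$ sums to one, rather than treating the commute-time bound and the coefficient bound as fully independent. The mean-passage-time facts and Lemma~\ref{cond_lemma} are quoted directly, so the only genuine work is this constant-tracking and verifying the triangle inequality $|m_{ik}-m_{jk}|\le \max_{i,j}\{m_{ij}+m_{ji}\}$.
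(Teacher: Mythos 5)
Your proof follows exactly the paper's argument: start from Theorem \ref{sensitivity_thm}(a), use $|m_{ik}-m_{jk}|\le \max\{m_{ij},m_{ji}\}\le \max_{i,j}\{m_{ij}+m_{ji}\}$ via the first-passage triangle inequality, and apply Lemma \ref{cond_lemma} with $\min_k\pi_k=1/n$ from double stochasticity of $T$. The only difference is the constant, and the fix is the one you anticipated: the paper bounds the coefficient by $(1-2\epsilon)\bar\pi_i+\bar\pi_j\le \bar\pi_i+\bar\pi_j\le 1$ rather than $2$, which is valid because the diagonal terms vanish ($p_{ii}=0$ by Assumption \ref{comprob}(a)) and for $i\ne j$ two distinct components of the probability vector $\bar\pi$ sum to at most $\sum_l\bar\pi_l=1$; inserting this recovers the stated factor of $2$.
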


\begin{proof}
By Theorem \ref{sensitivity_thm} we have for every $k$
\begin{eqnarray}
  \Big| \bar \pi_k  - \frac{1}{n}\Big| &=& \sum_{i,j} \frac{p_{ij}\alpha_{ij}}{2n^2}( (1-2\epsilon)\bar \pi_i
+\bar \pi_j)\big|m_{ik} - m_{jk}\big| \nonumber \\
&\leq& \sum_{i,j} \frac{p_{ij}\alpha_{ij}}{2n^2}\big|m_{ik} - m_{jk}\big| \nonumber \\
&\leq& \sum_{i,j} \frac{p_{ij}\alpha_{ij}}{2n^2} \max\{m_{ij}, m_{ji}\} \label{sen_ieq_3} \\
&\leq& \sum_{i,j} \frac{p_{ij}\alpha_{ij}}{2n^2} \max_{i,j} \{m_{ij}+m_{ji}\} \nonumber \\
&\leq& \sum_{i,j} \frac{2p_{ij}\alpha_{ij}}{n} \Big(\frac{1+\log
n}{\rho}\Big), \nonumber
\end{eqnarray}
where (\ref{sen_ieq_3}) holds because $m_{ik} \leq m_{ij} + m_{jk}$,
and $m_{jk} \leq m_{ji} + m_{ik}$, and the last inequality follows
from Eq.\ (\ref{commute-rel}), and the fact that $\pi =
\frac{1}{n}e$.
\end{proof}

One advantage of the result in this theorem is that the bound is in
terms of $\rho $, the minimum normalized cut of the social network
graph. As emphasized in Definition \ref{cond_def}, this notion is
related to the strength of
(indirect) communication links in the society. Although the bound in Theorem %
\ref{glob_cond_thm} is tighter than the one we provided in Theorem \ref%
{delta_bd_thm}, it still leaves some local information unexploited
because it focuses on the maximum mean commute times between all
states of a Markov chain. The following example shows how this bound
may be improved further by focusing on the mean commute time between
the forceful and the influenced agents.

\begin{example}
\label{barbell_ex} \textbf{(Barbell graph)} The barbell graph consists of
two complete graphs each with $n_1$ nodes that are connected via a path that
consists of $n_2$ nodes (cf.\ Figure \ref{barbell_fig}). Consider the
asymptotic behavior
\begin{equation*}
n \rightarrow \infty, \quad n_1/n \rightarrow \nu, \quad n_2/n \rightarrow
1-2\nu,
\end{equation*}
where $n=2n_1+n_2$ denotes the total number of nodes in the barbell graph,
and $0 < \nu < \frac{1}{2}$. The mean first passage time from a particular
node in the left bell to a node in the right bell is $O(n^3)$ as $n
\rightarrow \infty$, while the mean passage time between any two nodes in
each bell is $O(n)$ (See Chapter 5 of \cite{aldous} for exact results).
Consider a situation where there is a single forceful link in the left bell.

The minimum normalized cut for this example is given by cut $C_{0}$,
with normalized cut value $O(1/n)$, which captures the bottleneck in
the global network structure. Since the only forceful agent is
within the left bell in this example, we expect the flow of
information to be limited by cuts that separate the forceful and the
influenced agent, and partition the left bell. Since the left bell
is a complete graph, the cuts associated with this part of the graph
will have higher normalized cut values, thus yielding tighter bounds
on the excess influence of the agents. In what follows, we consider
bounds in terms of \textquotedblleft relative
cuts\textquotedblright\ in the social network graph that separate
forceful and influenced agents in order to
capture bottlenecks in the spread of misinformation (for example, cuts $%
C_{1} $, $C_{2}$, and $C_{3}$ in Figure \ref{barbell_fig}).

\begin{figure}[tbp]
\centering
\includegraphics[width=.8\textwidth]{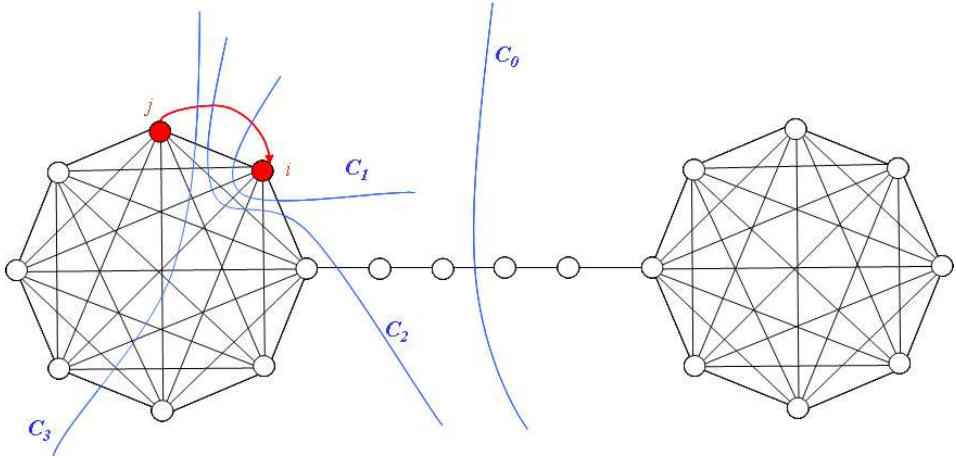}\newline
\caption{The barbell graph with $n_1 = 8$ nodes in each bell and $n_2 = 4$.
There is a single forceful link, represented by a directed link in the left
bell.}
\label{barbell_fig}
\end{figure}
\end{example}

\subsection{Relative Cuts}

The objective of this section is to improve our characterization of
the extent of misinformation in terms of information bottlenecks. To
achieve this objective, we introduce a new concept, \emph{relative
cuts}, and then show how this new concept is useful to derive
improved upper bounds on the excess influence of different
individuals and on the extent of misinformation. Our strategy is to
develop tighter bounds on the mean commute times between the
forceful and influenced agents in terms of relative cut values.
Together with Theorem \ref{sensitivity_thm}, this enables us to
provide bounds on the excess influence as a function of the
properties of the social network graph and the location of the
forceful agents within it.

\begin{definition}
\label{rel_cut_def} Let $\mathcal{G}=(\mathcal{N}, \mathcal{A})$ be
an undirected graph with edge $\{i,j\}$ weight given by $w_{ij}$.
The \textit{minimum relative cut value between $a$ and $b$}, denoted by $%
c_{ab}$, is defined as
\begin{equation*}
c_{ab} = \inf \bigg\{ \sum_{{{\{i,j\}\in \mathcal{A} \atop i \in S, j \in
S^c}}} w_{ij}\ |\ S \subset \mathcal{N}, a \in S, b \notin S \bigg\}.
\end{equation*}
We refer to the cut that achieves the minimum in this optimization problem
as the \textit{minimum relative cut}.
\end{definition}

The next theorem uses the extremal characterization of the mean
commute times presented in Appendix D, Lemma \ref{extrmal_commute},
to provide bounds on the mean commute times in terms of minimum
relative cut values.

\begin{theorem}
\label{mincut_thm} Let $\mathcal{G}=(\mathcal{N}, \mathcal{A})$ be the
social network graph induced by the social network matrix $T$ and
consider a Markov chain with transition matrix $T$. For any $a,b\in\mathcal{N%
}$, the mean commute time between $a$ and $b$ satisfies
\begin{equation}  \label{mincut_bound}
\frac{n}{c_{ab}} \leq m_{ab} + m_{ba} \leq \frac{n^2}{c_{ab}},
\end{equation}
where $c_{ab}$ is the minimum relative cut value between $a$ and $b$ (cf.\
Definition \ref{rel_cut_def}).
\end{theorem}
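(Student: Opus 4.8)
The plan is to reduce both inequalities to the electrical-network representation of commute times furnished by Lemma \ref{extrmal_commute} in Appendix D. Since $T$ is symmetric and doubly stochastic, the induced Markov chain is reversible with uniform stationary distribution $\pi=\frac{1}{n}e$, and the social network graph carries conductances $w_{ij}=T_{ij}$ on each edge $\{i,j\}\in\mathcal{A}$. The extremal characterization then identifies $m_{ab}+m_{ba}$ with $c_G=n$ times the effective resistance $R_{ab}$ between $a$ and $b$, and supplies its two dual variational descriptions: the flow (Thomson) form
\[
R_{ab}=\inf\Big\{\sum_{\{i,j\}\in\mathcal{A}}\frac{f_{ij}^2}{w_{ij}}\ :\ f\text{ a unit flow from }a\text{ to }b\Big\},
\]
and the dual potential (Dirichlet) form whose minimum equals $1/R_{ab}$. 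Dividing through by $n$, the theorem is then equivalent to $\frac{1}{c_{ab}}\le R_{ab}\le\frac{n}{c_{ab}}$, which I would establish as two separate estimates.

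For the lower bound on the commute time, I would prove $R_{ab}\ge 1/c_{ab}$ using the potential form with a cut indicator as test function. For any $S\subset\mathcal{N}$ with $a\in S$, $b\notin S$, the Dirichlet energy of $g=\mathbf{1}_S$ equals exactly the cut weight $\sum_{\{i,j\}:\,i\in S,\,j\in S^c}w_{ij}$; taking the infimum over admissible $g$ and then over cuts gives $1/R_{ab}\le c_{ab}$, i.e.\ $R_{ab}\ge 1/c_{ab}$. (Equivalently, one may argue in the flow picture: every unit flow pushes net flow $1$ across the minimal cut, and Cauchy--Schwarz bounds the energy carried by the cut edges below by $1/c_{ab}$.) Multiplying by $n$ yields $m_{ab}+m_{ba}\ge n/c_{ab}$.

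For the upper bound I would use the flow form together with the max-flow min-cut theorem. By min-cut/max-flow there is a feasible flow from $a$ to $b$ of value $c_{ab}$ obeying the capacity constraint $|f_{ij}|\le w_{ij}$ on each edge; rescaling produces a unit flow $g=f/c_{ab}$ with $|g_{ij}|\le w_{ij}/c_{ab}$. The capacity bound linearizes the quadratic energy, since $g_{ij}^2/w_{ij}\le |g_{ij}|/c_{ab}$, so $\mathcal{E}(g)\le\frac{1}{c_{ab}}\sum_{\{i,j\}}|g_{ij}|$. Decomposing $g$ (after deleting any cycles, which only lowers energy) into simple $a$--$b$ paths, each of length at most $n-1$, gives $\sum_{\{i,j\}}|g_{ij}|\le n-1$, whence $R_{ab}\le\mathcal{E}(g)\le(n-1)/c_{ab}$. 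Multiplying by $n$ gives $m_{ab}+m_{ba}\le n(n-1)/c_{ab}\le n^2/c_{ab}$.

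I expect the upper bound to be the main obstacle: it is the step that genuinely invokes min-cut/max-flow duality and the flow-decomposition argument, with the crucial trick being that the capacity constraint $|f_{ij}|\le w_{ij}$ converts the energy into a routed-volume bound controlled by $(n-1)$ times the flow value. The lower bound, by contrast, is an essentially immediate consequence of testing the variational principle against a single cut. A minor point to verify is the exact constant $c_G=n$ in the commute-time/resistance identity; this holds here because the uniform stationary distribution makes every weighted degree $\sum_j T_{ij}$ equal to $1$, and the diagonal ($\gamma_{ij}>0$) self-loop terms carry no current and never cross a cut, so they affect neither $R_{ab}$ nor $c_{ab}$.
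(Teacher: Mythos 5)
Your proposal is correct and follows essentially the same route as the paper: the lower bound by testing the Dirichlet form of Lemma \ref{extrmal_commute} against a cut indicator, and the upper bound by Thompson's principle combined with max-flow min-cut, rescaling to a unit flow satisfying $|f_{ij}|\le T_{ij}/c_{ab}$, and linearizing the energy via the capacity constraint. The only (immaterial) difference is how the routed-volume bound $\sum_{\{i,j\}}|f_{ij}|\le n-1$ is justified --- you decompose the acyclic unit flow into simple paths of length at most $n-1$, while the paper bounds the total flow through each node by $2$ after deleting cycles; both yield the stated $n^2/c_{ab}$.
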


\begin{proof}
For the lower bound we exploit the extremal characterization of the
mean commute time given by Eq.\ (\ref{Dirichlet}) in Lemma
\ref{extrmal_commute}. For any $S \subset \mathcal N$ containing $a$
and not containing $b$, pick the function $g_S$ as follows:
$$g_S(i) = \left\{
    \begin{array}{ll}
      0, & \hbox{$i \in S$;} \\
      1, & \hbox{otherwise.}
    \end{array}
  \right.
$$
The function $g$ is feasible for the maximization problem in Eq.\
(\ref{Dirichlet}). Hence,
\begin{eqnarray}
 m_{ab} + m_{ba}  \geq \Big[\mathcal E(g_S,g_S)\Big]^{-1} &=& \Big( \sum_{i,j} T_{ij} \Big)  \Big[{1\over 2} \sum_{i,j}T_{ij} \big(g_S(i) - g_S(j)\big)^2\Big]^{-1} \nonumber \\
&=& n  \Big[  \sum_{i \in S} \sum_{j \in S^c} T_{ij} \big(g_S(i) - g_S(j)\big)^2\Big]^{-1}
\nonumber \\
&=& \frac{n}{  \sum_{i \in S} \sum_{j \in S^c} T_{ij}}, \quad \textrm{for all } S \subset \mathcal
N, a \in S, b \notin S. \nonumber
\end{eqnarray}
The tightest lower bound can be obtained by taking the largest
right-hand side in the above relation, which gives the desired lower
bound.

For the upper bound, similar to Proposition 2 in Chapter 4 of
\cite{aldous}, we use the second characterization of the mean
commute time presented in Lemma \ref{extrmal_commute}. Note that any
unit flow from $a$ to $b$ is feasible in the minimization problem in
Eq.\  (\ref{Thompson}). Max-flow min-cut theorem implies that there
exists a flow $f$ of size $c_{ab}$ from $a$ to $b$ such that
$|f^*_{ij}| \leq T_{ij}$ for all edges $\{i,j\} \in \mathcal A$.
Therefore, there exists a unit flow $f = (f^*/c_{ab})$ from $a$ to
$b$ such that $|f_{ij}| \leq {T_{ij}}/{c_{ab}}$ for all edges
$\{i,j\}$. By deleting flows around cycles we may assume that
\begin{equation}\label{abs_flow_bd}
\sum_{l=1}^n|f_{kl}| \leq  \left\{
\begin{array}{cc}
1, & \text{if } k=a,b, \\
2,& \text{otherwise.}
\end{array}%
\right.
\end{equation}
Therefore, by invoking Lemma \ref{extrmal_commute} from Appendix D,
we obtain
\begin{eqnarray}\label{mincut_bd}
  m_{ab} + m_{ba} \leq \Big( \sum_{i,j} T_{ij} \Big) \sum_{\{i,j\} \in \mathcal A} \frac{f_{ij}^2}{T_{ij}} &\leq&  \frac{n}{c_{ab}} \sum_{\{i,j\} \in \mathcal A} |f_{ij}|\nonumber \\
  &\leq&\frac{n^2}{c_{ab}} , \nonumber
\end{eqnarray}
where the last inequality follows from (\ref{abs_flow_bd}).
\end{proof}

The minimum relative cut for the barbell graph in Example \ref{barbell_ex}
is given by cut $C_1$ with relative cut value $O(1)$. An alternative
relative cut between the forceful and influenced agents that partitions the
left bell is cut $C_3$, which has relative cut value $O(n)$, and therefore
yields a tighter bound on the mean commute times. Comparing cut $C_1$ to cut
$C_3$, we observe that $C_3$ is a \emph{balanced cut}, i.e., it partitions
the graph into parts each with a fraction of the total number of nodes,
while cut $C_1$ is not balanced. In order to avoid unbalanced cuts, we
introduce the notion of a \emph{normalized relative cut} between two nodes
which is a generalization of the normalized cut presented in Definition \ref%
{cond_def}.

\begin{definition}
\label{rel_cond_def} Consider a Markov chain with set of states $\mathcal{N}$%
, transition probability matrix $Z$, and stationary distribution $\pi$. The
\emph{minimum normalized relative cut value between $a$ and $b$}, denoted by
$\rho_{ab}$, is defined as
\begin{equation}
\rho_{ab} = \inf_{S \subset \mathcal{N }} \Big\{ \frac{Q(S,S^c)}{\pi(S) \pi(S^c)}\ |\ a \in S, b
\notin S\Big\},\nonumber
\end{equation}
where $Q(A,B) = \sum_{i \in A, j \in B} \pi_i Z_{ij}$, and $\pi(S) = \sum_{i
\in S} \pi_i$. We refer to the cut that achieves the minimum in this
optimization problem as the \textit{minimum normalized relative cut}.
\end{definition}

The next theorem provides a bound on the mean commute time between two nodes
$a$ and $b$ as a function of the minimum normalized relative cut value
between $a$ and $b$.

\begin{theorem}
\label{rel_cut_thm} Consider a Markov chain with set of states $\mathcal{N}$%
, transition probability matrix $Z$, and uniform stationary distribution.
For any $a, b \in \mathcal{N}$, we have
\begin{equation*}  \label{commute_cond}
m_{ab}+m_{ba} \leq \frac{3n\log n}{\rho_{ab}},
\end{equation*}
where $\rho_{ab}$ is the minimum normalized relative cut value between $a$
and $b$ (cf.\ Definition \ref{rel_cond_def}).
\end{theorem}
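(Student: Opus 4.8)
The plan is to pass to the electrical-network picture and bound an effective resistance by constructing a single, suitably spread-out flow. The chain is reversible (for uniform $\pi$ this amounts to symmetry of $Z$, as holds in the application $Z=T$), so the extremal (Thomson-type) characterization of the commute time from Lemma \ref{extrmal_commute} applies. Since $\sum_{i,j} Z_{ij}=n$, it gives
\begin{equation*}
m_{ab}+m_{ba} = n\,\inf_{f}\ \sum_{\{i,j\}\in\mathcal A}\frac{f_{ij}^2}{Z_{ij}},
\end{equation*}
the infimum taken over unit flows $f$ from $a$ to $b$. Hence it suffices to exhibit one unit flow whose energy is at most $3\log n/\rho_{ab}$; multiplying by $n$ then yields the claim.

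Next I would translate the hypothesis into an isoperimetric inequality. For uniform $\pi$ one has $Q(S,S^c)=\frac1n\sum_{i\in S,\,j\in S^c}Z_{ij}$ and $\pi(S)\pi(S^c)=|S|\,|S^c|/n^2$, so the definition of $\rho_{ab}$ (Definition \ref{rel_cond_def}) says precisely that every cut $S$ with $a\in S$, $b\notin S$ has weight
\begin{equation*}
w(S):=\sum_{i\in S,\,j\in S^c}Z_{ij}\ \ge\ \rho_{ab}\,\frac{|S|\,|S^c|}{n}.
\end{equation*}
In particular, whenever the smaller side has $s\le n/2$ vertices the cut weight is at least $\tfrac12\rho_{ab}\,s$: small sets around $a$ (or around $b$) necessarily have large boundary, which is exactly the bandwidth needed to route current out of a congested region.

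With this in hand I would build the flow exactly as in the proof of Lemma \ref{cond_lemma} (Aldous and Fill, Section 5.3), the only change being that \emph{every} cut invoked in that argument separates $a$ from $b$ (the relevant level sets $\{h>t\}$ of the harmonic potential $h$ with $h(a)=1$, $h(b)=0$ contain $a$ and exclude $b$), so the \emph{global} normalized cut $\rho$ may be replaced throughout by the \emph{relative} value $\rho_{ab}$. Concretely, one routes the unit of current from $a$ toward $b$ through a nested family of separating cuts $\{a\}\subseteq S_0\subset S_1\subset\cdots\subset S_L\subseteq\mathcal N\setminus\{b\}$, spreading the flow across each boundary proportionally to the edge weights. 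The isoperimetric inequality bounds the crossing energy at a set of size $s$ by $O\!\big(1/(\rho_{ab}\,s)\big)$, and summing these contributions as the separating set grows from $\{a\}$ to $\mathcal N\setminus\{b\}$ produces a harmonic-type sum $\sum_{s}1/s=O(\log n)$, giving total flow energy $O(\log n/\rho_{ab})$; tracking constants yields the stated bound $3\log n/\rho_{ab}$.

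The main obstacle is precisely this flow construction and its energy accounting. A naively layered flow (constant potential on each $S_k\setminus S_{k-1}$) is not flow-conserving across successive cuts, so one must reconcile the boundary patterns at consecutive scales; it is this reconciliation, rather than any single cut, that prevents the energy from collapsing geometrically and instead forces the summation $\sum_{s}1/(\rho_{ab}s)$, hence the logarithmic factor. Obtaining a single power of $\rho_{ab}$ (as opposed to the $\rho_{ab}^{2}$ that a Cheeger/spectral-gap argument would give) depends on carrying out this construction with the isoperimetric inequality directly rather than routing through the relaxation time.
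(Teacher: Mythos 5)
There is a genuine gap: your proof never actually produces the object it needs. You reduce the theorem, via the Thomson (flow) side of Lemma \ref{extrmal_commute}, to exhibiting a single unit flow from $a$ to $b$ with energy at most $3\log n/\rho_{ab}$, and then you yourself identify the construction of that flow as ``the main obstacle'' without resolving it. The sketch you give --- route current through a nested family of separating cuts, spreading it across each boundary proportionally to edge weights --- does not constitute a proof: the nested family is not specified, distributing flow proportionally across each cut independently violates conservation at the intermediate nodes, and the energy dissipated \emph{inside} each annulus $S_{k+1}\setminus S_k$ (needed to reconcile the boundary patterns) is not controlled by the relative cut values at all. Your appeal to ``the proof of Lemma \ref{cond_lemma} in Aldous and Fill'' does not rescue this, because that proof is not a flow construction; it is a test-function argument on the Dirichlet (supremum) side, which is precisely the route the paper takes.

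The paper's proof sidesteps the difficulty entirely by bounding \emph{every} term in the supremum of Eq.\ (\ref{Dirichlet}) rather than exhibiting one good flow in the infimum of Eq.\ (\ref{Thompson}). For an arbitrary feasible $g$ with $g(a)=0$, $g(b)=1$, order the states $a=1,\dots,n=b$ so that $g$ is increasing; then
\begin{equation*}
\mathcal E(g,g)\;\ge\;\sum_{j=1}^{n-1}\bigl(g(j+1)-g(j)\bigr)^2\,Q(A_j,A_j^c)\;\ge\;\rho_{ab}\sum_{j=1}^{n-1}\bigl(g(j+1)-g(j)\bigr)^2\,\pi(A_j)\pi(A_j^c),
\end{equation*}
where the relative cut bound applies because every level set $A_j=\{1,\dots,j\}$ contains $a$ and excludes $b$. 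Cauchy--Schwarz applied to $1=\sum_j\bigl(g(j+1)-g(j)\bigr)$ then gives $1/\mathcal E(g,g)\le \rho_{ab}^{-1}\sum_j 1/\bigl(\pi(A_j)\pi(A_j^c)\bigr)$, and uniformity of $\pi$ yields $\sum_{j=1}^{n-1} n^2/\bigl(j(n-j)\bigr)\le 3n\log n$. Every step is mechanical; no flow needs to be built. If you want to keep the dual picture, you would have to carry out the routing and its energy accounting in full --- which is substantially harder than the primal argument and is exactly the part your proposal leaves open.
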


\begin{proof}
We present a generalization of the proof of Lemma \ref{cond_lemma}
by Aldous and Fill \cite{aldous}, for the notion of normalized
relative cuts. The proof relies on the characterization of the mean
commute time given by Lemma \ref{extrmal_commute} in Appendix D. For
a function $0 \leq g\leq 1$ with $g(a) = 0$ and $g(b) = 1$, order
the nodes as $a=1,2,\ldots,n=b$ so that $g$ is increasing. The
Dirichlet form (cf.\ Definition \ref{dirichlet_form}) can be written
as
\begin{eqnarray}\label{extr_1}
  \mathcal E(g,g) &=& \sum_{i}\sum_{k>i} \pi_i Z_{ik} \big(g(k) - g(i)\big)^2 \nonumber \\
  &\geq& \sum_{i}\sum_{k>i}\sum_{i\leq j < k}\pi_i Z_{ik} \big(g(j+1) - g(j)\big)^2 \nonumber \\
  &=& \sum_{j=1}^{n-1}  \big(g(j+1) - g(j)\big)^2 Q(A_j, A_j^c) \nonumber \\
  &\geq&   \sum_{j=1}^{n-1}  \big(g(j+1) - g(j)\big)^2 \rho_{ab} \pi(A_j) \pi(A_j^c),
\end{eqnarray}
where $A_j = \{1,2, \ldots, j\}$, and the last inequality is true by Definition \ref{rel_cond_def}.
On the other hand, we have
$$1 = g(b)-g(a) = \sum_{j=1}^{n-1}  \big(g(j+1) - g(j)\big) \big(\rho_{ab} \pi(A_j) \pi(A_j^c)\big)^{\frac{1}{2}}  \big(\rho_{ab} \pi(A_j) \pi(A_j^c)\big)^{-\frac{1}{2}}.$$
Using the Cauchy-Schwartz inequality and Eq.\ (\ref{extr_1}), we
obtain
\begin{equation}\label{extr_2}
    \frac{1}{\mathcal E(g,g)}\leq \frac{1}{\rho_{ab}}  \sum_{j=1}^{n-1} \frac{1}{\pi(A_j)\pi(A_j^c)}.
\end{equation}
But $\pi(A_j) = j/n$, because the stationary distribution of the Markov chain is uniform. Thus,
$$\sum_{j=1}^{n-1} \frac{1}{\pi(A_j)\pi(A_j^c)} = \sum_{j=1}^{n-1} \frac{n^2}{j(n-j)} \leq 3n\log n.$$
Therefore, by applying the above relation to Eq.\ (\ref{extr_2}) we
conclude
$$ \frac{1}{\mathcal E(g,g)}\leq \frac{3n\log n}{\rho_{ab}}.$$

The above relation is valid for every function $g$ feasible for the
maximization problem in Eq.\ (\ref{Dirichlet}). Hence, the desired
result follows from the extremal characterization of the mean
commute time given by Lemma \ref{extrmal_commute}.
\end{proof}

Note that the minimum normalized cut value of a Markov chain in Definition %
\ref{cond_def} can be related to normalized relative cut values as follows:
\begin{equation*}
\rho = \inf_{a \neq b \in \mathcal{N}} \{ \rho_{ab}\}.
\end{equation*}
Therefore, the upper bound given in Theorem \ref{rel_cut_thm} for the mean
commute time is always tighter than that provided in Lemma \ref{cond_lemma}.

Let us now examine our new characterization in the context of Example \ref%
{barbell_ex}. The minimum normalized relative cut is given by cut
$C_2$ with (normalized relative cut) value $O(1)$. Despite the fact
that $C_2$ is a balanced cut with respect to the entire graph, it is
not a balanced cut in the left bell. Therefore, it yields a worse
upper bound on mean commute times compared to cut $C_3$ [which has
value $O(n)$]. These considerations motivate us to consider balanced
cuts within subsets of the original graph. In the following we
obtain tighter bounds on the mean commute times by considering
relative cuts in a subset of the original graph.

\begin{definition}
\label{subgraph_def} Consider a weighted undirected graph, $(\mathcal{N},
\mathcal{A})$, with edge $\{i,j\}$ weight given by $w_{ij}$. For any $S
\subseteq \mathcal{N}$, we define the \emph{subgraph of $(\mathcal{N},
\mathcal{A})$ with respect to $S$} as a weighted undirected graph, denoted
by $(S, \mathcal{A}_S)$, where $\mathcal{A}_S$ contains all edges of the
original graph connecting nodes in $S$ with the following weights
\begin{equation}  \label{sub_weights}
\bar w_{ij} = \left\{
\begin{array}{ll}
w_{ij}, & \hbox{$i \neq j$;} \\
w_{ii}+\sum_{k \in S^c} w_{ik}, & \hbox{$i = j$.}%
\end{array}
\right. \nonumber
\end{equation}
\end{definition}

The next lemma uses the Monotonicity Law presented in Appendix D,
Lemma \ref{monotonicity_lemma} to relate the mean commute times
within a subgraph to the mean commute times of the original graph.

\begin{lemma}
\label{sub_commute_lemma} \emph{Let $G=(N,A)$ be an undirected graph
with edge $\{i,j\}$ weight given by $w_{ij}$. Consider a Markov
chain induced by this graph and denote the mean first passage times
between states $i$ and $j$ by $m_{ij}$. We fix nodes $a,b\in N$, and
$S\subseteq N$ containing $a$ and $b$. Consider a subgraph of
$(N,A)$ with respect to $S$ (cf. Definition \ref{subgraph_def}) and
let $\bar{m}_{ij}$ denote the mean first passage time between states
$i$ and $j$ for the Markov chain induced by this subgraph. We have,
\begin{equation}
m_{ab}+m_{ba}\leq \frac{w}{w(S)}(\bar{m}_{ab}+\bar{m}_{ba}), \nonumber\label{sub_commute_bd}
\end{equation}%
where $w$ is the total edge weight of the original graph, and $w(S)$ is the
total edge weight of the subgraph, i.e., $w(S)=\sum_{i\in S}\sum_{j\in
\mathcal{N}}w_{ij}$.}
\end{lemma}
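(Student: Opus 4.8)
The plan is to route the argument through the electrical-network interpretation of the mean commute time together with Rayleigh's Monotonicity Law. Recall that for a random walk on a connected weighted undirected graph the commute time factors as $m_{ab}+m_{ba}=w\,R_{ab}$, where $w=\sum_{i,j}w_{ij}$ is the total edge weight and $R_{ab}$ is the effective resistance between $a$ and $b$ when each edge $\{i,j\}$ carries conductance $w_{ij}$. This identity is exactly what the extremal characterizations of Lemma \ref{extrmal_commute} encode: the Thomson (flow) form writes $R_{ab}$ as $\min_f\sum_{\{i,j\}}f_{ij}^2/w_{ij}$ over unit $a\to b$ flows, and the Dirichlet (potential) form is its dual. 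I would apply this identity twice, once to the original graph, giving $m_{ab}+m_{ba}=w\,R_{ab}$, and once to the subgraph $(S,\mathcal{A}_S)$, giving $\bar m_{ab}+\bar m_{ba}=w(S)\,\bar R_{ab}$, where $\bar R_{ab}$ is the effective resistance computed inside the subgraph.

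The key structural observation is that the self-loop weights $\bar w_{ii}=w_{ii}+\sum_{k\in S^c}w_{ik}$ of Definition \ref{subgraph_def} are chosen so that each node of $S$ retains its original total incident weight, $\sum_{j\in S}\bar w_{ij}=\sum_{j\in\mathcal{N}}w_{ij}$. Summing this over $i\in S$ shows that the total weight of the subgraph is precisely $w(S)=\sum_{i\in S}\sum_{j\in\mathcal{N}}w_{ij}$, which justifies the constant in the commute-time formula above. At the same time a self-loop carries no potential difference and no current, so it is electrically inert: $\bar R_{ab}$ equals the effective resistance of the network on vertex set $S$ whose edges are exactly the internal edges of the original graph, all external edges $\{i,k\}$ with $k\in S^c$ having been folded into the inert self-loops. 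This internal network is obtained from the full graph by deleting every edge incident to $S^c$, i.e.\ by setting those conductances to zero. By the Monotonicity Law (Lemma \ref{monotonicity_lemma}), lowering conductances cannot decrease effective resistance, so $R_{ab}\le\bar R_{ab}$; equivalently, every unit $a\to b$ flow supported on internal edges remains feasible in the full graph, so the Thomson minimum over the larger feasible set is no larger. Writing $R_{ab}=(m_{ab}+m_{ba})/w$ and $\bar R_{ab}=(\bar m_{ab}+\bar m_{ba})/w(S)$ and substituting into $R_{ab}\le\bar R_{ab}$ yields exactly $m_{ab}+m_{ba}\le\frac{w}{w(S)}(\bar m_{ab}+\bar m_{ba})$.

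The two applications of the commute–resistance identity and the monotonicity comparison are routine; the step that needs genuine care, and which I expect to be the main obstacle, is the constant bookkeeping around the self-loops. One must verify simultaneously that the folded self-loops leave $\bar R_{ab}$ unchanged, so that the monotonicity comparison is made against the correct internal network, and yet are fully counted in the subgraph total weight $w(S)$, so that the normalizing constant is right. The degree-preservation identity $\sum_{j\in S}\bar w_{ij}=\sum_{j\in\mathcal{N}}w_{ij}$ is what reconciles these two requirements, and checking it carefully is the crux of the proof.
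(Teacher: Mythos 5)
Your proposal is correct and follows essentially the same route as the paper's proof: fold the edges leaving $S$ into electrically inert self-loops, invoke the Monotonicity Law (Lemma \ref{monotonicity_lemma}) to compare against this modified network, and use the extremal characterization of Lemma \ref{extrmal_commute} to convert between commute times and weight-normalized effective resistances, with the degree-preservation identity $\sum_{j\in S}\bar w_{ij}=\sum_{j\in\mathcal{N}}w_{ij}$ supplying the factor $w/w(S)$. The only cosmetic difference is that you phrase the comparison as $R_{ab}\le\bar R_{ab}$ for effective resistances, whereas the paper carries out the equivalent computation directly on the Dirichlet form of the modified graph.
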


\begin{proof}
Consider an undirected graph $(\mathcal N, \mathcal A)$ with
modified edge weights $\tilde w_{ij}$ given by
$$\tilde w_{ij} = \left\{
                    \begin{array}{ll}
                      w_{ij}, & \hbox{$i \neq j \in S,$ or $i \neq j \in S^c$ ;} \\
                      0, & \hbox{$i \in S, j \in S^c$;} \\
                      w_{ii} + \sum_{k \in S^c} w_{ik}, & \hbox{$i=j$.}
                    \end{array}
                  \right.
$$
Hence, $\tilde w_{ij} \leq w_{ij}$ for all $i \neq j$, but the total
edge weight $w$ remains unchanged. By Monotonicity Law (cf.\ Lemma
\ref{monotonicity_lemma}), the mean commute time in the original
graph is bounded by that of the modified graph, i.e.,
\begin{equation}\label{mono_commute_bd}
    m_{ab}+m_{ba} \leq \tilde m_{ab} + \tilde m_{ba}.
\end{equation}

The mean commute time in the modified graph can be characterized
using Lemma \ref{extrmal_commute} in terms of the Dirichlet form
defined in Definition \ref{dirichlet_form}. In particular,
\begin{eqnarray}\label{mono_commute_RHS}
  (\tilde m_{ab} +\tilde m_{ba})^{-1} &=& \inf_{0\leq g\leq 1} \Big\{\frac{1}{w} \sum_{i,j \in \mathcal N} \tilde w_{ij} \big(\tilde g(i) - \tilde g(j)\big)^2:\ g(a) = 0, g(b) = 1 \Big\}\nonumber \\
  &=& \inf_{0\leq g\leq 1} \Big\{\frac{1}{w} \sum_{i,j \in S}  w_{ij} \big(\tilde g(i) - \tilde g(j)\big)^2:\ g(a) = 0, g(b) = 1 \Big\}\nonumber \\
  && +\inf_{0\leq g\leq 1} \Big\{\frac{1}{w} \sum_{i,j \in S^c}  w_{ij} \big(\tilde g(i) - \tilde g(j)\big)^2 \Big\}\nonumber \\
  &=& \frac{w(S)}{w} \inf_{0\leq g\leq 1} \Big\{ \sum_{i,j \in S}  \frac{\bar w_{ij}}{w(S)} \big(\tilde g(i) - \tilde g(j)\big)^2:\ g(a) = 0, g(b) = 1
  \Big\} \nonumber \\
  &=& \frac{w(S)}{w} (\bar m_{ab} +\bar m_{ba})^{-1}, \nonumber
\end{eqnarray}
where the second equality holds by definition of $\tilde w$, and the last equality is given by
definition of $\bar w$, and the extremal characterization of the mean commute time in the subgraph.
The desired result is established by combining the above relation with (\ref{mono_commute_bd}).
\end{proof}

\begin{theorem}
\label{sub_cond_thm} Let $\mathcal{G}=(\mathcal{N}, \mathcal{A})$ be the
social network graph induced by the social network matrix $T$ and
consider a Markov chain with transition matrix $T$. For any $a, b \in
\mathcal{N}$, and any $S \subseteq \mathcal{N}$ containing $a$ and $b$, we
have
\begin{equation*}  \label{commute_cond}
m_{ab}+m_{ba} \leq \frac{3n\log |S|}{\rho_{ab}(S)},
\end{equation*}
where $\rho_{ab}(S)$ is the \textit{minimum normalized cut value between $a$
and $b$ on the subgraph of $(\mathcal{N}, \mathcal{A})$ with respect to $S$}%
, i.e.,
\begin{equation}  \label{sub_cond_def}
\rho_{ab}(S) = \inf_{S^{\prime }\subset S} |S|\frac{\sum_{i \in S^{\prime },
j \in S \setminus S^{\prime }} T_{ij}}{|S^{\prime }|\cdot|S\setminus
S^{\prime }|}.
\end{equation}
\end{theorem}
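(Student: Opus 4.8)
The plan is to combine the subgraph comparison in Lemma \ref{sub_commute_lemma} with the normalized-relative-cut bound in Theorem \ref{rel_cut_thm}, applied to the subgraph rather than the full graph, so that the $\log$ factor improves from $\log n$ to $\log |S|$. First I would invoke Lemma \ref{sub_commute_lemma} with the given $a,b \in S$ to obtain
$$m_{ab} + m_{ba} \leq \frac{w}{w(S)}(\bar{m}_{ab} + \bar{m}_{ba}),$$
where $\bar{m}$ denotes mean passage times in the Markov chain induced by the subgraph $(S,\mathcal{A}_S)$, $w$ is the total edge weight of the original graph, and $w(S)$ that of the subgraph. Since $T$ is doubly stochastic, summing the entries gives $w = \sum_{i,j} T_{ij} = n$, and the degree-preserving self-loops in Definition \ref{subgraph_def} ensure $\sum_{j\in S} \bar{w}_{ij} = \sum_{j\in\mathcal{N}} T_{ij} = 1$ for every $i\in S$, so that $w(S) = |S|$ and hence $w/w(S) = n/|S|$.

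The second ingredient is to apply Theorem \ref{rel_cut_thm} to the subgraph chain. The crucial observation is that the same self-loop construction keeps each node's total incident weight equal to $1$, so the $|S|$-state subgraph chain again has the \emph{uniform} stationary distribution, which is precisely the hypothesis of Theorem \ref{rel_cut_thm}. Applying that theorem with the state count $|S|$ in place of $n$ yields
$$\bar{m}_{ab} + \bar{m}_{ba} \leq \frac{3|S|\log |S|}{\bar{\rho}_{ab}},$$
where $\bar{\rho}_{ab}$ is the minimum normalized relative cut value between $a$ and $b$ in the subgraph.

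It then remains to identify $\bar{\rho}_{ab}$ with the quantity $\rho_{ab}(S)$ defined in Eq.\ (\ref{sub_cond_def}). With uniform $\bar{\pi}_i = 1/|S|$ and subgraph transition probabilities $\bar{Z}_{ij} = \bar{w}_{ij} = T_{ij}$ for $i\neq j$ (the degree being $1$), a direct computation gives, for any $S' \subset S$ with $a\in S'$, $b\notin S'$,
$$\frac{\bar{Q}(S',S\setminus S')}{\bar{\pi}(S')\bar{\pi}(S\setminus S')} = |S|\,\frac{\sum_{i\in S', j\in S\setminus S'} T_{ij}}{|S'|\cdot|S\setminus S'|},$$
so taking the infimum over such $S'$ shows $\bar{\rho}_{ab} = \rho_{ab}(S)$. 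Substituting this identity together with $w/w(S) = n/|S|$ into the chain of inequalities collapses the factor $|S|$ and produces $m_{ab} + m_{ba} \leq 3n\log|S|/\rho_{ab}(S)$, as claimed.

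I expect the main subtlety to be exactly the bookkeeping that makes the stationary distribution of the subgraph chain uniform: the degree-preserving self-loops of Definition \ref{subgraph_def} are what guarantee this, and without that property Theorem \ref{rel_cut_thm} could not be invoked directly on the subgraph. The disconnected case (where $a,b$ lie in different components of the subgraph) is harmless, since then $\rho_{ab}(S)=0$ and the bound holds trivially as $+\infty$. The rest is routine substitution, with the pleasant cancellation of $|S|$ between the $n/|S|$ comparison factor and the $|S|$ appearing in Theorem \ref{rel_cut_thm}.
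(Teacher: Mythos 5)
Your proposal is correct and follows essentially the same route as the paper's own proof: Lemma \ref{sub_commute_lemma} gives the factor $w/w(S)=n/|S|$, the degree-preserving self-loops of Definition \ref{subgraph_def} make the subgraph chain's stationary distribution uniform so that Theorem \ref{rel_cut_thm} applies with $|S|$ states, and the uniformity lets you rewrite the subgraph's normalized relative cut as $\rho_{ab}(S)$ in Eq.\ (\ref{sub_cond_def}). The cancellation of $|S|$ you describe is exactly how the paper concludes.
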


\begin{proof}
By Lemma \ref{sub_commute_lemma}, we have
\begin{eqnarray}\label{sub_comm_uniform}
    m_{ab}+m_{ba} &\leq&  \frac{w}{w(S)} (\bar m_{ab}+\bar m_{ba} ) \nonumber \\
    &=& \frac{n}{|S|} (\bar m_{ab}+\bar m_{ba} ),
\end{eqnarray}
where $\bar m_{ab}$ is the mean first passage time on the subgraph $(S, \mathcal A_S)$.

On the other hand, Definition \ref{subgraph_def} implies that for the subgraph $(S, \mathcal A_S)$,
we have for every $i \in S$
$$\sum_{k \in S} \bar w_{ik} = \sum_{k \in S\setminus\{i\} } \bar w_{ik} + \bar w_{ii} = \sum_{k \in \mathcal N} w_{ik}=\sum_{k \in \mathcal N} T_{ik} = 1.$$
Hence, the stationary distribution of the Markov chain on the
subgraph is uniform. Therefore, we can apply Lemma \ref{rel_cut_thm}
to relate the mean commute time within the subgraph $(S, \mathcal
A_S)$ to its normalized relative cuts, i.e.,
$$\bar m_{ab}+\bar m_{ba} \leq \frac{3|S|\log |S|}{\rho_{ab}(S)},$$
where $\rho_{ab}(S)$ is the minimum normalized cut between $a$ and
$b$ given by Definition \ref{rel_cond_def} on the subgraph. Since
the stationary distribution of the random walk on the subgraph is
uniform, we can rewrite $\rho_{ab}(S)$ as in (\ref{sub_cond_def}).
Combining the above inequality with Eq.\ (\ref{sub_comm_uniform})
establishes the theorem.
\end{proof}

Theorem \ref{sub_cond_thm} states that if the local neighborhood
around the forceful links are highly connected, the mean commute
times between the forceful and the influenced agents will be
small, implying a smaller excess influence for all agents, hence
limited spread of misinformation in the society. The economic
intuition for this result is similar to that for our main
characterization theorems: forceful agents get (their limited)
information intake from their local neighborhoods. When these
local neighborhoods are also connected to the rest of the network,
forceful agents will be indirectly influenced by the rest of the
society and this will limit the spread of their (potentially
extreme) opinions. In contrast, when their local neighborhoods
obtain most of their information from the forceful agents, the
opinions of these forceful agents will be reinforced (rather than
moderated) and this can significantly increase their excess
influence and the potential spread of misinformation.

Let us revisit Example \ref{barbell_ex}, and apply the result of Theorem \ref%
{sub_cond_thm} where the selected subgraph is the left cluster of nodes. The
left bell is approximately a complete graph. We observe that the minimum
normalized cut in the subgraph would be of the form of $C_3$ in Figure \ref%
{barbell_fig}, and hence the upper bound on the mean commute time between $i$
and $j$ is $O(n\log n)$, which is close to the mean commute time on a
complete graph of size $n$.

Note that it is possible to obtain the tightest upper bound on mean
commute time between two nodes by minimizing the bound in Theorem
\ref{sub_cond_thm} over all subgraphs $S$ of the social network
graph. However, exhaustive search over all subgraphs is not
appealing from a computational point of view. Intuitively, for any
two particular nodes, the goal is to identify whether such nodes are
highly connected by identifying a cluster of nodes containing them,
or a bottleneck that separates them. In the following section we
present a hierarchical clustering method to obtain such a cluster
using a recursive approach.

\subsubsection{Graph Clustering}

We next present a graph clustering method to provide tighter
bounds on the mean commute time between two nodes $a$ and $b$ by
systematically searching over subgraphs $S$ of the social network
graph that would yield improved normalized cut values. The goal of
this exercise is again to improve the bounds on the term
$(m_{ik}-m_{jk})$ in Eq. (\ref{exact_sensitivity_disjoint}) in
Theorem \ref{sensitivity_thm}.

The following algorithm is based on successive graph cutting using
the notion of minimum normalized cut value defined in Definition
\ref{cond_def}. This approach is similar to the graph partitioning
approach of Shi and Malik \cite{shimalik} applied to image
segmentation problems.

\begin{algorithm}
\label{cluster_alg} Fix nodes $a, b$ on the social network
graph $(\mathcal{N}, \mathcal{A})$. Perform the following steps:

\begin{enumerate}
\item $k = 0$, $S_k = \mathcal{N}$.

\item Define $\rho_k$ as
\begin{equation*}
\rho_k = \inf_{S \subset S_k} |S_k|\frac{\sum_{i \in S, j \in S_k \setminus
S} T_{ij}}{|S|\cdot|S_k\setminus S|},
\end{equation*}
with $S^*_k$ as an optimal solution.

\item If $a,b \in S^*_k$, then $S_{k+1} = S^*_k$; $k \leftarrow k+1$; Goto 2.

\item If $a,b \in S_k\setminus S^*_k$, then $S_{k+1} = S_k \setminus S^*_k$;
$k \leftarrow k+1$; Goto 2.

\item Return $\frac{3 n\log|S_k|}{\rho_k}$.
\end{enumerate}
\end{algorithm}

\begin{figure}[tbp]
\centering
\includegraphics[width=.6\textwidth]{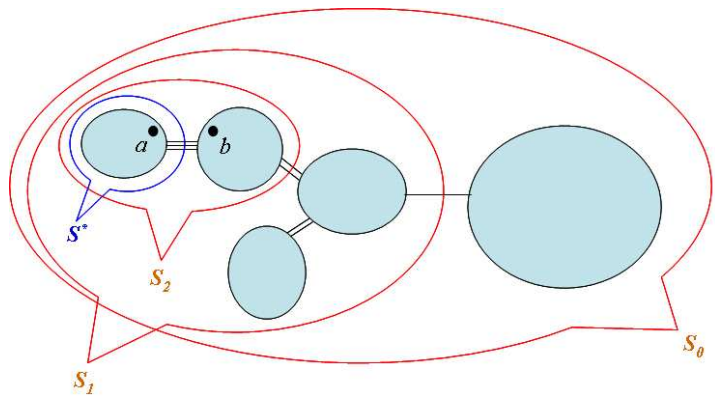}\newline
\caption{Graph clustering algorithm via successive graph cutting
using normalized minimum cut criterion.} \label{alg_fig}
\end{figure}

Figure \ref{alg_fig} illustrates the steps of Algorithm
\ref{cluster_alg} for a highly clustered graph. Each of the regions
in Figure \ref{alg_fig} demonstrate a highly connected subgraph. We
observe that the global cut given by $S_1$ does not separate $a$ and
$b$, so it need not give a tight characterization of the bottleneck
between $a$ and $b$. Nevertheless, $S_1$ gives a better estimate of
the cluster containing $a$ and $b$. Repeating the above steps, the
cluster size reduces until we obtain a normalized cut separating $a$
and $b$. By Theorem \ref{sub_cond_thm}, this cut provides a bound on
the mean commute time between $a$ and $b$ that characterizes the
bottleneck between such nodes. So far, we have seen in this example
and Example \ref{barbell_fig} that graph clustering via recursive
partitioning can monotonically improve upon the bounds on the excess
influence (cf.\ Theorem \ref{sub_cond_thm}). Unfortunately, that is
not always the case as discussed in the following example. In fact,
we need further assumptions on the graph in order to obtain monotone
improvement via graph clustering.

\begin{example}
\begin{figure}[tbp]
\centering
\includegraphics[width=.6\textwidth]{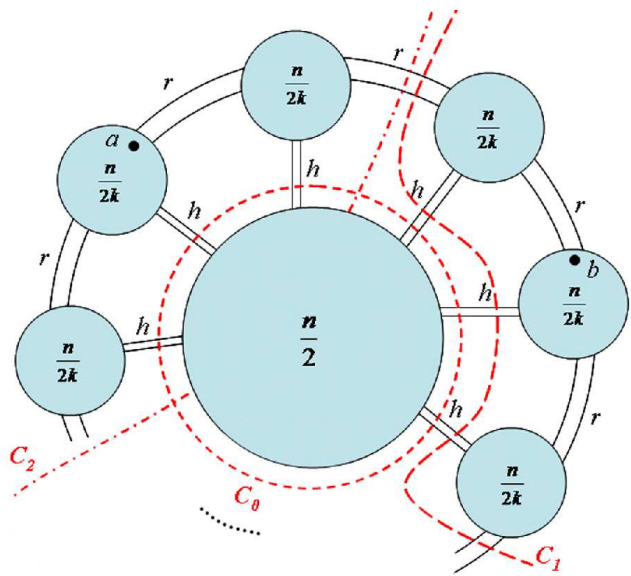}\newline
\caption{Social network graph with a central hub}
\label{hub_fig}
\end{figure}

Consider a social network graph of size $n$ depicted in
Figure \ref{hub_fig}. The central region is a complete graph of size $n/2$.
Each of the $k$ clusters on the cycle is a complete graph of size $n/(2k)$,
which is connected to the central hub via edges of total weight $h$.
Moreover, the clusters on the cycle are connected with total edge weight $r$.

If $r \geq kh/8$, then $C_0$ would be the minimum normalized cut rather than
cuts of the form $C_1$. Hence, $\rho_0$ in step 2 of Algorithm \ref%
{cluster_alg} is given by
\begin{equation*}
\rho_0 = n \frac{kh}{\frac{n}{2}\cdot \frac{n}{2}} = \frac{4kh}{n}.
\end{equation*}

After removing the central cluster, we obtain $C_2$ as the minimum
normalized cut over the cycle, with the following value
\begin{equation*}
\rho_1 = {\frac{n}{2}} \frac{2r}{\frac{n}{4}\cdot \frac{n}{4}} = \frac{16r}{n%
}.
\end{equation*}

Therefore, we conclude that $\rho_1 < \rho_0$ if and only if $\frac{kh}{8} <
r < \frac{kh}{4}$, i.e., the upperbound obtained by Algorithm \ref%
{cluster_alg} on the mean commute time between $a$ and $b$, is not smaller
than that of Lemma \ref{cond_lemma}. That is because by removing the central
cluster, we have eliminated the possibility of reaching the destination via
shortcuts of the central hub, and the only way to reach the destination is
to walk through the cycle.
\end{example}

Next, we show that the bounds given by Algorithm \ref{cluster_alg} are
monotonically improving, if the successive cuts are disjoint.

\begin{definition}
\label{disjoint_cut_def} Consider an undirected graph $(\mathcal{N},
\mathcal{A})$. The cuts defined by $S_1, S_2 \subseteq \mathcal{N}$ are
\emph{disjoint with respect to $\mathcal{N}$} if
\begin{equation*}
\delta(S_1) \cap \delta(S_2) = \emptyset,
\end{equation*}
where
\begin{equation*}
\delta(S) = \Big\{\{i,j\}\in \mathcal{A}\ |\ i \in S, j \in S^c \Big\}.
\end{equation*}
\end{definition}

\begin{theorem}
\label{monotone_bd_thm} Let $\rho_k$ and $S_k$ be generated by the $k^{th}$
iteration of running Algorithm \ref{cluster_alg} on the social network graph $(\mathcal{N}, \mathcal{A})$. If the cuts corresponding
to $S_{k+1}$ and $S_{k+2}$ are disjoint with respect to $S_k$, then $%
\rho_{k+1} > \rho_{k}$.
\end{theorem}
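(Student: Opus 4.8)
The plan is to compare the two minimizing cuts directly and read the inequality off an elementary size comparison, once the disjointness hypothesis has been converted into an identity on crossing weights. First I would fix notation. Let $C_1$ be the cut of $S_k$ achieving $\rho_k$, partitioning $S_k$ into the retained part $S_{k+1}$ (the side containing $a,b$) and $R:=S_k\setminus S_{k+1}$; let $C_2$ be the cut of $S_{k+1}$ achieving $\rho_{k+1}$, partitioning $S_{k+1}$ into $S_{k+2}$ (again the side containing $a,b$) and $S_{k+1}\setminus S_{k+2}$. For $S\subseteq U$ write $e_U(S)=\sum_{i\in S,\,j\in U\setminus S}T_{ij}$ for the crossing weight inside $U$. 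Then Step 2 of Algorithm \ref{cluster_alg} gives
\[\rho_k=|S_k|\frac{e_{S_k}(S_{k+1})}{|S_{k+1}|\,|R|},\qquad \rho_{k+1}=|S_{k+1}|\frac{e_{S_{k+1}}(S_{k+2})}{|S_{k+2}|\,|S_{k+1}\setminus S_{k+2}|},\]
and, since $\rho_k$ is an infimum over all cuts of $S_k$,
\[\rho_k\le |S_k|\frac{e_{S_k}(S)}{|S|\,|S_k\setminus S|}\qquad\hbox{for every }S\subset S_k.\]

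The key step is to apply this last bound to the choice $S=S_{k+2}$, regarding $S_{k+2}\subseteq S_{k+1}\subseteq S_k$ as a cut of the larger set $S_k$. Because $S_k\setminus S_{k+2}=(S_{k+1}\setminus S_{k+2})\cup R$, the ambient crossing weight splits as
\[e_{S_k}(S_{k+2})=e_{S_{k+1}}(S_{k+2})+\sum_{i\in S_{k+2},\,j\in R}T_{ij}.\]
This is exactly where the hypothesis enters. Setting $\delta_{S_k}(S):=\{\{i,j\}\in\mathcal{A}\ |\ i\in S,\ j\in S_k\setminus S\}$, the cut $\delta_{S_k}(S_{k+1})$ is the set of $S_{k+1}$--$R$ edges, and its intersection with $\delta_{S_k}(S_{k+2})$ is precisely the set of $S_{k+2}$--$R$ edges; so disjointness of the two cuts with respect to $S_k$ forces the second sum above to vanish, giving $e_{S_k}(S_{k+2})=e_{S_{k+1}}(S_{k+2})=:w$.

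It then remains to compare sizes. With $n_0=|S_k|$, $n_1=|S_{k+1}|$, $s=|S_{k+2}|$, the infimum bound reads $\rho_k\le n_0 w/(s(n_0-s))$ while $\rho_{k+1}=n_1 w/(s(n_1-s))$. Since $C_1$ and $C_2$ are genuine cuts, $R\neq\emptyset$ and $S_{k+1}\setminus S_{k+2}\neq\emptyset$, so $s<n_1<n_0$; as $t\mapsto t/(t-s)$ is strictly decreasing for $t>s$, we get $n_1/(n_1-s)>n_0/(n_0-s)$, and hence $\rho_{k+1}>n_0 w/(s(n_0-s))\ge\rho_k$. This is strict provided $w>0$, which holds whenever the subgraph on $S_{k+1}$ is connected, i.e.\ whenever $\rho_{k+1}>0$.

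I expect the only real difficulty to be the edge-bookkeeping in the key step --- correctly identifying $\delta_{S_k}(S_{k+1})\cap\delta_{S_k}(S_{k+2})$ as the $S_{k+2}$--$R$ edges, and so turning the combinatorial disjointness hypothesis into the clean identity $e_{S_k}(S_{k+2})=e_{S_{k+1}}(S_{k+2})$; everything after that is the one-line monotonicity of $t/(t-s)$. I would also record the symmetric case in which $a,b$ lie in $S_{k+1}\setminus S_{k+2}$ rather than in $S_{k+2}$, but this requires no new idea, since both the expression for $\rho_{k+1}$ and the disjointness condition are symmetric under swapping the two sides of $C_2$.
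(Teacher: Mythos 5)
Your proposal is correct and follows essentially the same route as the paper's proof: both treat $S_{k+2}$ as a feasible cut of $S_k$ to bound $\rho_k$, use the disjointness hypothesis to kill the $S_{k+2}$--$R$ crossing weight so that $e_{S_k}(S_{k+2})=e_{S_{k+1}}(S_{k+2})$, and then conclude from the size ratio $|S_{k+1}|\,|S_k\setminus S_{k+2}|/(|S_k|\,|S_{k+1}\setminus S_{k+2}|)>1$, which is exactly your monotonicity of $t/(t-s)$. Your explicit flagging of the degenerate case $w=0$ (equivalently $\rho_{k+1}=0$) is a small point the paper glosses over, but otherwise the two arguments coincide.
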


\begin{proof}
By definition of $\rho_{k}$ in step 2 of Algorithm \ref{cluster_alg}, we have for $S_{k+2}
\subseteq S_k$
\begin{equation}\label{mono_pf1}
    \rho_k = |S_k|\frac{\sum_{i \in S_{k+1}, j \in S_k \setminus S_{k+1}} T_{ij}}{|S_{k+1}|\cdot|S_k \setminus
    S_{k+1}|} \leq |S_k|\frac{\sum_{i \in S_{k+2}, j \in S_k \setminus S_{k+2}} T_{ij}}{|S_{k+2}|\cdot|S_k \setminus
    S_{k+2}|}.
\end{equation}
But $S_{k+1}$ and $S_{k+2}$ are disjoint with respect to $S_k$, and
$S_{k+2} \subseteq S_{k+1} \subseteq S_k$. It is straightforward to
show that
$$\Big\{\{i,j\}\in \mathcal A\ |\ i \in S_{k+2}, j \in S_k\setminus S_{k+1}\Big\} \subseteq \delta(S_{k+1}) \cap \delta(S_{k+2})= \emptyset,$$
which implies
$$\sum_{i \in S_{k+2}, j \in S_k \setminus S_{k+2}} T_{ij} = \sum_{i \in S_{k+2}, j \in S_k \setminus S_{k+1}} T_{ij} + \sum_{i \in S_{k+2}, j \in S_{k+1} \setminus S_{k+2}} T_{ij} = \sum_{i \in S_{k+2}, j \in S_{k+1} \setminus S_{k+2}} T_{ij}.$$
Therefore, by combining the above relation with (\ref{mono_pf1}) and
the  definition of  $\rho_{k+1}$, we obtain
\begin{eqnarray}
  \frac{\rho_{k+1}}{\rho_{k}} &\geq&  \bigg(|S_{k+1}|\frac{\sum_{i \in S_{k+2}, j \in S_{k+1} \setminus S_{k+2}} T_{ij}}{|S_{k+2}|\cdot|S_{k+1} \setminus
  S_{k+2}|}\bigg) \bigg(|S_{k}|\frac{\sum_{i \in S_{k+2}, j \in S_{k+1} \setminus S_{k+2}} T_{ij}}{|S_{k+2}|\cdot|S_{k} \setminus
  S_{k+2}|}\bigg)^{-1} \nonumber \\
  &=& \frac{|S_{k+1}|\cdot|S_k\setminus S_{k+2}|}{|S_{k}|\cdot|S_{k+1} \setminus S_{k+2}|} =  \frac{|S_{k+1}|\big(|S_k\setminus
  S_{k+1}|+|S_{k+1}\setminus S_{k+2}|\big)}{|S_{k+1} \setminus S_{k+2}|\big(|S_{k+1}|+|S_{k}\setminus
  S_{k+1}|\big)} \label{mono_pf2}\\
  &=& \bigg(1+\frac{|S_k\setminus  S_{k+1}|}{|S_{k+1}\setminus S_{k+2}|}\bigg)\bigg(1+\frac{|S_k\setminus
  S_{k+1}|}{|S_{k+1}|}\bigg)^{-1} \nonumber \\
  &>& 1, \nonumber
\end{eqnarray}
where (\ref{mono_pf2}) holds because $S_{k+2} \subseteq S_{k+1} \subseteq S_k$, and the last
inequality is true because $S_{k+1}\setminus S_{k+2} \subseteq S_{k+1}$, and $S_{k+2}$ is nonempty.
\end{proof}

\section{Conclusions}

\label{conclusions}

This paper analyzed the spread of misinformation in large
societies. Our analysis is motivated by the widespread differences
in beliefs across societies and more explicitly, the presence of
many societies in which beliefs that appear to contradict the
truth can be widely held. We argued that the possibility that such
misinformation can arise and spread is the manifestation of the
natural tension between information aggregation and misinformation
spreading in the society.

We modeled a society as a social network of agents communicating
(meeting) with each other. Each individual holds a belief
represented by a scalar. Individuals meet pairwise and exchange
information, which is modeled as both individuals adopting the
average of their pre-meeting beliefs. When all individuals engage
in this type of information exchange, the society will be able to
aggregate the initial information held by all individuals. This
effective information aggregation forms the benchmark against
which we compared the possible spread of misinformation.

Misinformation is introduced by allowing some agents to be \textquotedblleft
forceful,\textquotedblright\ meaning that they influence the beliefs of
(some) of the other individuals they meet, but do not change their own
opinion. When the influence of forceful agents is taken into account, this
defines a stochastic process for belief evolution, and our analysis
exploited the fact that this stochastic process (Markov chain) can be
decomposed into a part induced by the social network matrix and a part
corresponding to the influence matrix.

Under the assumption that even forceful agents obtain some information
(however infrequent) from some others, we first show that beliefs in this
class of societies converge to a consensus among all individuals (under some
additional weak regularity conditions). This consensus value is a random
variable, and the bulk of our analysis characterizes its behavior, in
particular, providing bounds on how much this consensus can differ from the
efficient information aggregation benchmark.

We presented three sets of results. Our first set of results quantify the extent of misinformation
in the society as a function of the number and properties of forceful agents and the mixing
properties of the Markov chain induced by the social network matrix. In particular, we showed that
social network matrices with large second eigenvalues, or that correspond to fast-mixing graphs,
will place tight bounds on the extent of misinformation. The intuition for this result is that in
such societies individuals that ultimately have some influence on the beliefs of forceful agents
rapidly inherit the beliefs of the rest of the society and thus the beliefs of forceful agents
ultimately approach to those of the rest of the society and cannot have a large impact on the
consensus beliefs. The extreme example is provided by expander graphs, where, when the number and
the impact of forceful agents is finite, the extent of misinformation becomes arbitrarily small as
the size of the society becomes large. In contrast, the worst outcomes are obtained when there are
several forceful agents and forceful agents themselves update their beliefs only on the basis of
information they obtain from individuals most likely to have received their own information
previously (i.e., when the graph is slow-mixing).

Our second set of results exploit more explicitly the location of
forceful agents within a social network. A given social network will
lead to very different types of limiting behavior depending on the
context in which the forceful agents are located. We provided a
tight characterization for graphs with the forceful essential edges,
that is, graphs representing societies in which a forceful agent
links two disconnected clusters. Such graphs approximate situations
in which forceful agents, such as media outlets or political
leaders, themselves obtain all of their information from a small
group of other individuals. The interesting and striking result in
this case is that the excess influence of all of the members of the
small group are the same, even if some of them are not directly
linked to forceful agents. We then extended these findings to more
general societies using the notion of information bottlenecks.

Our third set of results provide new efficient graph clustering
algorithms for computing tighter bounds on excess influence.

We view our paper as a first attempt in quantifying misinformation
in society. As such, we made several simplifying assumptions and
emphasized the characterization results to apply for general
societies. Many areas of future investigation stem from this
endeavor. First, it is important to consider scenarios in which
learning and information updating are, at least partly, Bayesian.
Our non-Bayesian framework is a natural starting point, both
because it is simpler to analyze and because the notion of
misinformation is more difficult to introduce in Bayesian models.
Nevertheless, game theoretic models of communication can be used
for analyzing situations in which a sender may explicitly try to
mislead one or several receivers. Second, one can combine a model
of communication along the lines of our setup with individuals
taking actions with immediate payoff consequences and also
updating on the basis of their payoffs. Misinformation will then
have short-run payoff consequences, but whether it will persist or
not will depend on how informative payoffs are and on the severity
of its short-run payoff consequences. Third, it would be useful to
characterize what types of social networks are more robust to the
introduction of misinformation and how agents might use simple
rules in order to avoid misinformation.

Finally, our approach implies that the society (social network)
will ultimately reach a consensus, even though this consensus
opinion is a random variable. In practice, there are widespread
differences in beliefs in almost all societies. There is little
systematic analysis of such differences in beliefs in the
literature at the moment, and this is clearly an important and
challenging area for future research. Our framework suggests two
fruitful lines of research. First, although a stochastic consensus
is eventually reached in our model, convergence can be very slow.
Thus characterizing the rate of convergence to consensus in this
class of models might provide insights about what types of
societies and which sets of issues should lead to such belief
differences. Second, if we relax the assumption that even forceful
agents necessarily obtain some (albeit limited) information from
others, thus removing the ``no man is an island'' feature, then it
can be shown that the society will generally not reach a
consensus. Nevertheless, characterizing differences in opinions in
this case is difficult and requires a different mathematical
approach. We plan to investigate this issue in future work.

\newpage

\noindent \textbf{\large Appendix A}

\vskip .5pc

\noindent\textbf{{\large Preliminary Lemmas, Sections \ref{convergence} and %
\ref{global-misinfo}}}

\vskip .5pc

This appendix presents two lemmas that will be used in proving the
convergence of agent beliefs (i.e., Theorem \ref{convconsensus}) and in
establishing properties of the social network matrix $T$ in Appendix C.

The first lemma provides conditions under which a nonnegative $n\times n$
matrix $M$ is \textit{primitive}, i.e., there exists a positive integer $k$
such that all entries of the $k^{th}$ power of $M$, $M^k$, are positive (see
\cite{seneta}). The lemma also provides a positive uniform lower bound on
the entries of the matrix $M^k$ as a function of the entries of $M$ and the
properties of the graph induced by the positive entries of matrix $M$. A
version of this lemma was established in \cite{distpaper}. We omit the proof
here since it is not directly relevant to the rest of the analysis.

\begin{lemma}
\emph{Let $H$ be a nonnegative $n\times n$ matrix that satisfies the
following conditions:
\begin{itemize}
\item[(a)] The diagonal entries of $H$ are positive, i.e., $H_{ii}>0$ for all $i$.
\item[(b)] Let ${\cal E}$ denote a set of edges such that the
graph $({\cal N}, {\cal E})$ is connected. For all $(i,j) \in {\cal
E}$, the entry $H_{ij}$ is positive, i.e., ${\cal E}\subset \{(i,j)\
|\ H_{ij}>0\} $.
\end{itemize}
Let $d$ denote the maximum shortest path length between any $i,j$ in
the induced graph $({\cal N},{\cal E})$, and $\eta>0$ be a scalar
given by
\[\eta=\min\left\{ \min_{i\in {\cal N}} H_{ii}, \min_{(i,j)\in {\cal E}}
H_{ij} \right\}.\] Then, we have
\[[H^d]_{ij}\ge \eta^d\qquad \hbox{for all }i,j.\]}
\label{primitive}
\end{lemma}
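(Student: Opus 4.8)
The plan is to exploit the combinatorial interpretation of matrix powers. Since $H$ is nonnegative, the entry $[H^d]_{ij}$ equals a sum of nonnegative terms, one for each walk of length exactly $d$ from $i$ to $j$, where each term is the product of the $d$ matrix entries traversed along that walk. Because every summand is nonnegative, it suffices to exhibit a single walk of length exactly $d$ from $i$ to $j$ whose associated product of entries is at least $\eta^d$; the full entry $[H^d]_{ij}$ can then only be larger.

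First I would fix arbitrary $i,j\in\mathcal{N}$ and invoke the connectivity of $(\mathcal{N},\mathcal{E})$ to obtain a shortest directed path $i=v_0,v_1,\ldots,v_{d_{ij}}=j$ with each consecutive pair $(v_k,v_{k+1})\in\mathcal{E}$. By hypothesis (b), every such edge satisfies $H_{v_k v_{k+1}}\ge\eta$, and by the definition of $d$ we have $d_{ij}\le d$. This path has length $d_{ij}$, which may be strictly smaller than $d$, so I would pad it to length exactly $d$ by appending $d-d_{ij}$ self-loops at the terminal node $j$. Hypothesis (a) guarantees $H_{jj}\ge\eta>0$, so each padding step is a legitimate transition with weight at least $\eta$.

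This construction produces a walk of length exactly $d$ from $i$ to $j$, consisting of $d$ steps each of weight at least $\eta$, so the product of entries along it is bounded below by $\eta^d$. Since $[H^d]_{ij}$ is a sum of nonnegative contributions that includes this particular walk, we conclude $[H^d]_{ij}\ge\eta^d$. As $i$ and $j$ were arbitrary, the bound holds for all entries simultaneously, which is the claimed uniform lower bound.

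I do not anticipate a genuine obstacle here; the argument is elementary once the walk-counting interpretation is in place. The only step requiring care is the padding construction, and this is precisely where assumption (a) on strictly positive diagonal entries is essential: without self-loops one could only guarantee positivity of $[H^{d_{ij}}]_{ij}$ at the \emph{varying} shortest-path length $d_{ij}$, rather than positivity of a \emph{single} fixed power $H^d$ with a uniform bound $\eta^d$ valid for every pair of indices.
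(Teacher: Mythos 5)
Your argument is correct and complete. Note that the paper itself does not prove this lemma: it states that a version was established in an earlier reference (Nedi\'c and Ozdaglar) and explicitly omits the proof, so there is no in-paper argument to compare against. Your walk-counting proof is the standard one and it works: $[H^d]_{ij}$ is a sum of nonnegative products over length-$d$ walks, a shortest directed path of length $d_{ij}\le d$ contributes edge weights each at least $\eta$ by hypothesis (b), and padding with $d-d_{ij}$ self-loops at $j$ (legitimate by hypothesis (a)) yields a single walk whose product is at least $\eta^{d}$. You also correctly identify why the positive diagonal is essential, namely to convert positivity of the varying powers $H^{d_{ij}}$ into a uniform bound on the single fixed power $H^{d}$. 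The only implicit point worth making explicit is that ``connected'' here must be read as strongly connected in the directed sense (as in the paper's Assumption 2), so that a directed path from $i$ to $j$ exists for every ordered pair; your proof already uses this.
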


The second lemma considers a sequence $z(k)$ generated by a linear
time-varying update rule, i.e., given some $z(0)$, the sequence $\{z(k)\}$
is generated by
\begin{equation*}
z(k)=H(k) z(k-1)\qquad \hbox{for all }k\ge 0,
\end{equation*}
where $H(k)$ is a stochastic matrix for all $k\ge 0$. We introduce the
matrices $\tilde \Phi(k,s) = H(k)H(k-1)\ldots H(s)$ to relate $z(k+1)$ to $%
z(s)$ for $s\le k$, i.e.,
\begin{equation*}
z(k+1)=\tilde \Phi(k,s)z(s).
\end{equation*}
The lemma shows that, under some assumptions on the entries of the matrix $%
\tilde \Phi(k,s)$, the \textit{disagreement in the components} of $z(k)$,
defined as the difference between the maximum and minimum components of $%
z(k) $, decreases with $k$ and provides a bound on the amount of decrease.

\begin{lemma}
\emph{Let $\{H(k)\}$ be a sequence of $n\times n$ stochastic matrices. Given
any $z(0)\in \mathbb{R}^n$, let $\{z(k)\}$ be a sequence generated by the
linear update rule
\begin{equation}
z(k)=H(k) z(k-1)\qquad \hbox{for all }k\ge 0.  \label{zseq}
\end{equation}
Assume that there exists some integer $B>0$ and scalar $\theta>0$ such that
\begin{equation*}
[\tilde \Phi(s+B-1,s)]_{ij}\ge \theta\qquad \hbox{for all }i,j, \hbox{ and }
s\ge0.
\end{equation*}
For all $k\ge 0$, define $M(k)\in \mathbb{R}$ and $m(k)\in \mathbb{R}$ as
follows:
\begin{equation}
M(k) = \max_{i\in \mathcal{N}}\, z_i(k),\qquad m(k) = \min_{i\in \mathcal{N}%
}\, z_i(k).  \label{twom}
\end{equation}
Then, for all $s\ge 0$, we have $n\theta\le 1$ and
\begin{equation*}
M(s+B)-m(s+B)\le (1-n\theta)(M(s)-m(s)).
\end{equation*}
}\label{lyapdec}
\end{lemma}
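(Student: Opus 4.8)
The plan is to reduce the claim to a single $B$-step contraction estimate for the stochastic matrix $\Psi := \tilde\Phi(s+B-1,s)$, which by the definition of $\tilde\Phi$ satisfies $z(s+B)=\Psi\,z(s)$. First I would record two elementary facts about $\Psi$. Since $\Psi$ is a product of $B$ stochastic matrices it is itself stochastic, so each of its rows is nonnegative and sums to $1$; combined with the hypothesis $\Psi_{ij}\ge\theta$ for all $i,j$, summing any single row over its $n$ entries gives $1=\sum_j\Psi_{ij}\ge n\theta$, which establishes $n\theta\le 1$.

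For the contraction estimate I would fix two arbitrary indices $p,q\in\mathcal N$ and write
\[
z_p(s+B)-z_q(s+B)=\sum_{j}(\Psi_{pj}-\Psi_{qj})\,z_j(s).
\]
Because both rows of $\Psi$ sum to $1$, the coefficients $\Psi_{pj}-\Psi_{qj}$ sum to zero, so I may replace $z_j(s)$ by $z_j(s)-m(s)$ without changing the right-hand side. Splitting the index set into $J^{+}=\{j:\Psi_{pj}\ge\Psi_{qj}\}$ and its complement, the terms with $j\notin J^{+}$ have nonpositive coefficient multiplied by the nonnegative quantity $z_j(s)-m(s)$ and may be discarded, while on $J^{+}$ I bound $z_j(s)-m(s)\le M(s)-m(s)$. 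This yields
\[
z_p(s+B)-z_q(s+B)\le\big(M(s)-m(s)\big)\sum_{j\in J^{+}}(\Psi_{pj}-\Psi_{qj}).
\]

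The heart of the argument is the bound $\sum_{j\in J^{+}}(\Psi_{pj}-\Psi_{qj})\le 1-n\theta$, i.e.\ that the total-variation distance between any two rows of $\Psi$ is at most $1-n\theta$. I would prove it by estimating $\sum_{j\in J^{+}}\Psi_{pj}\le 1-(n-|J^{+}|)\theta$ (using $\Psi_{pj}\ge\theta$ on the complement of $J^{+}$) together with $\sum_{j\in J^{+}}\Psi_{qj}\ge|J^{+}|\theta$, so that the difference is at most $1-(n-|J^{+}|)\theta-|J^{+}|\theta=1-n\theta$. Taking $p$ and $q$ to be the indices attaining $M(s+B)$ and $m(s+B)$ then gives exactly $M(s+B)-m(s+B)\le(1-n\theta)\big(M(s)-m(s)\big)$.

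The step I expect to be the genuine obstacle is obtaining the sharp coefficient $1-n\theta$ rather than the weaker $1-2\theta$ that a careless argument produces: simply comparing each new coordinate to the extreme coordinates of $z(s)$ uses only the single lower bound $\Psi_{\cdot j}\ge\theta$ for one column, whereas the stated constant requires exploiting that \emph{every} entry of $\Psi$ is at least $\theta$, which is precisely what the row total-variation estimate above captures. Everything else is bookkeeping with the stochasticity of $\Psi$.
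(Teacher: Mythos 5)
Your proof is correct, but it takes a different (equally standard) route to the contraction estimate than the paper does. The paper extracts a uniform part from each row: it writes $z_i(s+B)=\theta\sum_j z_j(s)+\sum_j\hat\Phi_{ij}z_j(s)$ with $\hat\Phi_{ij}=[\tilde\Phi(s+B-1,s)]_{ij}-\theta\ge 0$ and row sums $\sum_j\hat\Phi_{ij}=1-n\theta$, sandwiches the second sum between $(1-n\theta)m(s)$ and $(1-n\theta)M(s)$, and lets the common term $\theta\sum_j z_j(s)$ cancel when subtracting the bounds for $m(s+B)$ and $M(s+B)$. You instead compare two rows $p,q$ directly and bound the positive part of their difference, $\sum_{j\in J^+}(\Psi_{pj}-\Psi_{qj})\le 1-(n-|J^+|)\theta-|J^+|\theta=1-n\theta$ --- i.e.\ you bound the Dobrushin ergodicity coefficient of $\Psi$ by $1-n\theta$. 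Both arguments use exactly the same input (every entry at least $\theta$ plus stochasticity) and land on the same constant; the paper's version is slightly shorter in bookkeeping, while yours isolates a reusable intermediate quantity (the maximal row total-variation distance), which in principle can be strictly smaller than $1-n\theta$ and hence yields a sharper contraction factor when more is known about $\Psi$. All the individual steps you give (the replacement of $z_j(s)$ by $z_j(s)-m(s)$ using that the coefficients sum to zero, discarding the nonpositive terms, and the choice of $p,q$ attaining the extremes at time $s+B$) are valid, and your derivation of $n\theta\le 1$ matches the paper's.
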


\begin{proof} In view of the linear update rule (\ref{zseq}), we
have for all $i$,
\[z_i(s+B)=\sum_{j=1}^n [\tilde \Phi(s+B-1,s)]_{ij}\, z_j(s)\qquad \hbox{for all }s\ge 0.\]
We rewrite the preceding relation as
\begin{equation}z_i(s+B)=\sum_{j=1}^n \theta z_j(s) + \sum_{j=1}^n
[\hat \Phi(s+B-1,s)]_{ij}\, z_j(s),\label{zevol}\end{equation} where
$[\hat \Phi(s+B-1,s)]_{ij} = [\tilde \Phi(s+B-1,s)]_{ij}-\theta$ for
all $i,j$. Since by assumption $[\tilde \Phi(s+B-1,s)]_{ij}\ge
\theta$ for all $i,j$, we have
\[[\hat \Phi(s+B-1,s)]_{ij}\ge 0\qquad \hbox{for all } i,j. \]
Moreover, since the matrices $H(k)$ are stochastic, the product
matrix $\Phi(s+B-1,s)$ is also stochastic, and therefore we have
\[\sum_{j=1}^n [\hat \Phi(s+B-1,s)]_{ij} = 1-n \theta\qquad \hbox{for all }i.\]
From the preceding two relations, we obtain $1-n \theta\ge 0$ and
\[(1-n\theta) m(s)\le \sum_{j=1}^n [\hat \Phi(s+B-1,s)]_{ij}\, z_j(k) \le (1-n\theta) M(s),\]
where $m(s)$ and $M(s)$ are defined in Eq.\ (\ref{twom}). Combining
this relation with Eq.\ (\ref{zevol}), we obtain for all $i$
\[(1-n\theta) m(s)\le z_i(s+B) - \sum_{j=1}^n \theta z_j(s) \le (1-n\theta) M(s).\]
Since this relation holds for all $i$, we have
\[(1-n\theta) m(s)\le m(s+B) - \sum_{j=1}^n \theta z_j(s), \]
\[M(s+B) - \sum_{j=1}^n \theta z_j(s) \le (1-n\theta) M(s),\]
from which we obtain
\[M(s+B)-m(s+B)\le (1-n\theta)(M(s)-m(s))\qquad\hbox{for all }s\ge 0.\]
\end{proof}

\vskip 1pc

\noindent \textbf{\large Appendix B}

\vskip .5pc

\noindent\textbf{{\large Properties of the Mean Interaction and
Transition Matrices, Sections \ref{convergence} and \ref{global-misinfo}}}

\vskip .5pc

We establish some properties of the mean interaction matrix $\tilde W$ and
the transition matrices $\Phi(k,s)$ under the assumptions discussed in
Section \ref{assumptions}. Recall that transition matrices are given by
\begin{equation}
\Phi(k,s) = W(k)W(k-1)\cdots W(s+1)W(s)\qquad
\hbox{for all $k$
and $s$ with }\ k\ge s,  \label{tranmat}
\end{equation}
with $\Phi(k,k)=W(k)$ for all $k$. Also note that the mean interaction
matrix is given by $\tilde W=E[W(k)]$ for all $k$. In view of the belief
update model (\ref{beliefupdate})-(\ref{updatematrix}), the entries of the
matrix $\tilde W$ can be written as follows. For all $i\in \mathcal{N}$, the
diagonal entries are given by
\begin{equation}
[\tilde W]_{ii} = 1-\frac{\sum_{j\ne i} (p_{ij}+p_{ji})}{n} + {\frac{1}{n}}%
\, \left[ \sum_{j\ne i} p_{ij}\, \Big({\frac{\beta_{ij}}{2}} + \alpha_{ij}
\epsilon + \gamma_{ij}\Big) +\sum_{j\ne i} p_{ji}\, \Big({\frac{\beta_{ji}}{2%
}} + \alpha_{ji} + \gamma_{ji}\Big)\right],  \label{diagelem}
\end{equation}
and for all $i\ne j\in \mathcal{N}$, the off-diagonal entries are given by
\begin{equation}
[\tilde W]_{ij} = {\frac{1}{n}}\, \left[p_{ij}\, \Big({\frac{\beta_{ij}}{2}}
+\alpha_{ij} (1-\epsilon)\Big) + p_{ji}\, {\frac{\beta_{ji}}{2}}\right].
\label{nondiagelem}
\end{equation}

Using the assumptions of Section \ref{assumptions}, Lemma \ref{primitive},
and the explicit expressions for the entries of the matrix $\tilde W$, we
have the following result.

\begin{lemma}
\emph{Let $d$ be the maximum shortest path length between any $i,j$ in the
graph $(\mathcal{N},\mathcal{E})$ [cf.\ Eq.\ (\ref{maxsp})], and $\eta$ be a scalar
given by
\begin{equation}
\eta = \min\left\{\min_{i\in \mathcal{N}}\, [\tilde W]_{ii}, \min_{(i,j)\in
\mathcal{E}}\, [\tilde W]_{ij}\right\},  \label{scalareta}
\end{equation}
[cf.\ Eqs.\ (\ref{diagelem}) and (\ref{nondiagelem})].
\begin{itemize}
\item[(a)] The scalar $\eta$ is positive and we have
\[[\tilde W^d]_{ij}\ge \eta^d\qquad \hbox{for all }i,j.\]
\item[(b)] We have
\[P\Big\{[\Phi(s+d-1,s)]_{ij}\ge \frac{\eta^d}{2}\Big\}\ge \frac{\eta^d}{2}\qquad
\hbox{for all }s\ge 0,\ i,\hbox{ and }j.\]
\end{itemize}
 }\label{tranprimitive}
\end{lemma}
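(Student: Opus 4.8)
The plan is to prove the two parts in order, deriving part (b) from part (a).

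For part (a), the first task is to show that the scalar $\eta$ defined in Eq.\ (\ref{scalareta}) is strictly positive, after which the estimate is an immediate application of Lemma \ref{primitive} to $H=\tilde W$. Positivity of the diagonal entries is the easy direction: every realization of $W(k)$ (namely $A_{ij}$, $J_{ij}$, or $I$, cf.\ Eq.\ (\ref{updatematrix})) has all diagonal entries at least $\epsilon$, since $[A_{ij}]_{ll}\in\{1/2,1\}$, $[J_{ij}]_{ll}\in\{\epsilon,1\}$, and $[I]_{ll}=1$; taking expectations gives $[\tilde W]_{ii}=E\big[[W(k)]_{ii}\big]\ge \epsilon>0$ for every $i$ (this is also consistent with Eq.\ (\ref{diagelem})). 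For the off-diagonal entries I would use the explicit formula (\ref{nondiagelem}): for any $(i,j)\in\mathcal{E}$ we have $p_{ij}>0$, and Assumption \ref{intprob} gives $\beta_{ij}+\alpha_{ij}>0$; since $\epsilon\le 1/2$ forces $1-\epsilon>0$, the coefficient $\tfrac{\beta_{ij}}{2}+\alpha_{ij}(1-\epsilon)$ is strictly positive, so $[\tilde W]_{ij}>0$. Hence $\eta>0$. Now $\tilde W$ satisfies hypotheses (a)--(b) of Lemma \ref{primitive} with the edge set $\mathcal{E}$, which is connected by Assumption \ref{connect}, and the scalar $\eta$ is exactly the quantity appearing in that lemma, so Lemma \ref{primitive} yields $[\tilde W^d]_{ij}\ge \eta^d$ for all $i,j$.

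For part (b), the key observation is that the $d$ factors $W(s),\ldots,W(s+d-1)$ of the transition matrix $\Phi(s+d-1,s)$ are independent and identically distributed with mean $\tilde W$, so by independence $E[\Phi(s+d-1,s)]=\tilde W^{d}$, and part (a) gives $E\big[[\Phi(s+d-1,s)]_{ij}\big]\ge \eta^d$ for each fixed $i,j$. Writing $X=[\Phi(s+d-1,s)]_{ij}$, the matrix $\Phi(s+d-1,s)$ is a product of stochastic matrices and hence stochastic, so $X\in[0,1]$ almost surely. The plan is then to apply a reverse Markov (Paley--Zygmund-type) bound: for any $a\in(0,1)$ one has $E[X]\le a\,P(X<a)+P(X\ge a)\le a+P(X\ge a)$, whence $P(X\ge a)\ge E[X]-a$. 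Choosing $a=\eta^d/2$ and using $E[X]\ge\eta^d$ gives $P\{X\ge \eta^d/2\}\ge \eta^d-\eta^d/2=\eta^d/2$, which is exactly the claimed inequality.

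The routine bookkeeping is the positivity verification in part (a); the only genuinely nontrivial step is the probabilistic estimate in part (b), where the trick is to combine the expectation lower bound from part (a) with the boundedness $X\le 1$ through the \emph{reverse} Markov inequality rather than the usual one-sided Markov inequality. I expect no serious obstacle beyond assembling these two ingredients correctly, since everything else follows from earlier results and the i.i.d.\ structure of the matrices $W(k)$.
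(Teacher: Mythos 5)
Your proposal is correct and follows essentially the same route as the paper: establish $\eta>0$ from the explicit entries of $\tilde W$ together with Assumptions \ref{comprob}--\ref{intprob}, invoke Lemma \ref{primitive} with $H=\tilde W$ for part (a), and then combine $E[\Phi(s+d-1,s)]=\tilde W^{d}$ with a one-sided tail bound exploiting $[\Phi]_{ij}\le 1$ for part (b). The only cosmetic differences are that you bound the diagonal entries by noting every realization of $W(k)$ has diagonal at least $\epsilon$ (the paper instead reads positivity off Eq.\ (\ref{diagelem})), and you state the reverse Markov inequality directly, whereas the paper applies the ordinary Markov inequality to $1-[\Phi(s+d-1,s)]_{ij}$ --- the same estimate in different clothing.
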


\begin{proof}
\noindent (a) \ We show that under Assumptions \ref{comprob} and \ref{intprob}, the mean
interaction matrix $\tilde W$ has positive diagonal entries and the set ${\cal E}$ [cf.\ Eq.\
(\ref{edges})] is a subset of the link set induced by the positive elements of $\tilde W$. Together
with the Connectivity assumption, part (a) then follows from Lemma \ref{primitive}.

By Assumption \ref{comprob}, we have for all $i$, $\sum_{j\ne i}
p_{ij}=1$ and $p_{ij}\ge 0$ for all $j$. This implies that
$\sum_{j\ne i} p_{ji}\le n-1$ and therefore
\[1-\frac{\sum_{j\ne i} (p_{ij}+p_{ji})}{n}\ge 0\qquad \hbox{for all }i.\]
Since $\sum_{j\ne i} p_{ij}=1$ for all $i$, there exists some $j$
such that $p_{ij}>0$, i.e., $(i,j)\in {\cal E}$. In view of the information exchange model,
we have $\beta_{ij}>0$ or $\alpha_{ij}>0$ or $\gamma_{ij}>0$, implying that
\[p_{ij}\, \Big({\beta_{ij}\over 2}+ \alpha_{ij} \epsilon + \gamma_{ij}\Big)>0.\]
Combining the preceding two relations with Eq.\ (\ref{diagelem}), we
obtain \begin{equation}[\tilde W]_{ii}>0\qquad \hbox{for all  }
i.\label{posdi}\end{equation}

We next show that for any link $(i,j)$ in the set ${\cal E}$, the entry $[\tilde W]_{ij}$ is
positive, i.e.,
\[{\cal E}\subset \{(i,j)\ |\ [\tilde W]_{ij}>0\}.\]
For any $(i,j)\in {\cal E}$, we have $p_{ij}>0$, and therefore
$\beta_{ij}+\alpha_{ij}>0$ (cf.\ Assumption \ref{intprob}). This
implies that
\[p_{ij}\, \Big({\beta_{ij}\over 2} +\alpha_{ij} (1-\epsilon)\Big)>0,\]
which by Eq.\ (\ref{nondiagelem}) yields $[\tilde W]_{ij}>0$.
Together with Eq.\ (\ref{posdi}), this shows that the scalar $\eta$
defined in (\ref{scalareta}) is positive. By Assumption
\ref{connect}, the graph $({\cal N,E})$ is connected. Using the
identification $H=\tilde W$ in Lemma \ref{primitive}, we see that
the conditions of this lemma are satisfied, establishing part (a).

\noindent (b) \ For all $i,j$ and $s\ge 0$, we have
\begin{eqnarray}
P\Big\{[\Phi(s+d-1,s)]_{ij}\ge \frac{\eta^d}{2}\Big\} &=&
P\Big\{1-[\Phi(s+d-1,s)]_{ij}\le 1- \frac{\eta^d}{2}\Big\}\nonumber \\
&=& 1- P\Big\{1-[\Phi(s+d-1,s)]_{ij}\ge 1-
\frac{\eta^d}{2}\Big\}.\label{Mapply}
\end{eqnarray}
The Markov Inequality states that for any nonnegative random
variable $Y$ with a finite mean $E[Y]$, the probability that the
outcome of the random variable $Y$ exceeds any given scalar
$\delta>0$ satisfies
\[{\rm P} \{Y\ge \delta\}\le \frac{E[Y]}{\delta}.\]
By applying the Markov inequality to the random variable
$1-[\Phi(s+d-1,s)]_{ij}$ [which is nonnegative and has a finite
expectation in view of the stochasticity of the matrix
$\Phi(s+d-1,s)$ for all $s\ge 0$], we obtain
\[P\Big\{1-[\Phi(s+d-1,s)]_{ij}\ge 1-
\frac{\eta^d}{2}\Big\} \le \frac{E[1-[\Phi(s+d-1,s)]_{ij}]}{1-
\eta^d/2}.\] Combining with Eq.\ (\ref{Mapply}), this yields
\begin{equation}P\Big\{[\Phi(s+d-1,s)]_{ij}\ge \frac{\eta^d}{2}\Big\}  \ge
1-\frac{E[1-[\Phi(s+d-1,s)]_{ij}]}{1-
\eta^d/2}.\label{yeter}\end{equation} By the definition of the
transition matrices [cf.\ Eq.\ (\ref{tranmat})], we have
\[E[\Phi(s+d-1,s)] = E[W(s+d-1)W(s+d-2)\cdots W(s)] = \tilde W^d,\]
where the second equality follows from the assumption that $W(k)$ is
independent and identically distributed over $k$. By part (a), this
implies that
\[[E[\Phi(s+d-1,s)]]_{ij}\ge \eta^d\qquad \hbox{for all }i,j,\]
which combined with Eq.\ (\ref{yeter}) yields
\begin{eqnarray*}
P\Big\{[\Phi(s+d-1,s)]_{ij}\ge \frac{\eta^d}{2}\Big\} \ge
1-\frac{1-\eta^d}{1-\eta^d/2} = \frac{\eta^d/2}{1-\eta^d/2} \ge
\frac{\eta^d}{2},
\end{eqnarray*}
establishing the desired result.
\end{proof}

The next two lemmas establish properties of transition matrices.

\begin{lemma}
\emph{
\begin{itemize}
\item[(a)] $[\Phi(k,s)]_{ii}\ge \epsilon^{k-s+1}$ for all $k$ and
$s$ with $k\ge s$, and all $i\in {\cal N}$ with probability one.
\item[(b)] Assume that there exist integers $K, B\ge 1$ and a scalar $\xi>0$ such that for
some $s\ge 0$ and $k\in \{0,\ldots,K\}$, we have
\[[\Phi(s+(k+1)B-1,s+kB)]_{ij}\ge \xi\qquad \hbox{for some }i,j.\]
Then,
\[[\Phi(s+KB-1,s)]_{ij}\ge \xi \epsilon^{K-1}\qquad \hbox{with probability one}.\]
\end{itemize}
}\label{tranpos}
\end{lemma}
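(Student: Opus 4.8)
The plan is to prove the two parts separately, building part (b) on top of part (a); both are pathwise statements about the realized matrices $W(k)$, so the phrase ``with probability one'' follows automatically once each inequality is shown to hold for every realization.

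For part (a), I would first record the elementary but crucial fact that \emph{every} realization of $W(k)$ is a nonnegative matrix all of whose diagonal entries are at least $\epsilon$. Checking the three cases in \eqref{updatematrix}: the averaging matrix $A_{ij}$ has diagonal entries $1/2$ in positions $i,j$ and $1$ elsewhere (and $1/2\ge\epsilon$ since $\epsilon\in(0,1/2]$ by \eqref{selfweight}); the influence matrix $J_{ij}$ has $[J_{ij}]_{ii}=\epsilon$ with all other diagonal entries equal to $1$; and $I$ has all diagonal entries $1$. All off-diagonal entries are nonnegative in each case, so any product of such matrices is again nonnegative. Using the factorization $\Phi(k,s)=W(k)\,\Phi(k-1,s)$ and discarding all but the $l=i$ term in $[\Phi(k,s)]_{ii}=\sum_l [W(k)]_{il}[\Phi(k-1,s)]_{li}$ (legitimate because every term is nonnegative), I obtain $[\Phi(k,s)]_{ii}\ge [W(k)]_{ii}\,[\Phi(k-1,s)]_{ii}\ge \epsilon\,[\Phi(k-1,s)]_{ii}$. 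A downward induction with base case $[\Phi(s,s)]_{ii}=[W(s)]_{ii}\ge\epsilon$ then delivers $[\Phi(k,s)]_{ii}\ge\epsilon^{\,k-s+1}$, as claimed.

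For part (b), let $k_0\in\{0,\dots,K-1\}$ be the index of the (length-$B$) window for which the hypothesis $[\Phi(s+(k_0+1)B-1,\,s+k_0B)]_{ij}\ge\xi$ holds. Using the concatenation property $\Phi(k,s)=\Phi(k,t)\,\Phi(t-1,s)$ of the transition matrices (immediate from \eqref{tranmat}), I would split the full window into a \emph{prefix}, the \emph{good middle window}, and a \emph{suffix}:
\[
\Phi(s+KB-1,\,s)=\Phi(s+KB-1,\,s+(k_0+1)B)\cdot\Phi(s+(k_0+1)B-1,\,s+k_0B)\cdot\Phi(s+k_0B-1,\,s),
\]
with the convention that an empty prefix or suffix (when $k_0=0$ or $k_0=K-1$) contributes the identity. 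By nonnegativity of all three factors, the $(i,j)$ entry of the product is bounded below by the single ``path'' that stays at state $j$ throughout the prefix, makes the $j\to i$ transition in the middle window, and stays at state $i$ throughout the suffix:
\[
[\Phi(s+KB-1,s)]_{ij}\ge[\Phi(s+KB-1,s+(k_0+1)B)]_{ii}\cdot[\Phi(s+(k_0+1)B-1,s+k_0B)]_{ij}\cdot[\Phi(s+k_0B-1,s)]_{jj}.
\]
The middle factor is at least $\xi$ by hypothesis, while the suffix diagonal entry $[\cdot]_{ii}$ and the prefix diagonal entry $[\cdot]_{jj}$ are each bounded below by applying part (a) to the corresponding sub-window. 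Multiplying these three bounds yields a lower bound of the form $\xi\,\epsilon^{\,p}$, where the exponent $p$ is the number of ``stay'' steps contributed by the prefix and suffix, giving the bound in the statement.

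The genuinely substantive step is the second one: lower-bounding a single prescribed entry of a long product of stochastic matrices by one admissible trajectory, which is exactly where nonnegativity of all the $W(k)$ is used essentially. Everything else is bookkeeping — verifying the uniform $\epsilon$-lower bound on the diagonals across the three matrix types in \eqref{updatematrix}, and tracking the power of $\epsilon$ coming out of part (a). The main obstacle I anticipate is organizational rather than conceptual: making the window indices ($s+kB$ and $s+(k+1)B-1$) line up so that prefix, middle, and suffix partition $[s,\,s+KB-1]$ exactly, handling the degenerate empty-prefix/suffix cases, and confirming that the exponent of $\epsilon$ produced by part (a) matches the one asserted in the statement.
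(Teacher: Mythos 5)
Your proposal is correct and takes essentially the same route as the paper: part (a) is the same induction using nonnegativity of the realized $W(k)$ and the uniform diagonal bound $[W(k)]_{ii}\ge\epsilon$ (checking $A_{ij}$, $J_{ij}$, $I$ exactly as you do), and part (b) is the same prefix/good-window/suffix splitting of $\Phi(s+KB-1,s)$ with part (a) applied to the diagonal entries of the prefix and suffix blocks (the paper merely peels the two outer blocks off in two successive steps rather than writing the three-factor product at once). The one issue you flagged at the end is real: the exponent that comes out of this argument is $(K-k-1)B+kB=(K-1)B$, so you obtain $\xi\,\epsilon^{(K-1)B}$ rather than the stated $\xi\,\epsilon^{K-1}$, which is strictly stronger when $B>1$ since $\epsilon\le 1/2$; the paper's own proof produces exactly the same exponent, so this is a mismatch between the lemma's statement and its proof rather than a gap in your approach (and the weaker bound $\xi\,\epsilon^{(K-1)B}$ still suffices for the way the lemma is used downstream, up to adjusting constants in Lemma \ref{posprob}).
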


\begin{proof}
\noindent{(a)} \ We let $s$ be arbitrary and prove the relation by
induction on $k$. By the definition of the transition matrices [cf.\
Eq.\ (\ref{tranmat})], we have $\Phi(s,s) = W(s)$. Thus, the
relation $[\Phi(k,s)]_{ii}\ge \epsilon^{k-s+1}$ holds for $k=s$ from
the definition of the update matrix $W(k)$ [cf.\ Eq.\
(\ref{updatematrix})]. Suppose now that the relation holds for some
$k>s$ and consider $[\Phi(k+1,s)]_{ii}$. We have
\[[\Phi(k+1,s)]_{ii} = \sum_{h=1}^n [W(k+1)]_{ih}[\Phi(k,s)]_{hi}\ge
[W(k+1)]_{ii}[\Phi(k,s)]_{ii}\ge \epsilon^{k-s+2},\] where the first
inequality follows from the nonnegativity of the entries of
$\Phi(k,s)$, and the second inequality follows from the inductive
hypothesis.

\noindent{(b)} \ For any $s\ge 0$, we have
\begin{eqnarray*}
[\Phi(s+KB-1,s)]_{ij}&=& \sum_{h=1}^n [\Phi(s+KB-1,s+(k+1)B)]_{ih}
[\Phi(s+(k+1)B-1,s)]_{hj}\\
&\ge& [\Phi(s+KB-1,s+(k+1)B)]_{ii} [\Phi(s+(k+1)B-1,s)]_{ij} \\&\ge&
\epsilon^{(K-k-1)B} [\Phi(s+(k+1)B-1,s)]_{ij},
\end{eqnarray*}
where the last inequality follows from part (a). Similarly,
\begin{eqnarray*}
[\Phi(s+(k+1)B-1,s)]_{ij} &=& \sum_{h=1}^n
[\Phi(s+(k+1)B-1,s+kB)]_{ih} [\Phi(s+kB-1,s)]_{hj} \\
&\ge& [\Phi(s+(k+1)B-1,s+kB)]_{ij} [\Phi(s+kB-1,s)]_{jj}\\& \ge& \xi
\epsilon^{kB},
\end{eqnarray*}
where the second inequality follows from the assumption
$[\Phi(s+(k+1)B-1,s+kB)]_{ij}\ge \xi$ and part (a). Combining the
preceding two relations yields the desired result.
\end{proof}

\begin{lemma}
\emph{We have
\begin{equation*}
P\left\{ [\Phi(s+ n^2d -1,s)]_{ij}\ge {\frac{\eta^d}{2}} \epsilon^{n^2-1},\ %
\hbox{for all }i,j\right\} \ge \left(\frac{\eta^d}{2}\right)^{n^2}\qquad %
\hbox{for all }s\ge 0,
\end{equation*}
where the scalar $\eta>0$ and the integer $d$ are the constants defined in
Lemma \ref{tranprimitive}.}\label{posprob}
\end{lemma}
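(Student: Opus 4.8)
The plan is to exploit the fact that the window $[s,\,s+n^2d-1]$ of length $n^2d$ splits into exactly $n^2$ consecutive sub-windows of length $d$, while the matrix $\Phi(s+n^2d-1,s)$ has exactly $n^2$ entries. First I would index the sub-windows by $k=0,1,\ldots,n^2-1$, where the $k$-th sub-window is the block of time slots $[s+kd,\,s+(k+1)d-1]$, and fix an arbitrary bijection between the $n^2$ ordered pairs $(i,j)$ and these $n^2$ sub-windows. For a pair $(i,j)$ assigned to sub-window $k$, define the event
\[
E_{ij}=\Big\{[\Phi(s+(k+1)d-1,\,s+kd)]_{ij}\ge \tfrac{\eta^d}{2}\Big\}.
\]
By Lemma \ref{tranprimitive}(b), applied with starting time $s+kd$ in place of $s$, each such event has probability at least $\eta^d/2$.

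The second step is a probabilistic one. Since distinct pairs are assigned to distinct sub-windows, each $E_{ij}$ is measurable with respect to the matrices $W(\cdot)$ lying in its own sub-window, and these index sets are pairwise disjoint. Because the matrices $W(k)$ are independent across $k$, the events $\{E_{ij}\}$ are mutually independent, so
\[
P\Big(\bigcap_{i,j}E_{ij}\Big)=\prod_{i,j}P(E_{ij})\ge\Big(\tfrac{\eta^d}{2}\Big)^{n^2}.
\]

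The final step is deterministic and upgrades the local bounds into a single global bound over the whole window. On the event $\bigcap_{i,j}E_{ij}$, every entry $(i,j)$ enjoys the lower bound $\eta^d/2$ on its assigned sub-window. For each pair I would invoke Lemma \ref{tranpos}(b) with $B=d$, $K=n^2$, $\xi=\eta^d/2$, and $k$ equal to the sub-window assigned to $(i,j)$; this yields $[\Phi(s+n^2d-1,s)]_{ij}\ge \tfrac{\eta^d}{2}\,\epsilon^{n^2-1}$ with probability one. Because there are only finitely many ($n^2$) pairs, the intersection of these almost-sure conclusions is again almost sure, so up to a null set $\bigcap_{i,j}E_{ij}$ is contained in the event that \emph{all} entries of $\Phi(s+n^2d-1,s)$ are at least $\tfrac{\eta^d}{2}\,\epsilon^{n^2-1}$. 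Combining with the probability bound of the previous step gives the claim.

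The step I expect to be the main obstacle is reconciling the probabilistic and the deterministic parts. I must choose the bijection so that the $n^2$ ``good-entry'' events live on disjoint time blocks, which is precisely what delivers independence and the clean product $(\eta^d/2)^{n^2}$; but I must simultaneously be sure that the sure-event propagation of Lemma \ref{tranpos}(b) can be run separately for every entry without the blocks interfering. The resolution is to observe that the propagation in Lemma \ref{tranpos}(b) consumes only the almost-sure diagonal estimates of Lemma \ref{tranpos}(a) on the time slots outside the good sub-window, so it applies to each entry using only its own block and imposes no competing requirements on the others. Once this is noted, both requirements are met and the proof closes.
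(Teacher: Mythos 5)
Your proposal is correct and follows essentially the same route as the paper: both partition the window $[s,\,s+n^2d-1]$ into $n^2$ blocks of length $d$, assign one block to each of the $n^2$ entries via a bijection, lower-bound each block event's probability by $\eta^d/2$ using Lemma \ref{tranprimitive}(b), multiply using independence of the $W(k)$ across disjoint blocks, and propagate each block's entrywise bound to the full window via Lemma \ref{tranpos}(b) at the cost of the factor $\epsilon^{n^2-1}$. Your additional remarks on why the per-entry propagation arguments do not interfere are a sound elaboration of what the paper leaves implicit.
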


\begin{proof} Consider a particular ordering of the elements of an $n\times
n$ matrix and let $k_{ij}\in \{0,\ldots,n^2-1\}$ denote the unique
index for element $(i,j)$. From Lemma \ref{tranpos}(b), we have
\begin{eqnarray*}
&&P\Big\{ [\Phi(s+ n^2d -1,s)]_{ij}\ge {\eta^d\over 2}
\epsilon^{n^2-1},\ \hbox{for all }i,j\Big\}
\\&&\qquad\qquad\qquad\ge P\Big\{ [\Phi(s+ (k_{ij}+1)d
-1,s+k_{ij}d)]_{ij}\ge {\eta^d\over 2},\
\hbox{for all }i,j\Big\} \\
&&\qquad\qquad\qquad=  \prod_{(i,j)} P\Big\{\Phi(s+ (k_{ij}+1)d
-1,s+k_{ij}d)]_{ij}\ge {\eta^d\over 2}\Big\}\\
&&\qquad\qquad\qquad\ge \left(\frac{\eta^d}{2}\right)^{n^2}.
\end{eqnarray*}
Here the second equality follows from the independence of the random
events
\[\Big\{\Phi(s+ (k+1)d -1,s+kd)]_{ij}\ge {\eta^d\over 2}\Big\}\] over
all $k=0,\ldots,n^2-1$, and the last inequality follows from Lemma
\ref{tranprimitive}(b).
\end{proof}

\vskip 1pc

\noindent \textbf{\large Appendix C}

\vskip .5pc

\noindent\textbf{{\large Properties of the Social Network Matrix, Section %
\ref{global-misinfo}}}

\vskip .5pc

The next lemma studies the properties of the social network matrix $T$. Note
that the entries of the matrix T can be written as follows: For all $i\in
\mathcal{N}$, the diagonal entries are given by
\begin{equation}
\lbrack T]_{ii}=1-\frac{\sum_{j\neq i}(p_{ij}+p_{ji})}{n}+{\frac{1}{n}}\,%
\left[ \sum_{j\neq i}p_{ij}\,\Big({\frac{1-\gamma _{ij}}{2}}+\gamma _{ij}%
\Big)+\sum_{j\neq i}p_{ji}\,\Big({\frac{1-\gamma _{ji}}{2}}+\gamma _{ji}\Big)%
\right] ,  \label{Tdiagelem}
\end{equation}%
and for all $i\neq j\in \mathcal{N}$, the off-diagonal entries are given by
\begin{equation}
\lbrack T]_{ij}={\frac{1}{n}}\,\left[ p_{ij}\,{\frac{1-\gamma _{ij}}{2}}%
+p_{ji}\,{\frac{1-\gamma _{ji}}{2}}\right] .  \label{Toffelem}
\end{equation}

\begin{lemma}
\emph{Let $T$ be the social network matrix [cf.\ Eq.\ (\ref{decomposition}%
)]. Then, we have:
\begin{itemize}
\item[(a)] The matrix $T^k$
converges to a stochastic matrix with identical rows ${1\over n} e$
as $k$ goes to infinity, i.e.,
\[\lim_{k\to \infty} T^k = {1\over n} e e'.\]
\item[(b)] For any $z(0)\in \mathbb{R}^n$, let the sequence $z(k)$
be generated by the linear update rule
\[z(k)=T z(k-1)\qquad \hbox{for all }k\ge 0.\]
For all $k\ge 0$, define $M(k)\in \mathbb{R}$ and $m(k)\in
\mathbb{R}$ as follows:
\[M(k) = \max_{i\in {\cal N}}\, z_i(k),\qquad m(k) = \min_{i\in {\cal N}}\, z_i(k).\]
Then, for all $k\ge 0$, we have
\[M(k)-m(k)\le \delta^k(M(0)-m(0)).\]
Here $\delta>0 $ is a constant given by
\[\delta = (1-n\chi^d)^{1\over d},\]
\[\chi=\min_{(i,j)\in {\cal E}} \left\{{1\over n} \Big[p_{ij} \,
{1-\gamma_{ij}\over 2} + p_{ji} \, {1-\gamma_{ji}\over 2}
\Big]\right\},\] and $d$ is the maximum shortest path length in the
graph $({\cal N}, {\cal E})$ [cf.\ Eq.\ (\ref{maxsp})].
\end{itemize}
}\label{dslim}
\end{lemma}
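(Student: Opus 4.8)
The plan is to treat the two parts separately, with both resting on the primitivity of $T$ established through Lemma~\ref{primitive}. For part~(a), I would first verify that $T$ satisfies the hypotheses of that lemma under the identification $H=T$. Using the explicit entries in Eqs.~(\ref{Tdiagelem}) and (\ref{Toffelem}): the diagonal entries are positive, since $\sum_{j\ne i}p_{ij}=1$ (Assumption~\ref{comprob}) forces some $p_{ij}>0$ and the attached coefficient $\tfrac{1-\gamma_{ij}}{2}+\gamma_{ij}=\tfrac{1+\gamma_{ij}}{2}$ is strictly positive; and every edge $(i,j)\in\mathcal{E}$ gives $[T]_{ij}>0$, because $p_{ij}>0$ and $1-\gamma_{ij}=\beta_{ij}+\alpha_{ij}>0$ by Assumption~\ref{intprob}. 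Connectivity (Assumption~\ref{connect}) supplies the connected induced graph, so $T$ is primitive and the Markov chain with transition matrix $T$ is regular. Since $T$ is doubly stochastic, the uniform vector $\tfrac1n e$ satisfies $\tfrac1n e'T=\tfrac1n e'$ and is therefore its unique stationary distribution; Theorem~\ref{MC}(a) then gives $\lim_{k\to\infty}T^k=e(\tfrac1n e)'=\tfrac1n ee'$.

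For part~(b) the key quantitative input is a uniform lower bound on the entries of $T^d$. I would first observe that the diagonal entries of $T$ are in fact at least $\tfrac12$: by row-stochasticity $[T]_{ii}=1-\sum_{j\ne i}[T]_{ij}$, and each off-diagonal entry obeys $[T]_{ij}\le\tfrac{1}{2n}(p_{ij}+p_{ji})$, so that $\sum_{j\ne i}[T]_{ij}\le\tfrac{1}{2n}\sum_{j\ne i}(p_{ij}+p_{ji})\le\tfrac12$. Consequently the scalar $\eta$ in Lemma~\ref{primitive} coincides with the off-diagonal minimum $\chi$ (since $\chi\le\tfrac1n\le\tfrac12\le\min_i[T]_{ii}$), and the lemma yields $[T^d]_{ij}\ge\chi^d$ for all $i,j$, with $d$ the maximum shortest path length of $(\mathcal{N},\mathcal{E})$ [cf.\ Eq.~(\ref{maxsp})].

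With this bound I would invoke Lemma~\ref{lyapdec} for the constant sequence $H(k)=T$, window $B=d$, and $\theta=\chi^d$, noting that then $\tilde\Phi(s+d-1,s)=T^d$ has all entries at least $\chi^d$ uniformly in $s$. This delivers $n\chi^d\le1$ together with the one-window contraction $M(s+d)-m(s+d)\le(1-n\chi^d)\bigl(M(s)-m(s)\bigr)$ for every $s\ge0$. Iterating over consecutive windows, and using that $\{M(k)-m(k)\}$ is nonincreasing (a consequence of the stochasticity of $T$, exactly as in the proof of Theorem~\ref{convconsensus}), gives $M(k)-m(k)\le(1-n\chi^d)^{\lfloor k/d\rfloor}\bigl(M(0)-m(0)\bigr)$, which I would rewrite through $\delta=(1-n\chi^d)^{1/d}$ to obtain the stated geometric decay at rate $\delta$.

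Part~(a) is essentially bookkeeping once primitivity and double-stochasticity are in hand. The main work—and the step I expect to be most delicate—is the quantitative estimate in part~(b): pinning down that the lower-bound constant coming out of Lemma~\ref{primitive} is precisely $\chi$ rather than a smaller diagonal minimum (this is exactly why the $[T]_{ii}\ge\tfrac12$ observation is needed, so that the self-loops used to pad shortest paths to length $d$ never degrade the $\chi^d$ estimate), and then carefully converting the per-window contraction into the per-step rate $\delta^{k}$ by leaning on the monotonicity of the disagreement $M(k)-m(k)$.
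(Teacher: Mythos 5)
Your proposal follows essentially the same route as the paper's proof: verify via Eqs.~(\ref{Tdiagelem})--(\ref{Toffelem}) that $T$ satisfies the hypotheses of Lemma~\ref{primitive} with the lower-bound constant equal to the off-diagonal minimum $\chi$ (so that $[T^d]_{ij}\ge\chi^d$ and $T$ is primitive), deduce part~(a) from regularity, Theorem~\ref{MC}(a) and double stochasticity, and deduce part~(b) by applying Lemma~\ref{lyapdec} with $H(k)=T$, $B=d$, $\theta=\chi^d$. The only differences are cosmetic: you show $[T]_{ii}\ge\tfrac12\ge\chi$ where the paper shows $[T]_{ii}\ge\sum_{j\neq i}[T]_{ij}$, and your explicit exponent $\lfloor k/d\rfloor$ (which literally yields $\delta^{d\lfloor k/d\rfloor}\ge\delta^k$ rather than $\delta^k$ itself) makes visible the same mild looseness that the paper's own $(1-n\chi^d)^{k/d}$ glosses over.
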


\begin{proof} (a) \  By Assumption \ref{comprob}, we have for all
$i$, $\sum_{j\ne i} p_{ij}=1$ and $p_{ij}\ge 0$ for all $j$. This
implies that $\sum_{j\ne i} p_{ji}\le n-1$ and therefore
\begin{equation}1-\frac{\sum_{j\ne i} (p_{ij}+p_{ji})}{n}\ge 0\qquad
\hbox{for all }i.\label{bound}\end{equation} Since $\sum_{j\ne i}
p_{ij}=1$ for all $i$, there exists some $j$ such that $p_{ij}>0$,
i.e., $(i,j)\in {\cal E}$. By Assumption \ref{intprob}, this implies
that $\beta_{ij}+\alpha_{ij}= 1-\gamma_{ij}>0$, showing that
$T_{ii}>0$ for all $i$. Similarly, for any
 $(i,j)\in {\cal E}$, we have $p_{ij}>0$ and therefore
 $1-\gamma_{ij}>0$, showing that $T_{ij}>0$ for all $(i,j)\in {\cal
 E}$. Using Eq.\ (\ref{bound}) in Eqs.\ (\ref{Tdiagelem}) and (\ref{Toffelem}),
 it follows that for all $i$
\[[T]_{ii}\ge T_{ij}\qquad \hbox{for all }j.\]
Thus, we can use Lemma \ref{primitive} with the identification
\begin{equation}\chi = \min_{(i,j)\in {\cal E}} \left\{ {1\over n}\, \left[p_{ij}\,
{1-\gamma_{ij}\over 2} + p_{ji}\, {1-\gamma_{ji}\over
2}\right]\right\},\label{asu}\end{equation} and obtain
\begin{equation}[T^d]_{ij}\ge \chi^d\qquad \hbox{for all }i,j,\label{Tprim}\end{equation} i.e., $T$ is a
primitive matrix and therefore the Markov Chain with transition
probability matrix $T$ is regular. It follows from Theorem
\ref{MC}(a) that for any $z(0)\in \mathbb{R}^n$, we have
\[\lim_{k\to \infty} T^k z(0) = e\bar{z},\]
where $\bar{z}$ is given by $\bar{z}=\pi'z(0)$ for some probability
vector $\pi$. Since $T$ is a stochastic and symmetric matrix, it is
doubly stochastic. Denoting $z(k)=T^kz(0)$, this implies that the
average of the entries of the vector $z(k)$ is the same for all $k$,
i.e.,
\[{1\over n} \sum_{i=1}^n z_i(k) ={1\over n} \sum_{i=1}^n z_i(0)\qquad \hbox{for all }k\ge 0.\]
Combining the preceding two relations, we obtain
\[\lim_{k\to \infty} {1\over n} \sum_{i=1}^n z_i(k)= \bar{z} ={1\over n} \sum_{i=1}^n z_i(0),\]
establishing the desired relation.

\noindent (b) \ In view of Eq.\ (\ref{Tprim}), we can use Lemma
\ref{lyapdec} with the identifications
\[H(k)=T, \quad B=d,\quad \theta = \chi^d,\]
where $\chi$ is defined in Eq.\ (\ref{asu}), and obtain
\[M(k)-m(k)\le (1-n\chi^d)^{k\over d}(M(0)-m(0)).\]\end{proof}

\vskip 1pc

\noindent \textbf{\large Appendix D} \vskip .5pc \noindent \textbf{{\large %
Characterization of the Mean Commute Time, Section \ref{local-misinfo}}}

\vskip .5pc

First, we characterize the mean commute time between two nodes for a random
walk on an undirected graph using Dirichlet principle and its dual,
Thompson's principle.

\begin{definition}
\label{dirichlet_form} Consider a random walk on a weighted undirected graph
$(\mathcal{N}, \mathcal{A})$ with weight $w_{ij}$ associated to each edge $%
\{i,j\}$. Define the \emph{Dirichlet form} $\mathcal{E}$, as follows. For
functions $g:\mathcal{N }\rightarrow \mathbb{R }$ write
\begin{equation*}
\mathcal{E}(g,g) = {\frac{1}{2}} \sum_{i,j}\frac{w_{ij}}{w} \Big(g(i) - g(j)%
\Big)^2,
\end{equation*}
where $w = \sum_{i,j}w_{ij}$ is the total edge weight.
\end{definition}

\begin{lemma}
\label{extrmal_commute} \emph{Consider a random walk on a weighted undirected graph with weight
$w_{ij}$ associated to each edge $\{i,j\}$. For mean commute time between distinct nodes $a$ and
$b$ we have,
\begin{eqnarray}  \label{Dirichlet}
m_{ab} + m_{ba} &=& \sup \bigg\{ \frac{1}{\mathcal{E}(g,g)}: 0 \leq g\leq 1,
g(a) =0, g(b) = 1 \bigg\} \\
&=& w \inf \bigg\{ \frac{1}{2} \sum_{i,j} \frac{f_{ij}^2}{w_{ij}}: f \text{
is a unit flow from $a$ to $b$} \bigg\},  \label{Thompson}
\end{eqnarray}
where $m_{ab}$ is the mean first passage time from $a$ to $b$, and $w$ is
the total edge weight.}
\end{lemma}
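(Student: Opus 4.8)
The plan is to prove both identities through the standard correspondence between the reversible random walk and an electrical network in which each edge $\{i,j\}$ carries conductance $w_{ij}$. Writing $w_i=\sum_j w_{ij}$ and $w=\sum_{i,j}w_{ij}$, the walk has transition probabilities $p_{ij}=w_{ij}/w_i$ and reversible stationary distribution $\pi_i=w_i/w$. The two displayed formulas are then the classical \emph{Dirichlet principle} (a variational characterization of effective conductance by potentials) and \emph{Thompson principle} (a variational characterization of effective resistance by unit flows), combined with the \emph{commute time identity}. I would organize the argument as three steps: (i) prove $m_{ab}+m_{ba}=w\,R_{ab}$, where $R_{ab}$ is the effective resistance between $a$ and $b$; (ii) deduce (\ref{Dirichlet}) from Dirichlet's principle; (iii) deduce (\ref{Thompson}) from Thompson's principle.

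The crux is the commute time identity, and I expect it to be the main obstacle, since it requires correctly identifying mean hitting times with electrical potentials and keeping the superposition bookkeeping straight; everything after it is rewriting. For fixed $b$, the function $g(x)=m_{xb}=\mathbb{E}_x[T_b]$ satisfies $g(b)=0$ together with the first-step relations $g(x)=1+\sum_y p_{xy}g(y)$ for $x\neq b$. Multiplying by $w_x$ and using $w_x p_{xy}=w_{xy}$ turns these into the discrete Poisson equation
\[
\sum_y w_{xy}\big(g(x)-g(y)\big)=w_x,\qquad x\neq b,
\]
so $g$ is exactly the vector of potentials produced by injecting current $w_x$ at every node $x$ and withdrawing the total current $w$ at $b$ (with $b$ grounded). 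Doing the same for $a$, let $h(x)=m_{xa}$ be the potentials for injecting $w_x$ at each node and withdrawing $w$ at $a$. By superposition the node injections cancel in $g-h$, leaving a single source/sink pair carrying current $w$ from $a$ to $b$, so Ohm's law across the terminals gives a potential drop of $w\,R_{ab}$. On the other hand, using $m_{aa}=m_{bb}=0$,
\[
(g-h)(a)-(g-h)(b)=\big(m_{ab}-0\big)-\big(0-m_{ba}\big)=m_{ab}+m_{ba}.
\]
Equating the two expressions yields $m_{ab}+m_{ba}=w\,R_{ab}$.

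Given this identity, I would finish by invoking the two dual extremal principles. Dirichlet's principle states that the effective conductance $C_{ab}=1/R_{ab}$ equals the minimum of the energy $\sum_{\{i,j\}}w_{ij}(g(i)-g(j))^2$ over potentials with $g(a)=0$, $g(b)=1$, attained at the harmonic function; since $\tfrac12\sum_{i,j}$ over ordered pairs equals $\sum_{\{i,j\}}$ over unordered edges, this energy is exactly $w\,\mathcal{E}(g,g)$, so $\inf\mathcal{E}(g,g)=1/(w R_{ab})$ and hence $\sup 1/\mathcal{E}(g,g)=w R_{ab}=m_{ab}+m_{ba}$, which is (\ref{Dirichlet}). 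The one point needing care is that the statement restricts to $0\le g\le 1$: this is harmless, because the minimizing harmonic $g$ already satisfies $0\le g\le 1$ by the maximum principle, so the constrained and unconstrained infima coincide. Dually, Thompson's principle states that $R_{ab}$ equals the minimum of the dissipated energy $\sum_{\{i,j\}}f_{ij}^2/w_{ij}$ over unit flows $f$ from $a$ to $b$; rewriting $\sum_{\{i,j\}}$ as $\tfrac12\sum_{i,j}$ (using $f_{ij}=-f_{ji}$) and multiplying by $w$ gives $w R_{ab}=m_{ab}+m_{ba}$, which is (\ref{Thompson}). This would complete the proof.
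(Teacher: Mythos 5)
Your proposal is correct, and it follows essentially the same route as the source the paper defers to: the paper's ``proof'' is only the citation ``See Section 7.2 of \cite{aldous},'' and the argument there is exactly the electrical-network one you give --- the commute time identity $m_{ab}+m_{ba}=wR_{ab}$ obtained by superposing the two Poisson-equation solutions $g(x)=m_{xb}$ and $h(x)=m_{xa}$, followed by the Dirichlet and Thompson variational principles, with the factors of $w$ and $\tfrac12$ handled correctly for the ordered double-sum convention $w=\sum_{i,j}w_{ij}$. Your observations that the constraint $0\le g\le 1$ is vacuous by the maximum principle and that $f_{ij}=-f_{ji}$ reconciles the edge sum with the double sum are exactly the points that need checking, so nothing is missing.
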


\begin{proof}
See Section 7.2 of \cite{aldous}.
\end{proof}

It is worth mentioning that the two forms of the mean commute time
characterization in Lemma \ref{extrmal_commute} are dual of each other. The
first form is a corollary of Dirichlet principle, while the second is
immediate result of Thompson's principle. Using the electric circuit
analogy, we can think of function $g(i)$ as potential associated to node $i$%
, and flow $f_{ij}$ as the current on edge $\{i,j\}$ with resistance $\frac{1}{%
w_{ij}}$. The expressions in (\ref{Thompson}) are equivalent descriptions of
minimum energy dissipation in such electric network. Hence, we can interpret
the mean commute time between two particular nodes as the effective
resistance between such nodes in a resistive network. This allows us to use
Monotonicity Law to obtain simpler bounds for mean commute time.

\begin{lemma}
\label{monotonicity_lemma} \emph{\textbf{(Monotonicity Law)} Let $\tilde
w_{ij} \leq w_{ij}$ be the edge-weights for two undirected graphs. Then,
\begin{equation}  \label{mon_law}
m_{av} + m_{va} \leq \Big(\frac{w}{\tilde w}\Big) (\tilde m_{av} + \tilde m_{va}), \quad \text{for
all } a,v, \nonumber
\end{equation}
where $w = \sum_{i,j}w_{ij}$ and $\tilde w = \sum_{i,j}\tilde w_{ij}$ are
the total edge weight.}
\end{lemma}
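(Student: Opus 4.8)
The plan is to reduce the statement to the monotonicity of effective resistance, using the second (Thompson) characterization of the mean commute time in Lemma \ref{extrmal_commute}. For a unit flow $f$ from $a$ to $v$, write its energy in the $w$-network as $E_w(f)=\frac{1}{2}\sum_{i,j} f_{ij}^2/w_{ij}$, and set $R_{av}=\inf_f E_w(f)$, the effective resistance when edge $\{i,j\}$ has resistance $1/w_{ij}$. Then Lemma \ref{extrmal_commute} reads precisely $m_{av}+m_{va}=w\,R_{av}$. Defining $E_{\tilde w}(f)=\frac{1}{2}\sum_{i,j} f_{ij}^2/\tilde w_{ij}$ and $\tilde R_{av}=\inf_f E_{\tilde w}(f)$ over unit flows from $a$ to $v$ in the $\tilde w$-network, the same lemma gives $\tilde m_{av}+\tilde m_{va}=\tilde w\,\tilde R_{av}$. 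So the whole claim will follow once I establish $R_{av}\le \tilde R_{av}$.

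Next I would prove this monotonicity of effective resistance by comparing the two variational problems directly, avoiding any appeal to attainment of the infima. Since $\tilde w_{ij}\le w_{ij}$, every unit flow $f$ that is feasible for the $\tilde R_{av}$-problem (i.e.\ supported on edges with $\tilde w_{ij}>0$) is also feasible for the $R_{av}$-problem, and the term-by-term inequality $f_{ij}^2/w_{ij}\le f_{ij}^2/\tilde w_{ij}$ gives $E_w(f)\le E_{\tilde w}(f)$. Hence $R_{av}\le E_w(f)\le E_{\tilde w}(f)$ for every such $f$, and taking the infimum over $f$ yields $R_{av}\le \tilde R_{av}$.

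Finally, combining the two displays I would conclude
\[
\frac{m_{av}+m_{va}}{w}=R_{av}\le \tilde R_{av}=\frac{\tilde m_{av}+\tilde m_{va}}{\tilde w},
\]
and multiply through by $w$ to recover the asserted bound $m_{av}+m_{va}\le (w/\tilde w)(\tilde m_{av}+\tilde m_{va})$.

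The step that I expect to require the most care is the nesting of the feasible flow sets together with the term-by-term energy comparison when some weights $\tilde w_{ij}$ vanish, that is, when edges are genuinely deleted rather than merely down-weighted. One must check that the Thompson infimum defining $\tilde R_{av}$ ranges only over flows supported on the surviving edges of the $\tilde w$-graph, so that each such flow is automatically admissible in the $w$-graph and the ratios $f_{ij}^2/\tilde w_{ij}$ appearing in $E_{\tilde w}(f)$ are all finite. Once the comparison is restricted to these flows the energy inequality is immediate, and the remainder of the argument is a one-line consequence of Lemma \ref{extrmal_commute}.
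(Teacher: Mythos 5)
Your proposal is correct and follows essentially the same route as the paper's proof: both rest on Thompson's flow characterization in Lemma \ref{extrmal_commute}, plugging a (near-)optimal unit flow for the $\tilde w$-graph into the $w$-energy and using $f_{ij}^2/w_{ij}\le f_{ij}^2/\tilde w_{ij}$ term by term. The only minor difference is that you argue with infima directly (and handle deleted edges explicitly) rather than with optimal flows $f^*$ and $\tilde f^*$ as the paper does, which sidesteps the question of attainment but does not change the substance of the argument.
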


\begin{proof}
Let $f^*$ and $\tilde f^*$ be the optimal solutions of (\ref{Thompson}) for the original and
modified graphs, respectively. We can write
\begin{eqnarray}
   m_{av} + m_{va} &=& \frac{w}{2} \sum_{i,j} \frac{(f^*_{ij})^2}{w_{ij}} \leq \frac{w}{2} \sum_{i,j} \frac{(\tilde
   f^*_{ij})^2}{w_{ij}} \nonumber \\
   &\leq& \frac{w}{2} \sum_{i,j} \frac{(\tilde f^*_{ij})^2}{\tilde w_{ij}} = \Big(\frac{w}{\tilde w}\Big)\frac{\tilde w}{2} \sum_{i,j} \frac{(\tilde f^*_{ij})^2}{\tilde
   w_{ij}} \nonumber \\
   &=& \Big(\frac{w}{\tilde w}\Big) (\tilde m_{av} + \tilde m_{va}), \nonumber
\end{eqnarray}
where the first inequality follows from optimality of $f^*$, and feasibility of $\tilde f^*$.
\end{proof}

By the electric network analogy, Lemma \ref{monotonicity_lemma} states that
increasing resistances in a circuit increases the effective resistance
between any two nodes in the network. Monotonicity law can be extremely
useful in providing simple bounds for mean commute times.

\newpage

\bibliographystyle{amsplain}
\bibliography{influence}

\end{document}